\newcommand{\SUMPAGE}{14}
\newcommand{\Nats}{\mathbb{N}}
\newcommand{\iH}{H}
\newcommand{\vNull}{\ensuremath{v_{\Null}}}
\newcommand{\I}{I}
\newcommand{\iVar}{\ensuremath{\nu}}
\newcommand{\iBot}{\ensuremath{z}}
\newcommand{\HC}{\ensuremath{\textnormal{\textbf{HC}}}}
\newcommand{\IHC}{\ensuremath{\textnormal{\textbf{IHC}}}}
\newcommand{\iK}{K}
\newcommand{\iR}{R}
\newcommand{\Proj}[2]{#1 \upharpoonright #2}
\newcommand{\Elem}[2]{#1(#2)}
\newcommand{\Replace}[3]{\ensuremath{#1\left[#2 \mapsto #3\right]}}
\newcommand{\Rule}[2]{#1 \,\to\, #2}
\newcommand{\PS}[1]{\mathcal{P}(#1)}
\newcommand{\PSF}[1]{\mathcal{P}_{\textit{finite}}(#1)}
\newcommand{\FComp}{\,\fatsemi\,}
\newcommand{\FOrder}{\,\dot{\subseteq}\,}
\newcommand{\LOrder}[1]{\,\sqsubseteq_{#1}\,}
\newcommand{\Id}[1]{\textit{id}_{#1}}
\newcommand{\Mod}[2]{\textit{Mod}\llbracket #1 \rrbracket(#2)}
\newcommand{\Fields}{\ensuremath{\textnormal{\textbf{Fields}}}}
\newcommand{\Null}{\ensuremath{\textnormal{\texttt{null}}}}
\newcommand{\Var}{\ensuremath{\textnormal{\textbf{Var}}}}
\newcommand{\Types}{\ensuremath{\textnormal{\textbf{Types}}}}
\newcommand{\Rank}{\ensuremath{\textit{rank}}}
\newcommand{\V}{\ensuremath{V}}
\newcommand{\E}{\ensuremath{E}}
\newcommand{\Att}{\ensuremath{\textit{att}}}
\newcommand{\Lab}{\ensuremath{\textit{lab}}}
\newcommand{\Iso}{\cong}
\newcommand{\Ind}{\ensuremath{\textit{ind}}}
\newcommand{\Ext}{\ensuremath{\textit{ext}}}
\newcommand{\iG}{\ensuremath{G}}
\newcommand{\iDerive}[1]{\Rightarrow_{#1}}
\newcommand{\rDerive}[1]{\,{_{#1}\!\!\Leftarrow}}
\newcommand{\niDerive}[1]{\nRightarrow_{#1}}
\newcommand{\nrDerive}[1]{\,{_{#1}\!\!\nLeftarrow}}
\newcommand{\iL}[1]{L_{#1}}
\newcommand{\rL}[1]{L^{-1}_{#1}}
\newcommand{\Lang}[2]{L_{#1}(#2)}
\newcommand{\iLang}[2]{L^{-1}_{#1}(#2)}
\newcommand{\iC}{\ensuremath{C}}
\newcommand{\iCon}[1]{\ensuremath{\rightrightarrows_{#1}}}
\newcommand{\iAbs}[1]{\ensuremath{\,_{#1}\!\!\leftleftarrows}}
\newcommand{\niCon}[1]{\ensuremath{\not\rightrightarrows_{#1}}}
\newcommand{\niAbs}[1]{\ensuremath{\,_{#1}\!\!\not\leftleftarrows}}
\newcommand{\iGL}[1]{\textit{GL}_{#1}}
\newcommand{\rGL}[1]{\textit{GL}^{-1}_{#1}}
\newcommand{\Mater}[2]{\ensuremath{\textnormal{\textit{materialize}}\llbracket #2 \rrbracket}}
\newcommand{\Canon}[2]{\ensuremath{\textnormal{\textit{canonicalize}}}\llbracket #2 \rrbracket}
\newcommand{\B}{B}
\newcommand{\Pt}{P}
\newcommand{\con}[1]{\gamma}
\newcommand{\abs}[1]{\alpha}
\newcommand{\DomCon}{\textnormal{\textbf{Con}}}
\newcommand{\DomAbs}{\textnormal{\textbf{Abs}}}
\newcommand{\T}{T}
\newcommand{\Z}{s}
\newcommand{\Lfp}{\textit{lfp}}
\newcommand{\Progs}{\ensuremath{\textnormal{\texttt{Progs}}}}
\newcommand{\PExps}{\ensuremath{\textnormal{\texttt{PExp}}}}
\newcommand{\BExps}{\ensuremath{\textnormal{\texttt{BExp}}}}
\newcommand{\pP}{P}
\newcommand{\pE}{\textit{Ptr}}
\newcommand{\pB}{B}
\newcommand{\pF}{f}
\newcommand{\pBot}{\bot}
\newcommand{\True}{\textnormal{\texttt{true}}}
\newcommand{\False}{\textnormal{\texttt{false}}}
\newcommand{\Left}{\texttt{{\color{\lColor}left}}}
\newcommand{\Parent}{\texttt{\color{\pColor}parent}}
\newcommand{\Right}{\texttt{\color{\rColor}right}}
\newcommand{\FL}[1]{\F{#1}{\Left}}
\newcommand{\FP}[1]{\F{#1}{\Parent}}
\newcommand{\FR}[1]{\F{#1}{\Right}}
\newcommand{\Skip}{\textnormal{\texttt{noop}}}
\newcommand{\Ass}[2]{#1\,\textnormal{\texttt{=}}\,#2}
\newcommand{\New}[1]{\textnormal{\texttt{new}}(#1)}
\newcommand{\Seq}[2]{#1\textnormal{\texttt{;}}#2}
\newcommand{\Ite}[3]{\textnormal{\texttt{if}}\,(#1)\, \{ #2 \}\, \textnormal{\texttt{else}}\, \{ #3 \}}
\newcommand{\WHILE}[1]{\textnormal{\texttt{while}}\,(#1)\,}
\newcommand{\WhileDo}[2]{\textnormal{\texttt{while}}\,(#1)\, \{ #2 \}}
\newcommand{\WhileKDo}[3]{\textnormal{\texttt{while}}^{#3}\,(#1)\, \{ #2 \}}
\newcommand{\NEquals}[2]{#1 \,!\!=\, #2}
\newcommand{\Equals}[2]{#1 = #2}
\newcommand{\Con}[2]{#1 \wedge #2}
\newcommand{\Not}[1]{\neg #1}
\newcommand{\F}[2]{#1.#2}
\newcommand{\ESem}[1]{\mathcal{V}\llbracket #1 \rrbracket}
\newcommand{\BSem}[1]{\mathcal{B}\llbracket #1 \rrbracket}
\newcommand{\PSem}[1]{\mathcal{C}\llbracket #1 \rrbracket}
\newcommand{\PASem}[1]{\mathcal{A}\llbracket #1 \rrbracket}
\newcommand{\SemIte}[3]{\textnormal{\text{if}}\,(#1)\,\textnormal{\text{then}}\,#2\, \textnormal{\text{else}}\,#3}
\newcommand{\SetVar}[2]{\textnormal{\text{setVar}}[#1,#2]}
\newcommand{\AddVar}[1]{\textnormal{\text{addNode}}[#1]}
\newcommand{\SetField}[3]{\textnormal{\text{setField}}[#1,#2,#3]}
\tikzstyle{node} = [circle, draw, fill=white, text=black, minimum size = 5mm]
\tikzstyle{pointer} = [->]
\tikzstyle{nonterminal} = [rectangle, draw, fill=black!25, text=black, minimum size = 5mm]
\tikzstyle{variables} = [diamond, draw, fill=white, text=black, minimum size = 5mm]
\tikzstyle{param} = [rectangle, draw, dashed, minimum size = 5mm]
\newcommand{\tikzdefault}[1]{\begin{tikzpicture}[shorten >=1pt, semithick, every text node part/.style={align=center}] #1 \end{tikzpicture}}
\newcommand{\nil}[2]{\node[node] (#1) [#2, fill=black] {};}
\newcommand{\nilext}[2]{\node[node] (#1) [#2, fill=black,text=white] {1};}
\newcommand{\location}[3]{\node[node] (#1) [#2] {#3};}
\newcommand{\lpointer}[6]{\path (#1) edge[pointer, #2] node[draw=none,#6] (#3) {#5} (#4);}
\newcommand{\variable}[3]{\node[variables] (#1) [#2 of = #3] {$#1$}; \path[-] (#1) edge (#3);}
\newcommand{\nonterminal}[4]{\node[nonterminal] (#1) [#2 of = #3] {$#4$};}
\newcommand{\tentacle}[4]{\path (#1) edge[#4] node[above] {$#2$} (#3);}
\newcommand{\tentacleB}[4]{\path (#1) edge[#4] node[below] {$#2$} (#3);}
\newcommand{\tentacleL}[4]{\path (#1) edge[#4] node[left] {$#2$} (#3);}
\newcommand{\tentacleR}[4]{\path (#1) edge[#4] node[right] {$#2$} (#3);}
\tikzstyle{violation}=[draw=black!15,line width=9mm, 
\newcommand{\lColor}{black}
\newcommand{\rColor}{black}
\newcommand{\pColor}{black}
\newcommand{\lEdge}{dashed}
\newcommand{\pEdge}{}
\newcommand{\rEdge}{dotted}
\newcommand{\cEdge}{dash dot}
\newcommand{\cColor}{gray}
\begin{document}

\mainmatter

\title{Graph-Based Shape Analysis Beyond Context-Freeness}
\titlerunning{Heap Abstraction Beyond Context-Freeness}

\author{Hannah Arndt \and Christina Jansen \and Christoph Matheja \and Thomas Noll}
\tocauthor{Hannah Arndt\thanks{DFG}, Christina Jansen, Christoph Matheja\thanks{DFG}, Thomas Noll}
\institute{RWTH Aachen University, Germany}

\maketitle

\begin{abstract}
We develop a shape analysis for reasoning about \emph{relational properties} of data structures. 
Both the concrete and the abstract domain are represented by hypergraphs.
The analysis is parameterized by user-supplied \emph{indexed graph grammars} to guide concretization and abstraction.
This novel extension of context-free graph grammars is powerful enough to model complex data structures such as balanced binary trees with parent pointers, while preserving most desirable properties of context-free graph grammars. 

One strength of our analysis is that no artifacts apart from grammars are required from the user; it thus offers a high degree of automation.
We implemented our analysis and successfully applied it to various programs manipulating AVL trees, (doubly-linked) lists, and combinations of both.
\end{abstract}

\section{Introduction}
The aim of shape analysis is to support software verification by discovering precise abstractions of the data structures in a program's heap.
For shape analyses to be effective, they need to track detailed information about the heap configurations arising during computations.
%
Although recent shape analyses have become quite potent~\cite{Abdulla2016verification,calcagno2011compositional,Chang2013modular,ferrara2012tval,reps2007shape}, discovering abstractions that go beyond structural shape properties remains far from fully solved.
For example, this is the case when considering \emph{balancedness properties} of data structures, such as the AVL property:
A full binary tree is an AVL tree if and only if for each of its inner nodes, the difference between the heights of its two subtrees is $-1$, $0$, or $1$.
In this setting, reasoning about constraints over lengths of paths or sizes of branches in a tree is required.
However, as already noted in \cite{Chang2013modular}, inference of shape-numeric invariants \enquote{is especially challenging and is not particularly well explored.}

We develop a shape analysis that is capable of inferring relational properties, such as balancedness, from a program and an intuitive data structure specification given by a \emph{graph grammar}.
Context-free graph grammars~\cite{habel1992hyperedge} have previously been successfully applied in shape analyses~\cite{heinen2015juggrnaut}.
They are, however, not expressive enough to capture typical relational properties of data structures. 
Hence, we lift the concept of \emph{indexed grammars} --- a classical extension of context-free string grammars due to Aho~\cite{aho1968indexed} --- to graph grammars. 
More concretely, we attach an \emph{index}, i.e. a finite sequence of symbols, to each nonterminal.
This information can then be accessed by the graph grammar to gain a fine-grained control over the applicable rules. 
For example, by using indices to represent the height of trees, a context-free graph grammar modeling binary trees can easily be lifted to a grammar representing \emph{balanced} binary trees.

One strength of indexed graph grammars is that they offer an intuitive formalism for specifying data structures without requiring deep knowledge about relational properties.
Furthermore, all key aspects of shape analysis (using the terminology of~\cite{reps2007shape}) have natural correspondences in the theoretically well-under\-stood domain of graph transformations:
\emph{Materialization}, an operation to partially concretize before performing a strong update of the heap, corresponds to the common notion of grammar derivations.
\emph{Concretization} then means exhaustively applying derivations.
Conversely, \emph{abstraction} (or canonicalization) coincides with applying inverse derivations as long as possible.
In particular, effective versions of the above operations can be derived automatically from a grammar through existing normal forms~\cite{jansen2011local}.
Finally, checking for \emph{subsumption} between two abstract states is an instance of the language inclusion problem for graph grammars.
While this problem is undecidable in general~\cite{bar1961formal}, we present a fragment of indexed graph grammars with a decidable language inclusion problem that is well-suited for shape analysis.

We implemented our shape analysis and successfully verified Java programs manipulating AVL trees, (doubly-linked) lists and combinations of both. 
Supplementary material to formalization and implementation is found in the Appendix.

\section{Informal Example}\label{sec:overview}
Our analysis is a standard forward abstract interpretation~\cite{cousot77} that approximates for each program location the set of reachable memory states.
It thus applies an abstract program semantics to elements of an abstract domain capturing the resulting sets until a fixed point is reached.
The analysis is parameterized by a user-supplied \emph{indexed hyperedge replacement grammar}: 
For any given grammar, we automatically derive an abstract program semantics from the concrete semantics of a programming language. 
Moreover, we obtain suitable abstraction and concretization functions.
In this section we take a brief tour through the essentials of our approach by means of an example.

\emph{Example program.}
We consider a procedure \texttt{searchAndSwap} (see Figure~\ref{fig:ov:code}) that takes an AVL tree \texttt{n} with back pointers and an integer value \texttt{key}.
It consists of two phases:
First, it performs a binary search in order to find a node in the tree with the given key (l. 9).
If such a node is found, it moves back to the root of the tree (l. 13). 
However, before moving up one level in the tree, the procedure swaps the two subtrees of the current node (l. 12).

\begin{figure}[t]
\centering
\scalebox{0.8}{
\begin{minipage}{0.27\textwidth}
\centering
\begin{align*}
       1\quad  & \texttt{class AVLTree} ~\{ \\
       2\quad  & \quad \texttt{AVLTree {\color{\lColor}left}}; \\
       3\quad  & \quad \texttt{AVLTree {\color{\rColor}right}}; \\
       4\quad  & \quad \texttt{AVLTree {\color{\pColor}parent}}; \\
       5\quad  & \quad \texttt{int key}; \\
       6\quad  & \quad \texttt{// \ldots} \\
       7\quad  & \}
\end{align*}
\end{minipage}\hfill
\begin{minipage}{0.73\textwidth}
\centering
\begin{align*}
       \qquad 8\quad & \texttt{void searchAndSwap(AVLTree n, int key)}\, \{  \\
       9\quad  & \quad \Ass{\texttt{n}}{\texttt{binarySearch(n, key)}}; \\
       10\quad  & \quad \WHILE{\NEquals{\texttt{n}}{\Null} \,\texttt{\&\&}\, \NEquals{\FP{\texttt{n}}}{\Null}} \{ \\
       11\quad  & \qquad \texttt{// swap subtrees of n} \\
       12\quad  & \qquad \texttt{AVLTree}\,\Ass{\texttt{t}}{\FL{\texttt{n}}};~\Ass{\FL{\texttt{n}}}{\FR{\texttt{n}}};
                       \Ass{\FR{\texttt{n}}}{\texttt{t}}; \\
      13\quad   & \qquad \Ass{\texttt{t}}{\Null}; \Ass{\texttt{n}}{\FP{\texttt{n}}}; \\
      14\quad   & \quad \} \\
      15\quad   & \}
\end{align*}
\end{minipage}
}
\caption{Essential fields of class \texttt{AVLTree} and example code.}
\label{fig:ov:code}
\end{figure}
%
%
%
%
%
\emph{Abstract domain.}
We assume a storeless model that is agnostic of concrete memory addresses.
Memory states are then naturally modeled as graphs --- more precisely \emph{indexed heap configurations} (IHC) (Section~\ref{sec:excursus}).
That is, an edge may be connected to an arbitrary number of nodes and is additionally labeled with an index that indicates, for instance, the height of a tree.
Consider the IHC depicted in Figure~\ref{fig:ov:graph}: 
A \emph{node} (drawn as a circle) either represents an object or a literal, such as $\Null$, $\True$, $\False$, etc.
The black circle denotes the special location $\Null$.\footnote{We often draw multiple black circles, but they all correspond to the same location.}
Pointers between objects are drawn as directed edges between two nodes that are drawn to indicate the corresponding field of its source object ($\Left$ (dashed), $\Right$ (dotted), and $\Parent$ (solid) for AVL trees). 
For example, the $\Parent$ pointer of the topmost node in Figure~\ref{fig:ov:graph} points to $\Null$.
\begin{wrapfigure}{r}{0.2\textwidth}
\vspace*{-3em}
\scalebox{0.8}{
\input{pics/ov_graph}
}
\caption{An IHC}
\label{fig:ov:graph}
\vspace*{-2em}
\end{wrapfigure}
%
Furthermore, IHCs contain \emph{program variables} and \emph{nonterminal edges}.
Program variables are drawn as diamonds that are labeled with the variable name and are attached to the unique node representing the value of the variable.
Hence, variable $n$ points to the rightmost node in Figure~\ref{fig:ov:graph}.
Nonterminal edges model a set of abstracted heap shapes, such as linked lists or balanced trees.
They are drawn as gray boxes and attached to one or more nodes.
Figure~\ref{fig:ov:graph} contains two of these edges. 
Their label, $B$, indicates that both model a set of balanced binary trees.
Further, their \emph{indices}, $X$ and $\Z X$, denote that they model balanced binary trees of height $X$ and $X + 1$, respectively, where $X$ stands for an arbitrary non-negative value.
Hence, the IHC in Figure~\ref{fig:ov:graph} models the set of all balanced binary trees with back pointers in which the height of the right subtree of the root is the height of its left subtree plus one.
Moreover, variable $n$ points to the right child of the root.

\emph{Abstraction and Concretization.}
The set of heaps described by an IHC is determined by an \emph{indexed hyperedge replacement grammar} whose rules map nonterminal edges to an IHC.
An example of a rule is provided in Figure~\ref{fig:ov:con} (inside the gray box; above step (1)).
Its left-hand side is $(B,\Z\iVar)$, where $\iVar$ is a variable.
The rule allows to replace any edge that is labeled with $B$ and whose index starts with an $\Z$ by the IHC below.
In that case, variable $\iVar$ is substituted by the remainder of the index of the replaced hyperedge.
The IHC on the rule's right-hand side contains two \emph{external nodes} (labeled $1$ and $2$) that indicate how two IHCs are glued together when replacing a hyperedge (Section~\ref{sec:excursus}).


\begin{figure}[t]
\begin{center}
\input{pics/ov_con}
\end{center}
\caption{Materialization and a possible execution of the binary search.}
\label{fig:ov:con}
\end{figure}

\emph{Example execution.}
Let us assume we are given a suitable grammar in which nonterminal $B$ represents balanced binary trees and index $\Z X$ stands for a height of $X + 1$.
%
We consider one 
execution sequence in detail.
The individual execution steps are illustrated in Figures~\ref{fig:ov:con},~\ref{fig:ov:switch}, and~\ref{fig:ov:abs}, respectively. 
Notice that the full analysis explores all abstract executions. 

\emph{Step (1).}
Starting with the leftmost IHC in Figure~\ref{fig:ov:con}, we first execute a binary search 
(Figure~\ref{fig:ov:code}, l. 9).
Assuming that the searched key is not at the root, we move to the children of $\texttt{n}$. 
Since these are currently hidden in the hyperedge labeled with $(B,sX)$, we apply \emph{materialization}~\cite{sagiv99parametric} (partial concretization). 
For our analysis, materialization corresponds to forward derivations using the supplied graph grammar, i.e. we replace an edge by an IHC according to a rule of the grammar.
Here, we used the rule above step (1) in Figure~\ref{fig:ov:con}.
To apply this rule, we first remove the original hyperedge labeled $(B,sX)$.
After that we paste the graph belonging to the rule into the original graph.
Finally, we identify the nodes originally attached to the removed hyperedge with the external nodes of the rule (as indicated by gray dashed and dotted lines in Figure~\ref{fig:ov:con}). 

\emph{Step (2).}
After materialization, executing one step of the concrete program semantics amounts to a simple graph transformation (moving variable \texttt{n} to a child).
To keep the example small, 
assume the binary search has already explored the left subtree without finding the key.
It thus returned to the root and the next step is to move variable $\texttt{n}$ to its right child.
That is, we execute 
$\Ass{\texttt{n}}{\FR{\texttt{n}}}$.
This leads to the rightmost graph depicted in Figure~\ref{fig:ov:con}.
In our example execution, we assume $\texttt{n}$ now carries the searched key, i.e. $\F{\texttt{n}}{\texttt{key}}$ equals $\texttt{key}$.
Hence, the binary search returns the current position of $\texttt{n}$ and we move to the $\texttt{while}$-loop of our example program (Figure~\ref{fig:ov:code}, l. 10).
Since neither variable $\texttt{n}$ is attached to $\Null$ nor its $\Parent$ pointer points to $\Null$, we enter the loop.

\begin{figure}[t]
\begin{center}
\input{pics/ov_switch}
\end{center}
\caption{Index materialization and swapping subtrees.}
\label{fig:ov:switch}
\end{figure}

\emph{Step (3).}
Before we can climb up the tree to the root again, we have to swap the subtrees of $\texttt{n}$ (Figure~\ref{fig:ov:code}, l. 12).
Again, these are hidden in a hyperedge labeled with $(B,X)$, i.e. we have to materialize again. 
As part of the example execution, we apply the rule in Figure~\ref{fig:ov:switch} (above step (3)).
However, this rule requires the index of a hyperedge to be of the form $\Z\Z \iVar$.
Intuitively, this means the rule models balanced trees of height at least two.
Since $X$ is a placeholder for trees of arbitrary height, we apply \emph{index materialization} to the IHC first.
That is, we replace $X$ by $\Z\Z X$ in all hyperedges\footnote{Again, note that we consider a single execution path in this example. The full analysis also explores the cases in which $X$ is substituted by $\iBot$ and $\Z \iBot$.}
and move to the leftmost hypergraph in Figure~\ref{fig:ov:switch}.
After that, we apply materialization as illustrated in the third step. 


\emph{Step (4).}
We apply the concrete semantics to execute a sequence of assignments in order to swap the left and right subtree of \texttt{n}
(Figure~\ref{fig:ov:code}, l. 12).
This results in the rightmost IHC of Figure~\ref{fig:ov:switch}, in which variable \texttt{t} has not been set to $\Null$ yet.
After executing the remaining two assignments, i.e.  $\Ass{\texttt{t}}{\Null}$ and $\Ass{\texttt{n}}{\FP{\texttt{n}}}$, we end up in the leftmost IHC in Figure~\ref{fig:ov:abs}.

Notice that both the abstract semantics as well as materialization are derived automatically from the grammar and the concrete program semantics (Sections~\ref{sec:excursus} and \ref{sec:analysis}).
In particular, materialization corresponds to forward derivations using the grammar. 
Analogously, the abstraction function 
corresponds to applying backward derivations.
Each occurrence of an IHC used as the right-hand side of a grammar rule is replaced by a hyperedge labeled with the rule's left-hand side.

\emph{Step (5).} 
After executing 
$\Ass{\texttt{n}}{\FP{\texttt{n}}}$ (Figure~\ref{fig:ov:code}, l. 13), 
abstracted is performed 
before moving on to the next loop iteration.
We abstract using a rule 
symmetric to the one applied in step (3) for materialization. 
This corresponds to first detecting the IHC in the rule as a subgraph of the given IHC.
This subgraph is deleted except for those nodes identified with the external nodes (labeled by numbers) of the rule graph (see gray dash-dotted lines in Figure~\ref{fig:ov:abs}). 
Then a hyperedge attached to the latter nodes is added to the remaining IHC.

\begin{figure}[t]
\begin{center}
\input{pics/ov_abs}
\end{center}
\caption{Graph-based abstraction and index abstraction.}
\label{fig:ov:abs}
\end{figure}

\emph{Step (6).} 
The IHC obtained after step (5) can be further abstracted. 
This time, we employ the rule that has been applied for materialization first (Figure~\ref{fig:ov:con}, above step (1)).
The resulting graph is found in Figure~\ref{fig:ov:abs} next to step (6).
Note that the indices of both hyperedges to be abstracted are $\Z\Z X$ whereas the rule used for abstraction contains hyperedges with indices $\iVar$.
The variable $\iVar$ is used as a placeholder to restore the original indices after the replacement.
The resulting hypergraph (Figure~\ref{fig:ov:abs} following step (6)) contains a single hyperedge labeled $(B,sssX)$.
Hence, the result of our example execution is a balanced binary tree (of height at least three) again.

\emph{Step (7).}
As a final operation, we apply the converse of index materialization in step (3): index abstraction.
For this purpose, we replace $\Z\Z\Z X$ by $X$, i.e.\ we generalize from trees of height at least three to trees of arbitrary height.
Proceeding with the analysis, we evaluate the loop guard (Figure~\ref{fig:ov:code}, l. 10) to $\False$, because $\FP{\texttt{n}}$ equals $\Null$.
Hence, the analysis terminates this branch of its execution with a final hypergraph that covers the initial one.
The problem of checking whether a hypergraph covers another one is addressed in Section~\ref{sec:backwardconfluent}.

\section{Program States and Indexed Grammars}\label{sec:excursus}
As outlined in Section~\ref{sec:overview}, it is intuitive to model heaps as graphs.
In this section, we formalize heap configurations as a model for program states and their semantics in terms of a graph grammar.
These grammars guide concretization and abstraction in our analysis, which is presented subsequently in Section~\ref{sec:analysis}.

\subsection{Program States}

To set the stage for our analysis, we consider program states to consist of a heap and a stack.
We assume the heap to contain records with a finite number of \emph{reference fields} that are collected in $\Fields$. 
Apart from the heap, a program state is equipped with a \emph{stack} mapping program variables in $\Var$ to records.

Furthermore, our abstract domain equips graphs with \emph{nonterminal hyperedges} that act as abstract placeholders for sets of graphs, e.g. all (balanced) binary trees.
These hyperedges are labeled with a \emph{nonterminal} taken from a finite set $N$ and an \emph{index} taken from a finite set $\I$, respectively.
Throughout this paper, we fix a set $\Types = \Fields \cup \Var \cup N$.
Every element of $\Types$ is ranked by a function $\Rank : \Types \to \Nats$, where fields always have rank two, i.e.\ $\Rank(\Fields)= \{2\}$ and variables always have rank one, i.e.\ $\Rank(\Var) = \{1\}$, respectively.
Program states are then formally modeled as follows: 

\begin{definition}
\label{def:indexed-hg}
An \emph{indexed heap configuration} (IHC for short) is defined as a tuple
   $\iH = (\V,\E,\Lab,\Att,\Ind,\Ext)$, where
   \begin{itemize}
     \item $\V$ and $\E$ are finite sets of \emph{nodes} and \emph{hyperedges}, respectively,
     \item $\Lab : E \to \Types$ is a \emph{hyperedge labeling function},
     \item $\Att : \E \to \V^{*}$ maps each edge to a sequence of \emph{attached nodes} that respects 
             the rank of hyperedge labels, i.e. for all $e \in \E$, we have $\Rank(\Lab(e)) = |\Att(e)|$.
   \item $\Ind : E \to \I^{+}$ assigns a non-empty \emph{index sequence} to each edge in $\E$, and
     \item $\Ext \in \V^{+}$ is a repetition-free sequence of \emph{external nodes}.\footnote{External nodes are needed to define the semantics of nonterminal edges.} 
    \end{itemize}
\end{definition}

Throughout this paper, we do not distinguish between the terms graph and hypergraph nor between edge and hyperedge.
Furthermore, we refer to the components of a graph $\iH$ by $\V_{\iH}$, $\E_{\iH}$, etc.
If an edge $e$ is attached to exactly two nodes, say $\Att(e) = uv$, we interpret $e$ as a directed edge from node $u$ to node $v$. 
Notice that all graphs in Section~\ref{sec:overview} are examples of IHCs.

To simplify the technical development, we impose a few sanity conditions on IHCs:
We require that (1) every variable $x \in \Var$ occurs at most once in $\iH$ and (2) for every field $f \in \Fields$ every node has at most one outgoing edge $e$ labeled with $f$ (recall that $\Rank(f) = 2$).
The special location $\Null$ is treated as a global variable.
Hence, we assume a unique node $\vNull$ representing $\Null$ which is the first external node and the first node attached to every nonterminal edge.\footnote{I.e., $\vNull = \Ext(1)$ and for each $e \in E$ with $\Lab(e) \in N$, we have $\Att(e)(1) = \vNull$.}

\subsection{Indexed Grammars}

The semantics of edges labeled with a nonterminal, is specified by an indexed graph grammar --- an extension of context-free graph grammars. 
As it is common in graph rewriting, we do not distinguish between isomorphic graphs.
Thus, \emph{all sets of graphs in this paper are to be understood up to isomorphism.}\footnote{A formal definition of graph isomorphism is found in Appendix~\ref{app:excursus}.}

\begin{definition}
\label{def:ihrg}
Let $\iVar$ be a dedicated \emph{index variable} and $I' = I \cup \{\iVar\}$ be the set of index symbols.
An \emph{indexed hyperedge replacement grammar} (IG) is a finite set of rules $\iG$ of the form
$
  \Rule{X,\sigma}{\iH}
$ 
mapping a nonterminal $X \in N$ and an index $\sigma \in \I^{*} (I \cup \{\iVar\})$ to an IHC $\iH$ such that $\Rank(X) = |\Ext_{\iH}|$.
Moreover, if $\sigma$ does not contain the variable $\iVar$ then $\iH$ does not contain $\iVar$ either, i.e.
$\Ind_{\iH}(\E_{\iH}) \subseteq \I^{+}$.
\end{definition}

\begin{figure}[t]
    \center
    \scalebox{0.7}{
    \input{pics/hrg_balanced_trees.tex}
    }
    \caption{An indexed hyperedge replacement grammar for balanced binary trees}
    \label{fig:hrg_balanced_trees}
\end{figure}


\begin{example}\label{ex:ihrg}
Figure~\ref{fig:hrg_balanced_trees} depicts an IG $\iG$ 
with  
six rules that each map to an IHC whose first external node is $\Null$ and whose second external node is the root of a tree-like graph.
Indices of edges not labeled with $\B$ are omitted for readability. 
%
%
\end{example}

The sets of graphs modeled by IGs are defined similarly to languages of context-free word grammars (CFG) in which a nonterminal is replaced by a finite string: 
An IG derivation replaces an edge, say $e$, that is labeled with a nonterminal by a finite graph, say $\iK$.
However, since arbitrarily many nodes may be attached to edge $e$, we have to clarify how the original graph and $\iK$ are glued together.
Hence, we identify each node attached to edge $e$ with an external node of $\iK$ (according to their position in both sequences).
Formally,

\begin{definition}
        \label{def:ihr}
  Let $\iH,\iK$ be IHCs with pairwise disjoint sets of nodes and edges.
  Moreover, let $e \in \E_{\iH}$ be an edge with $\Rank(\Lab_{\E_{\iH}}(e)) = |\Ext_{\iK}|$.
  Then the \emph{replacement} of $e$ in $\iH$ by $\iK$
  is given by $\Replace{\iH}{e}{\iK} = (\V,\E,\Att,\Lab,\Ind,\Ext)$, where
  \begin{align*}
          \V ~=~ & \V_{\iH} ~\cup~ \left( \V_{\iK} \setminus \Ext_{\iK} \right) 
                 &
          \E ~=~ & \underbrace{\left( \E_{\iH} \setminus \{ e \} \right)}_{=\,\E'} ~\cup~ \E_{\iK} \\
          \Lab ~=~ & \left(\Proj{\Lab_{\iH}}{\E'}\right) ~\cup~ \Lab_{\iK}
                   &
          \Ind ~=~ & \left(\Proj{\Ind_{\iH}}{\E'}\right) ~\cup~ \Ind_{\iK} \\
          \Att ~=~ & \left(\Proj{\Att_{\iH}}{\E'}\right) ~\cup~ \left( \Att_{\iK} \FComp \textit{mod} \right)
                   &
          \Ext ~=~ & \Ext_{\iH}
  \end{align*}
  where $\textit{mod}$ replaces each external node by the corresponding node attached to $e$.\footnote{$\Proj{f}{M}$ denotes the restriction of function $f$ to domain $M$ and $(f \FComp g)(s) = g(f(s))$. Moreover, function 
  $\textit{mod} = \{ \Elem{\Ext_{\iK}}{k} \mapsto \Elem{\Att_{\iH}(e)}{k} ~|~ 1 \leq k \leq |\Ext_{\iK}| \} \cup \{ v \mapsto v ~|~ v \in \V \setminus \Ext_{\iK} \}$ is lifted to sequences of nodes by pointwise application.
  }
\end{definition}

The above is the standard definition of hyperedge replacement in which indices and edge labels are treated the same (cf.~\cite{habel1992hyperedge}).
It is then tempting to define that an IG $\iG$ derives $\iK$ from $\iH$ if and only if 
there exists an edge $e \in \E_{\iH}$ and a rule $(\Rule{\Lab_{\iH}(e),\Ind_{\iH}(e)}{\iR}) \in \iG$ such that
$\iK$ is isomorphic to $\Replace{\iH}{e}{\iR}$.
However, this notion is too weak to model balanced trees.
In particular, since an index is treated as just another label, we cannot apply a derivation if the index of an edge does not exactly match an index on the left-hand side of an IG rule.

Instead, we use 
a finite prefix of indices in derivations and hide the remainder in variable $\iVar$. 
For example, assume an IG contains a rule $\Rule{\B,\Z\Z\iVar}{\iR}$.
Given an edge with label $\B$ and index $\sigma = \Z\Z\Z\iBot$, an IG derivation may then hide $\Z\iBot$ in $\iVar$.
The resulting index is $\Z\Z\iVar$ and a derivation as defined naively above is possible.
Finally,
all occurrences of 
$\iVar$ are replaced 
by the hidden suffix $\Z\iBot$ again. 

To formalize indexed derivations, two auxiliary definitions are needed:
Given a set $M \subseteq \Types$, we write $\E_{\iH}^{M}$ to refer to all edges of $\iH$ that are labeled with a symbol in $M$, i.e. $\E_{\iH}^{M} = \{ e \in \E_{\iH} ~|~ \Lab_{\iH}(e) \in M \}$.
We write $\Replace{\iH}{\iVar}{\rho}$ to replace all occurrences of $\iVar$ in (the index function $\Ind$ of) $\iH$ by $\rho$.\footnote{$\Replace{\iH}{\iVar}{\rho} = (\V_{\iH},\E_{\iH},\Att_{\iH},\Lab_{\iH},\Ind,\Ext_{\iH})$ with $\Ind = \{ \Replace{\Ind_{\iH}(e)}{\iVar}{\rho} ~|~ e \in \E_{\iH} \}$.}

\begin{definition}
\label{def:derive}
Let $\iG$ be an IG and $\iH,\iK$ be IHCs. 
Then $\iG$ \emph{directly derives} $\iK$ from $\iH$, written $\iH \iDerive{\iG} \iK$,
if and only if either
\begin{itemize}
  \item there exists a rule $(\Rule{X,\sigma}{\iR}) \in \iG$ and an edge $e \in \E_{\iH}^{\{X\}}$ such that
        $\Ind_{\iH}(e) = \sigma$ and $\iK$ is isomorphic to $\Replace{\iH}{e}{\iR}$, or 
  \item there exists a rule $(\Rule{X,\sigma\iVar}{\iR}) \in \iG$, an edge $e \in \E_{\iH}^{\{X\}}$, and 
          a sequence $\rho \in \I^{+}$ such that $\Ind_{\iH}(e) = \sigma \rho$ and $\iK$ is isomorphic to $\Replace{\iH}{e}{\Replace{\iR}{\iVar}{\rho}}$.
\end{itemize}
The reflexive, transitive closure of $\iDerive{\iG}$ is denoted by $\iDerive{G}^{*}$.
The inverse of $\iDerive{\iG}$ is given by $\rDerive{\iG}$. 
Finally, $\iH \nrDerive{\iG}$ iff there exists no $\iK$ such that $\iH \rDerive{\iG} \iK$.
\end{definition}


The \emph{language} of an IG and an IHC $\iH$ is the set of all graphs that can be derived from $\iH$ and that do not contain nonterminals.
Conversely, the inverse language of $\iH$ is obtained by exhaustively applying inverse derivations to $\iH$.

\begin{definition}\label{def:language}
The \emph{language} $\iL{\iG}$ and the \emph{inverse language} $\rL{\iG}$ of IG $\iG$ are given by the following functions mapping indexed graphs to sets of indexed graphs:
\begin{align*}
        \iL{\iG}(\iH) ~=~ & \{ \iK ~|~ \iH \iDerive{\iG}^{*} \iK ~\text{and}~ \E_{\iK}^{N} = \emptyset\},
        ~\text{and} \\
        \rL{\iG}(\iH) ~=~ & \{ \iK ~|~ \iH \rDerive{\iG}^{*} \iK ~\text{and}~ \iK \nrDerive{\iG} \}.
\end{align*}
\end{definition}

For instance, the language of the IG in Figure~\ref{fig:hrg_balanced_trees} for an IHC consisting of one edge labeled with $B,ssz$ is the set of all balanced binary trees of height two.

To ensure existence of inverse languages and thus termination of abstraction,
we assume that all rules of an IG $\iG$ are \emph{increasing}, i.e. for each rule $(\Rule{X,\sigma}{\iH}) \in \iG$ it holds that $|\V_{\iH}| + |\E_{\iH}| > \Rank(X)+1$.
As an example, notice that all rules of the IG in Figure~\ref{fig:hrg_balanced_trees} are increasing.
This amounts to a syntactic check on all rules that is easily discharged automatically.
We conclude our introduction of IGs with a collection of useful properties.

\begin{theorem}
\label{thm:ig-props}
  Let $\iG$ be an IG and $\iH$ be an IHC over $N$ and $\I$. Then: 
  \begin{enumerate}
    \item $\iH \iDerive{\iG}^{*} \iK$ implies $\Lang{\iG}{\iK} \subseteq \Lang{\iG}{\iH}$.
    \item $\Lang{\iG}{\iH} = 
            \begin{cases}
                    \{ \iH \} &~\text{if}~ \E_{\iH}^{N} = \emptyset \\ 
                \bigcup_{\iH \iDerive{\iG} \iK} \Lang{\iG}{\iK} &~\text{otherwise.}
            \end{cases}$
    \item It is decidable whether $\Lang{\iG}{\iH} = \emptyset$ holds.
    \item The inverse language $\iLang{\iG}{\iH}$ of an increasing IG $\iG$ is non-empty and finite.
  \end{enumerate}
\end{theorem}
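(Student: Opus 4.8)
The plan is to prove each of the four claims in turn, reusing earlier ones where possible.

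\textbf{Claim 1.}
I would show that a single direct derivation step $\iH \iDerive{\iG} \iK$ implies $\Lang{\iG}{\iK} \subseteq \Lang{\iG}{\iH}$, then lift to the reflexive-transitive closure by a trivial induction on the length of the derivation sequence. For the single step, take any $\iM \in \Lang{\iG}{\iK}$, so $\iK \iDerive{\iG}^{*} \iM$ with $\E_{\iM}^{N} = \emptyset$. Composing the original step $\iH \iDerive{\iG} \iK$ with this derivation gives $\iH \iDerive{\iG}^{*} \iM$, hence $\iM \in \Lang{\iG}{\iH}$. The only point needing care is that $\iDerive{\iG}$ is genuinely transitively composable, which is immediate from Definition~\ref{def:derive}. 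This step is routine.

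\textbf{Claim 2.}
This is a fixed-point/unfolding characterization. The case $\E_{\iH}^{N} = \emptyset$ is immediate: $\iH$ is already terminal, so the only derivation from it of length zero yields $\iH$ itself, and no nonterminal edge can be rewritten, giving $\Lang{\iG}{\iH} = \{\iH\}$. For the otherwise case, I would prove both inclusions. The $\supseteq$ direction follows from Claim~1 applied to each one-step successor $\iK$. For $\subseteq$, take $\iM \in \Lang{\iG}{\iH}$; since $\E_{\iH}^{N} \neq \emptyset$ while $\E_{\iM}^{N} = \emptyset$, the derivation $\iH \iDerive{\iG}^{*} \iM$ has length at least one, so it factors as $\iH \iDerive{\iG} \iK \iDerive{\iG}^{*} \iM$ for some first successor $\iK$, witnessing $\iM \in \Lang{\iG}{\iK}$. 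The mild subtlety is confirming that every maximal derivation to a terminal graph must begin with a rewrite of some nonterminal edge, which holds because a terminal graph is reachable only by eliminating all nonterminals.

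\textbf{Claims 3 and 4.}
For Claim~4 I would argue first non-emptiness, then finiteness. Non-emptiness: since every rule is \emph{increasing}, each inverse derivation step strictly decreases $|\V_{\iH}| + |\E_{\iH}|$ (an inverse step replaces a subgraph $\iR$ by a single edge of rank $\Rank(X)$, and increasingness gives $|\V_{\iR}| + |\E_{\iR}| > \Rank(X)+1$, so the total size drops). Hence every chain $\iH \rDerive{\iG}^{*} \cdots$ terminates, and a $\iK$ with $\iK \nrDerive{\iG}$ is always reachable, so $\iLang{\iG}{\iH} \neq \emptyset$. Finiteness follows because each IHC has only finitely many nonterminal edges, each edge matches finitely many rules, and by the well-founded decrease the inverse-derivation tree rooted at $\iH$ has finite depth and finite branching, hence finitely many leaves up to isomorphism. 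The expected main obstacle is Claim~3: deciding emptiness of $\Lang{\iG}{\iH}$. Here the indices are unbounded, so a naive enumeration does not terminate, and this is where the extension beyond context-freeness bites. The plan is to adapt the classical CFG emptiness test: compute the set of \emph{productive} nonterminal/index-shape pairs by a least-fixed-point saturation, treating indices symbolically via the variable $\iVar$ so that only finitely many \emph{rule-relevant} index prefixes must be tracked. A pair $(X,\sigma)$ is productive if some rule applicable to it has an all-productive right-hand side; because right-hand-side indices are either ground or of the form $\sigma'\iVar$, the reachable index shapes form a regular (finitely representable) set, making the saturation terminate. Then $\Lang{\iG}{\iH} = \emptyset$ iff some nonterminal edge of $\iH$ is non-productive. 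Establishing that the symbolic index abstraction is sound and complete for emptiness is the delicate part and would likely be deferred to an appendix lemma.
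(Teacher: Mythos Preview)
Your arguments for Claims 1, 2, and 4 are essentially the paper's proofs: induction on derivation length for Claim~1, a two-case unfolding for Claim~2, and the size-decrease from the increasing assumption for Claim~4.

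For Claim~3 the paper takes a genuinely different and much shorter route than your direct productivity saturation. Rather than analyzing index shapes symbolically, the paper observes that emptiness of $\Lang{\iG}{\iH}$ depends only on whether every nonterminal can eventually be eliminated, which is insensitive to the graph structure (attachment, external nodes). It therefore \emph{reduces} to emptiness of an indexed context-free \emph{string} grammar: for an IHC $\iK$ with edges $e_1,\ldots,e_k$ set $\sigma_{\iK}=\Lab(e_1)[\Ind(e_1)]\cdots\Lab(e_k)[\Ind(e_k)]$, and for each IG rule $X,\sigma\to\iK$ add the string rule $X[\sigma]\to\sigma_{\iK}$; then $\Lang{\iG}{\iH}=\emptyset$ iff the language of the resulting indexed string grammar from the word $\sigma_{\iH}$ is empty. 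Decidability of the latter is a classical result (Aho; see the Handbook of Formal Languages, Vol.~2), so the paper is done in a few lines.

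Your approach is not wrong in spirit---it is essentially re-deriving decidability of emptiness for indexed grammars---but the sketch has a real gap: the assertion that ``reachable index shapes form a regular (finitely representable) set'' is exactly the nontrivial content of the indexed-grammar emptiness proof, and you would need to actually establish it (e.g.\ by showing that the set of productive indices for each nonterminal is regular and computable by a terminating saturation). The paper's reduction buys you this for free by citing the string case; your route would buy self-containment at the cost of reproducing a known, somewhat delicate argument.
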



The first two properties are crucial for proving our analysis sound.
The remaining properties ensure that we can construct well-defined (inverse) languages.

%

\section{Abstract Domain}\label{sec:analysis}
%
%
%
Our analysis  is a typical forward abstract interpretation~\cite{cousot92} that is parameterized by a user-supplied IG $\iG$.
Its concrete domain consists of all IHCs without nonterminals.
The abstract domain contains all IHCs to which no inverse IG derivation is applicable.
The order of our abstract domain is language inclusion.
Concretization $\con{\iG}$ and abstraction $\abs{\iG}$ 
correspond to computing the language and the inverse language of $\iG$, respectively.
Our setting is summarized in Figure~\ref{fig:domains}. 

\begin{figure}[t]
        \begin{tabular}{l@{~}l@{\quad}ll}
                Concrete domain: & $(\DomCon = \PS{\underbrace{\Lang{\iG}{\IHC}}_{\text{concrete IHCs}}}, \subseteq)$
                         & Concretization: & $\con{\iG} ~=~ \iL{\iG}$
\\
Abstract domain: & $(\DomAbs = \PS{\underbrace{\iLang{\iG}{\IHC}}_{\text{fully abstract IHCs}}}, \LOrder{})$,
                 &
Abstraction: & $\abs{\iG} ~=~ \rL{\iG}$
\end{tabular}
\caption{
Concretization, abstraction and the respective domains for a given IG $\iG$. 
Here, $\LOrder{}$ is given by $\iH \LOrder{} \iK$ iff $\con{\iG}(\iH) \subseteq \con{\iG}(\iK)$.
$\PS{M}$ is the powerset of $M$.
This setting yields a Galois connection for backward confluent IGs (cf. Section~\ref{sec:backwardconfluent}).
}
\label{fig:domains}
\end{figure}

The concrete semantics of common imperative programs amounts to straightforward graph transformations.
Let us assume that $\Progs$ is the set of all programs.
Moreover, assume the concrete semantics of each program $P \in \Progs$ is given by a (partial) function $\PSem{P} : \IHC \to \IHC$ that captures the effect of executing $P$ on an IHC.\footnote{We defined $\PSem{.}$ inductively for a Java-like language. Furthermore, we proved locality for all $\Progs$-programs. Details are found in Appendix~\ref{app:sec-concrete-semantics}.} 
For example, step (2) in Section~\ref{sec:overview} computes $\PSem{\Ass{\texttt{n}}{\FR{\texttt{n}}}}$.

As is standard, our analysis performs a fixed-point iteration of the abstract semantics that overapproximates the concrete semantics.
Following the terminology of~\cite{reps2007shape}, our abstract semantics consists of three phases: materialization, execution of the concrete semantics, and canonicalization.
That is, our abstract semantics is a function of the form
$
\PASem{.} \,:\, \Progs \to \DomAbs \to \DomAbs 
$ 
that is defined inductively on the structure of programs. 
In particular, for an atomic program $P \in \Progs$, we have
$\PASem{P} = \Mater{\iG}{P} \FComp \PSem{P} \FComp \Canon{\iG}{P}$.\footnote{$f \FComp g$ denotes sequential composition of $f$ and $g$, i.e. $(f \FComp g)(s) = g(f(s))$.}
The inductive cases are straightforward (cf. Appendix~\ref{app:sec-abstract-semantics}).

Although materialization and canonicalization naturally depend on the user-provided grammar $\iG$, for readability we tacitly omit adding $\iG$ as a parameter. 
Materialization ensures applicability of the concrete semantics by partially concretizing an IHC.
%
It is thus a function 
$
    \Mater{\iG}{.} : \Progs \to \IHC \to \PSF{\IHC}
$ 
that, for a given program, maps an IHC to a finite set of IHCs.
Intuitively, materialization applies derivations $\iDerive{\iG}$ 
until the concrete semantics can be applied (cf. Theorem~\ref{thm:ig-props}.2).
A detailed discussion of suitable materializations that are derived from a grammar $\iG$ is found in~\cite{heinen2015juggrnaut,jansen2011local}.
In this paper, we consider a sufficient condition to ensure soundness. 
\begin{definition}\label{prop:mater}
        For every atomic program $P \in \Progs$, we require a materialization function $\Mater{\iG}{.}$ such that
        $ 
        \con{\iG} \FComp \PSem{P} \FOrder \Mater{\iG}{P} \FComp \PSem{P} \FComp \con{\iG}. 
        $
\end{definition}

Here, $\FOrder$ denotes pointwise application of $\subseteq$.
Examples of applying materialization are provided in steps (1) and (3) of Section~\ref{sec:overview}.

Conversely to materialization, canonicalization takes a partially concretized program state and computes an abstract program state again.
It is thus a function of the form
$\Canon{\iG}{.} : \Progs ~\to~ \IHC ~\to~ \DomAbs$.
\begin{definition}\label{prop:canon}
   For every program $P \in \Progs$, we require a canonicalization function $\Canon{\iG}{.}$ such that $\con{\iG} ~\FOrder~ \Canon{\iG}{P} \FComp \con{\iG}$.
\end{definition}
By Theorem~\ref{thm:ig-props}(1), inverse IG derivations as well as the abstraction function $\abs{\iG}$ are suitable candidates for canonicalization.
Examples of applying canonicalization are provided in steps (5) and (6) of Section~\ref{sec:overview}.

%

Assuming suitable materialization and canonicalization functions as of Definitions~\ref{prop:mater} and~\ref{prop:canon},
our abstract semantics $\PASem{.}$ computes an overapproximation of the concrete semantics $\PSem{.}$ (detailed proofs are found in Appendix~\ref{app:sec-sound}): 

\begin{theorem}[Soundness]\label{thm:sound}
For all $\pP \in \Progs$, $\con{\iG} \FComp \PSem{P} ~\FOrder~ \PASem{P} \FComp \con{\iG}$.
\end{theorem}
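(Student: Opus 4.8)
The plan is to prove the inequality by structural induction on the program $P$, reducing the base case (atomic statements) entirely to the two defining properties of materialization and canonicalization (Definitions~\ref{prop:mater} and~\ref{prop:canon}) together with transitivity of $\subseteq$ and monotonicity of the concretization $\con{\iG}$. Throughout I read $\FComp$ as sequential composition and $\FOrder$ as pointwise $\subseteq$, lift $\con{\iG}$, $\PSem{P}$, $\Mater{\iG}{P}$ and $\Canon{\iG}{P}$ to sets of IHCs in the obvious pointwise (union) manner, and note that $\con{\iG}$ so lifted is just a union and hence monotone. Concretely, for every abstract state $\iH$ I must show $\PSem{P}(\con{\iG}(\iH)) \subseteq \con{\iG}(\PASem{P}(\iH))$.

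For an atomic $P$ I would unfold $\PASem{P} = \Mater{\iG}{P} \FComp \PSem{P} \FComp \Canon{\iG}{P}$, so that $\PASem{P}(\iH) = \Canon{\iG}{P}(\PSem{P}(\Mater{\iG}{P}(\iH)))$, and then chain the two hypotheses. First, the materialization property gives $\con{\iG} \FComp \PSem{P} \FOrder \Mater{\iG}{P} \FComp \PSem{P} \FComp \con{\iG}$, i.e.\ $\PSem{P}(\con{\iG}(\iH)) \subseteq \con{\iG}(\PSem{P}(\Mater{\iG}{P}(\iH)))$. Second, applying the canonicalization property $\con{\iG} \FOrder \Canon{\iG}{P} \FComp \con{\iG}$ pointwise to the intermediate set $S = \PSem{P}(\Mater{\iG}{P}(\iH))$ yields $\con{\iG}(S) \subseteq \con{\iG}(\Canon{\iG}{P}(S))$. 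Composing the two inclusions by transitivity and observing $\con{\iG}(\Canon{\iG}{P}(\PSem{P}(\Mater{\iG}{P}(\iH)))) = \con{\iG}(\PASem{P}(\iH))$ closes the base case. No grammar-specific reasoning is needed at this point; all of it is already absorbed into Definitions~\ref{prop:mater} and~\ref{prop:canon}, which are themselves justified using Theorem~\ref{thm:ig-props}(1)--(2).

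For the inductive cases I would follow the inductive definition of $\PASem{.}$. Sequencing $\Seq{P_1}{P_2}$ is the cleanest: from $\PSem{\Seq{P_1}{P_2}} = \PSem{P_1} \FComp \PSem{P_2}$ and the analogous decomposition of $\PASem{.}$, I append $\FComp \PSem{P_2}$ to the induction hypothesis for $P_1$ and prepend $\PASem{P_1} \FComp$ to the hypothesis for $P_2$; this needs monotonicity of $\PSem{P_2}$ and of $\PASem{P_1}$, so I would first record (or cite from the appendix) that both semantics are monotone with respect to $\subseteq$ and $\LOrder{}$, and then chain the two monotone steps through the common intermediate term $\PASem{P_1} \FComp \con{\iG} \FComp \PSem{P_2}$. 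Conditionals $\Ite{b}{P_1}{P_2}$ reduce to this pattern once the guard evaluation is treated as a sound filtering operation, which again follows from a materialization-style argument on the boolean test.

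The main obstacle is the loop $\WhileDo{b}{P}$, whose concrete semantics is a least fixed point over $\PS{\Lang{\iG}{\IHC}}$ while the abstract semantics computes a fixed point in $\DomAbs$. Here I would invoke a standard fixpoint-transfer argument: from the per-step soundness established for the body and the guard, together with monotonicity of the abstract transformer, the concrete least fixed point is bounded under $\con{\iG}$ by the abstract one. The delicate points are (i) matching the two iteration orders so that $\con{\iG}$ interacts correctly with the suprema used in the Kleene iterations, and (ii) ensuring the abstract iteration truly overapproximates each finite unfolding rather than merely approximating it; both are discharged by an induction on the iteration index that re-uses the already-proved soundness of the body. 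Everything outside this case is routine composition of inclusions.
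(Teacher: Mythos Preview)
Your proposal is correct and follows essentially the same route as the paper: structural induction on $P$, with the atomic case discharged by chaining Definitions~\ref{prop:mater} and~\ref{prop:canon}, sequencing by two applications of the induction hypothesis, conditionals via a materialization step for the guard, and loops by induction on finite unrollings (exactly your ``induction on the iteration index''). The one point the paper makes more explicit than you do is the conditional case: it materializes for the guard via the dummy program $Q=\Ite{\pB}{\Skip}{\Skip}$ and then records the key fact that, on any materialized IHC, $\BSem{\pB}$ evaluates identically on that IHC and on every element of its concretization---this is what lets $\con{\iG}$ commute past the if-then-else and is precisely the content of your ``sound filtering'' remark.
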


The quality of our analysis depends, naturally, on the quality of the user-defined grammar.
That is, the better our grammar matches the data structures employed by a program, the more precise the results obtained from our analysis.
In particular, our analysis does not necessarily terminate.
For example, we cannot analyze a program working on doubly-linked lists if the user-supplied IG models trees only.
As usual, termination has to be ensured by some sort of widening.
In the simplest case, termination is achieved by fixing a maximal size of IHCs a priori.
Whenever an IHC exceeds the fixed size, the analysis stops.

%

\section{Backward Confluent IGs}\label{sec:backwardconfluent}
%
Two components of our analysis are particularly involved:
First, the inverse language of an IHC with respect to an IG has to be computed repeatedly during canonicalization, i.e.\ we have to exhaustively apply inverse IG derivations.
Applying inverse derivations in turn requires finding isomorphic subgraphs in an IHC that can be replaced by a hyperedge. 
Since the subgraph isomorphism problem is \textsc{NP}-complete~\cite{cook1971complexity}, canonicalization is expensive. 

Second, computing a fixed point requires us to check for language inclusion. 
However, the language inclusion problem for IGs is undecidable as it is already undecidable for context-free string grammars~\cite{bar1961formal}. 
Undecidability of inclusion
is common in the area of shape analysis, where supported data structures are either severely restricted to obtain decidability, or approximations are used.

We now discuss a subclass of IGs 
that addresses both problems:

\begin{definition}
\label{def:backwardconfluent}
An IG $\iG$ is \emph{backward confluent} iff 
for all IHCs 
$\iH$ the inverse language $\iLang{\iG}{\iH}$ is a singleton set, i.e. $|\iLang{\iG}{\iH}| = 1$.
\end{definition}

The definition of backward confluent IGs is, admittedly, rather semantics-driven.
In particular, it solves the problem of expensive canonicalizations directly: Since the inverse language of an IHC is unique it suffices to exhaustively apply inverse derivations 
instead of trying all possible combinations.
Fortunately, as shown in~\cite{plump2010checking}, backward confluence can be checked automatically. 
In particular, we constructed backward confluent IGs for singly- and doubly-linked, (a)cyclic lists, (balanced) trees (w/o back pointers), in-trees, lists of lists, and (in-)trees with linked leaves.
In general, however, the class of graph languages generated by backward confluent IGs is strictly smaller than the class of languages generated by arbitrary IGs.\footnote{A formal proof is found in Appendix~\ref{app:sec-backwardconfluent-expressiveness}.}

%



We now turn to our second desired property: a decidable inclusion problem.
This property relies on the observation that two IHCs $\iH,\iK$ that cannot be abstracted further, i.e. $\iH,\iK \nrDerive{\iG}$, 
are either isomorphic or have disjoint languages.

%
%
%
\begin{theorem}\label{thm:backwardconfluent-inclusion}
  Let $\iG$ be a backward confluent IG.
  Moreover, let $\iH, \iK \in \IHC$ such that $\iK \nrDerive{\iG}$. 
  Then it is decidable whether 
  $\Lang{\iG}{\iH} \subseteq \Lang{\iG}{\iK}$ holds.
\end{theorem}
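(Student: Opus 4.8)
The plan is to reduce the inclusion test to a single graph-isomorphism check against the unique fully abstract form of $\iH$. The backbone is the \emph{disjoint-or-isomorphic} property announced just before the statement: if $\iH'$ and $\iK$ are both fully abstract, i.e.\ $\iH' \nrDerive{\iG}$ and $\iK \nrDerive{\iG}$, then either $\iH' \Iso \iK$ or $\Lang{\iG}{\iH'} \cap \Lang{\iG}{\iK} = \emptyset$. I would establish this first. Suppose the two terminal languages share a graph $J \in \Lang{\iG}{\iH'} \cap \Lang{\iG}{\iK}$. By Definition~\ref{def:language} we have $\iH' \iDerive{\iG}^{*} J$ and $\iK \iDerive{\iG}^{*} J$ with $J$ nonterminal-free, hence $J \rDerive{\iG}^{*} \iH'$ and $J \rDerive{\iG}^{*} \iK$; since neither $\iH'$ nor $\iK$ admits a further inverse derivation, both belong to $\iLang{\iG}{J}$. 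Backward confluence forces $|\iLang{\iG}{J}| = 1$, so $\iH' \Iso \iK$, proving the contrapositive.

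The decision procedure then has three decidable steps. First, test whether $\Lang{\iG}{\iH} = \emptyset$, which is decidable by Theorem~\ref{thm:ig-props}(3); if so, the inclusion holds vacuously and the answer is \emph{yes}. Otherwise, compute the fully abstract form $\iH'$ of $\iH$ by exhaustively applying inverse derivations $\rDerive{\iG}$. Each such step replaces a subgraph isomorphic to the right-hand side of a rule by a single nonterminal edge, and since all rules are increasing this strictly decreases $|\V| + |\E|$, so the reduction terminates; Theorem~\ref{thm:ig-props}(4) already guarantees that $\iLang{\iG}{\iH}$ is non-empty and finite, and backward confluence makes it the singleton $\{\iH'\}$, independent of the chosen reduction order. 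Finally, decide the graph-isomorphism question $\iH' \Iso \iK$ and answer \emph{yes} exactly when it holds.

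For correctness, I would argue both directions. If $\iH' \Iso \iK$, then from $\iH' \iDerive{\iG}^{*} \iH$ and Theorem~\ref{thm:ig-props}(1) we get $\Lang{\iG}{\iH} \subseteq \Lang{\iG}{\iH'}$, and since isomorphic graphs have identical languages this yields $\Lang{\iG}{\iH} \subseteq \Lang{\iG}{\iK}$. Conversely, assume $\iH' \not\Iso \iK$ and $\Lang{\iG}{\iH} \neq \emptyset$, and pick any $J \in \Lang{\iG}{\iH} \subseteq \Lang{\iG}{\iH'}$. Were $J \in \Lang{\iG}{\iK}$, then $J$ would witness $\Lang{\iG}{\iH'} \cap \Lang{\iG}{\iK} \neq \emptyset$ with both $\iH'$ and $\iK$ fully abstract, contradicting the disjoint-or-isomorphic lemma; hence $J \notin \Lang{\iG}{\iK}$ and the inclusion fails. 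Together with the vacuous empty case, this shows the three-step procedure decides the inclusion correctly.

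The conceptual crux, and the step I expect to be the main obstacle, is the disjoint-or-isomorphic lemma together with the observation that every terminal member of $\Lang{\iG}{\iH}$ reduces to the \emph{same} fully abstract form $\iH'$. This is precisely where backward confluence is indispensable: without uniqueness of inverse languages, a single terminal graph could abstract to several incomparable normal forms, and the collapse of the inclusion test to one isomorphism check would break down. The remaining ingredients---decidability of emptiness, termination of inverse derivation via the increasing condition, and decidability of graph isomorphism---are routine once the lemma is in place.
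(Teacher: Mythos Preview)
Your proposal is correct and follows essentially the same approach as the paper: both hinge on the disjoint-or-isomorphic lemma (the paper's Theorem~\ref{thm:backwardconfluent-intersection}) proved exactly as you outline, and both reduce the inclusion test to the unique fully abstract form of $\iH$ together with the emptiness check from Theorem~\ref{thm:ig-props}(3). The only cosmetic difference is the order of the two checks: the paper first tests whether $\iK \iDerive{\iG}^{*} \iH$ (which, since $\iK \nrDerive{\iG}$, is equivalent to your test $\iH' \Iso \iK$) and only afterwards invokes emptiness, whereas you test emptiness first and then compute $\iH'$; the underlying argument is identical.
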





To conclude this section, we remark that, for backward-confluent IGs, our concrete and abstract domain (cf.\ Figure~\ref{fig:domains}) form a Galois connection, i.e. our analysis falls within the classical setting of abstract interpretation~\cite{cousot77}.

\section{Global Index Abstraction}\label{sec:global}
The 
goal of our shape analysis is to enable reasoning about complex data structures, such as balanced binary trees.
However, we might encounter infinitely many IHCs that vary in their indices only, thus preventing termination (cf. steps (1) and step (6) in Section~\ref{sec:overview}).
Our abstraction is thus often too precise.

To capture that an IHC models balanced trees, however, it suffices to keep track of the \emph{differences} between indices:
Assume, for example, that a node has two subtrees specified by nonterminal edges with indices
$\Z\iBot$ and $\Z\Z\iBot$. If we replace these indices by $\Z\Z\iBot$ and $\Z\Z\Z\iBot$, the underlying trees remain balanced.

Hence, we propose an index 
abstraction on top of IG-based abstraction.
Intuitively, this abstraction removes a common suffix from all indices and replaces it by a placeholder.
Apart from balancedness, it is applicable to properties such as \enquote{all sublists in a list of lists have equal length}. 
The abstraction is again formalized by grammars; right-linear context-free word grammars (CFG) to be precise.
Thus, let $\I = \I_{N} \cup \I_{T}$ be a finite set of index symbols that is partitioned into a set of nonterminals $\I_{N}$ and a set of terminals $\I_{T}$ including the end-of-index symbol $\iBot$.
We call an index $\sigma \in \I^{+}$ \emph{well-formed} if $\sigma \in (\I_{T} \setminus \{\iBot\})^{*} (\I_{N} \cup \{\iBot\})$.
That is, a well-formed index always ends with a nonterminal or the end-of-index symbol $\iBot$.
Accordingly, an IHC is well-formed if all of its indices are.
We assume all indices --- including indices in CFG rules --- to be well-formed.
Hence, all considered CFGs are right-linear and thus generate regular languages.
We do not allow nonterminal index symbols in IGs, i.e. we
assume for each IG rule $\Rule{X,\sigma}{\iH}$ that $\Ind_{\iH}(\E_{\iH}^{N}) \subseteq \I_{T}^{*} \{\iBot, \iVar \}$, where $\iVar$ has been 
introduced in Definition~\ref{def:ihrg}.

To maintain relationships between indices, such as their difference, we require that all indices ending with the same nonterminal of an IHC are modified simultaneously.
This leads us to a notion of global derivations and 
global languages.

\begin{definition}\label{def:indexabstraction}
Let $\iH,\iK \in \IHC$.
A CFG $\iC$ \emph{globally derives} $\iK$ from $\iH$, written $\iH \iCon{\iC} \iK$, if and only if there exists a rule $(\Rule{X}{\tau}) \in \iC$ such that 
$\Ind_{\iH}(\E_{\iH}^{N}) \subseteq \I_{T}^{*} \I_{N}$ 
and $\iK$ is isomorphic to $\Replace{\iH}{X}{\tau}$, i.e. $\iH$ in which all occurrences of $X$ are replaced by $\tau$.
Again, $\iCon{\iC}^{*}$ the reflexive, transitive closure of $\iCon{\iC}$.
$\iAbs{\iC}$ denotes inverse derivations and $\niAbs{\iC}$ that no inverse derivation is possible.
\end{definition}
%
%
\begin{definition} \label{def:globallanguage}
The \emph{global language} and the \emph{inverse global language} of a right-linear CFG $\iC$ over $\I$ are given by:
\begin{align*}
\iGL{\iC} : \IHC \to \PS{\IHC},~ 
\iH \,\mapsto\, & \{ \iK ~|~ \iH \iCon{\iC}^{*} \iK ~\text{and}~ \Ind_{\iK}(\E^{N}_{\iK}) \subseteq \I_{T}^{+} \} \\
\rGL{\iC} : \IHC \to \PS{\IHC},~
\iH \,\mapsto\, & \{ \iK ~|~ \iH \iAbs{\iC}^{*} \iK ~\text{and}~ \iK \niAbs{\iC} \}
\end{align*}
\end{definition}




Global derivations enjoy the same properties as IG derivations (Theorem~\ref{thm:ig-props}).
These properties are crucial to ensure soundness and termination of abstraction.

%

To combine global derivations and IG derivations, we consider a new derivation relation of the form $(\iDerive{\iG} \cup \iCon{\iC})^{*}$. 
We can further simply this relation, because global derivations and IG derivations enjoy an orthogonality property:
%
\begin{theorem}
        $\iH \, (\iDerive{\iG} \cup \iCon{\iC})^{*} \, \iK ~\text{if and only if}~ \iH \, (\iCon{\iC}^{*} \FComp \iDerive{\iG}^{*}) \, \iK$.\footnote{For binary relations $R_1,R_2$, we set $R_1 \FComp R_2 = \{(u,w) ~|~ \exists v : (u,v) \in R_1, (v,w) \in R_2 \}$.}
\end{theorem}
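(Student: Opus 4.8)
The plan is to prove the two directions of the biconditional separately, with the right-to-left implication being essentially immediate and the left-to-right implication requiring a commutation (orthogonality) argument. For the direction $\iH \, (\iCon{\iC}^{*} \FComp \iDerive{\iG}^{*}) \, \iK \implies \iH \, (\iDerive{\iG} \cup \iCon{\iC})^{*} \, \iK$, I would simply observe that both $\iCon{\iC}^{*}$ and $\iDerive{\iG}^{*}$ are contained in $(\iDerive{\iG} \cup \iCon{\iC})^{*}$, which is transitively closed; hence their composition is contained as well. This needs no further work.

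For the harder direction, the idea is to show that any mixed derivation sequence can be reordered so that all global CFG derivations $\iCon{\iC}$ come first, followed by all IG derivations $\iDerive{\iG}$. First I would establish a local commutation lemma: whenever $\iH \iDerive{\iG} \iH' \iCon{\iC} \iK$, there exists an intermediate IHC $\iH''$ with $\iH \iCon{\iC} \iH'' \iDerive{\iG} \iK$ (possibly needing $\iH \iCon{\iC}^{*} \iH''$ to cover the degenerate case where the global step acts on an index symbol introduced or consumed by the IG step). The crux is that the two rewriting mechanisms operate on orthogonal pieces of syntactic data: an IG derivation $\iDerive{\iG}$ replaces a \emph{nonterminal hyperedge} by a right-hand side, only inspecting a finite index \emph{prefix} and hiding the suffix in $\iVar$, whereas a global derivation $\iCon{\iC}$ rewrites a \emph{nonterminal index symbol} $X \in \I_{N}$ uniformly across all edges. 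Because the paper stipulates (in the setup preceding Definition~\ref{def:indexabstraction}) that IG rules carry indices only in $\I_{T}^{*}\{\iBot,\iVar\}$, an IG rule never introduces or matches on a nonterminal index symbol; the two steps therefore touch disjoint structure and can be swapped without interference.

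With the local lemma in hand, I would lift it to the statement by a standard bubbling/induction argument on the number of $\iDerive{\iG}$-steps that occur before some $\iCon{\iC}$-step in the sequence. Repeatedly applying the commutation lemma moves every $\iCon{\iC}$-step leftward past the $\iDerive{\iG}$-steps preceding it, and this process terminates because each application strictly decreases the number of $(\iDerive{\iG}, \iCon{\iC})$ inversions. The resulting normalized sequence has the shape $\iH \iCon{\iC}^{*} \iH'' \iDerive{\iG}^{*} \iK$, which is exactly membership in $\iCon{\iC}^{*} \FComp \iDerive{\iG}^{*}$.

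The main obstacle I anticipate is the careful statement and verification of the local commutation lemma, specifically handling the interaction between the index variable $\iVar$ used by IG derivations and the nonterminal index symbols rewritten by $\iCon{\iC}$. When an IG step matches a prefix $\sigma$ and hides a suffix $\rho \in \I^{+}$ in $\iVar$ (second clause of Definition~\ref{def:derive}), I must confirm that a subsequent global rewrite of a symbol occurring in $\rho$ commutes correctly with reinstating $\rho$ for $\iVar$ — i.e.\ that substituting then globally rewriting agrees with globally rewriting then substituting. The guard $\Ind_{\iH}(\E_{\iH}^{N}) \subseteq \I_{T}^{*}\I_{N}$ attached to $\iCon{\iC}$ in Definition~\ref{def:indexabstraction}, together with the restriction that IG right-hand sides contain no nonterminal index symbols, is what makes this agreement go through, and I would need to check that these well-formedness constraints are preserved along the whole reordered sequence so that each intermediate step is genuinely applicable.
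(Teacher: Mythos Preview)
Your proposal is correct and follows essentially the same route as the paper: the appendix (Theorem~\ref{thm:indexcommutable}) proves exactly the local commutation lemma $\iDerive{\iG} \FComp \iCon{\iC} \,\subseteq\, \iCon{\iC} \FComp \iDerive{\iG}$ by unfolding Definitions~\ref{def:derive} and~\ref{def:indexabstraction} and exploiting that IG right-hand sides carry only indices in $\I_T^{*}\{\iBot,\iVar\}$, and the full statement then follows by the standard bubbling argument you describe. One small remark: the paper's commutation yields a single $\iCon{\iC}$-step for a single $\iDerive{\iG}$-step (not $\iCon{\iC}^{*}$), because the guard $\Ind_{\iR}(\E_{\iR}^{N}) \subseteq \I_T^{*}\I_N$ on the intermediate IHC forces the IG step to have used the $\iVar$-case with a suffix ending in $\I_N$, so no degenerate case arises.
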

Thus, for materialization, it suffices to first apply global derivations and then apply IG derivations.
Conversely, for abstraction, it suffices to first apply inverse IG derivations and then apply inverse global derivations.

It is then straightforward to refine our analysis from Section~\ref{sec:analysis} by using the above derivation relation (cf.\ Appendix~\ref{app:sec-cfg-analysis}).
To conclude this section, we remark that all results from Sections~\ref{sec:analysis} and~\ref{sec:backwardconfluent} can be lifted to the refined analysis.

\section{Implementation} \label{sec:impl}
We implemented our analysis in \textsc{Attestor}~\cite{attestorCav} to analyze Java programs. 
The source code and our experiments are available 
online.\footnote{\url{https://github.com/moves-rwth/attestor-examples/releases/tag/v0.3.5-SEFM2018}\newcounter{attestorfn}\setcounter{attestorfn}{\value{footnote}}}


\emph{Input.} 
\textsc{Attestor} supports a fragment of Java that 
includes recursive procedure calls, but no arithmetic. 
Apart from programs and grammars, linear temporal logic (LTL) specifications over execution paths can be supplied. 
Atomic propositions 
include heap shapes and reachability of variables
(cf.~\cite{esop2017}).

\emph{Output.}
\textsc{Attestor} generates a transition system in which each state consists of a program location and an IHC representing the abstract program state, i.e., a set of reachable heaps.
This state space can also be explored graphically.\footnote{A brief tutorial on using the tool is found in Appendix~\ref{app:sec-tutorial}.}
Collecting the IHCs of all states with the same program location then coincides with the result of the abstract semantics presented in Section~\ref{sec:analysis}.
Moreover, the tool applies LTL model-checking to verify provided LTL specifications.

\begin{table}[t]
\center
\scalebox{0.9}{
\begin{tabular}{lr|r||lr|r}
    \textbf{Program} & \textbf{Mem. Safety} & \textbf{Shape} & 
    \textbf{Program} & \textbf{Mem. Safety} & \textbf{Properties}  \\
    \hline\hline
    \multicolumn{3}{l||}{AVL trees with parent pointers} & 
    \multicolumn{3}{l}{Data structure traversals / other algorithms} \\
    \cline{1-6}
    \texttt{binary search}  & 0.089 & 0.153 & 
    \texttt{List of cyclic lists} & 0.115 & 0.115  \\
    \texttt{min. value}      & 0.128 & 0.204 & 
    \texttt{Tree (Lindstrom)}   & 0.084  & 11.60  \\
    \texttt{search and back} & 0.140 & 0.158 & 
    \texttt{Skip list}   & 0.117  & 0.117  \\
    \texttt{search and swap} & 0.823 & 1.106 &  
    \texttt{Tree (recursive)}   & 0.080  & 8.700  \\
    \texttt{rebalance} & 1.500 & 1.769 &  
    \texttt{Zip list (recursive)}      & 0.118 & 0.118  \\
    \texttt{insert} & 1.562 & 3.079 &  
    \texttt{DLL reversal}      & 0.054 & 0.126 \\
    \texttt{list to AVLTree}        & 1.784 & 1.892 &  
    \texttt{DLL insertion sort}      & 0.369 & 1.134 \\
    \hline
\end{tabular}
}
\medskip
\caption{
        An excerpt of our experimental results. Provided verification times are in seconds including model-checking. Verified properties include \emph{memory safety}, correct heap \emph{shape} (including balancedness), correct return values, every element has been accessed, and the input data structure coincides with the output data structure. 
        Column \emph{properties} provides the worst runtime for verified LTL specifications.
        The complete benchmark results are found in Appendix~\ref{app:sec-tutorial}.        
}
\label{tbl:benchmarks}
\end{table}

\emph{Experimental results.}
We evaluated our implementation against common challenging algorithms 
on various data structures and multiple LTL specifications.
The results are shown in Table~\ref{tbl:benchmarks}.
Experiments were performed on an Intel Core i7-5820K at 3.30GHz with the Java virtual machine limited to 2GB of RAM.
Program inputs covered all instances of the respective data structure through nonterminal edges for each employed data structure. 
Further details regarding individual case studies are provided in Appendix~\ref{app:sec-tutorial} and online.\footnotemark[\value{attestorfn}]
In particular, \texttt{list to AVLTree} traverses a singly-linked list while inserting each of its elements into an (initially empty) AVL tree including all rebalancing procedures. 
Our implementation successfully verifies that the result is a balanced binary tree and the list has been completely traversed.
This demonstrates that our analysis is capable of precisely reasoning about combinations of multiple data structures.

\section{Related Work} \label{sec:related}
%


\emph{Graph Transformations.}
Our work is an extension of an existing analysis based on context-free graph grammars~\cite{heinen2015juggrnaut}: 
From a theoretical perspective, IGs allow covering infinitely many context-free rules by a single nonterminal with an index variable. 
Covering infinitely many rules is essential when reasoning about relational properties, e.g. balancedness. 
From a practical perspective, our analysis is a standard forward abstract interpretation in contrast to previous approaches.
%

\emph{Separation Logic.}
The class of graphs described by context-free graph grammars is equivalent to a fragment of symbolic heap separation logic (SL)~\cite{jansen2014generating}. 
In contrast to SL, graph grammars give us access to a rich set of theoretical results from string and graph rewriting.
For example, the concept of IGs is derived from Aho's indexed string grammars~\cite{aho1968indexed}.
Moreover, the notion of backward confluence is well-studied in the context of graph rewriting (cf.~\cite{plump2010checking}) and provides us with a decidable criterion to discharge entailments (language inclusion).
\textsc{Hip/Sleek} uses SL enriched with arithmetic to specify size constraints on data structures (cf. \cite{Chin2012automated}).
Their focus is on program verification with user-supplied invariants.
In contrast, our approach 
synthesizes invariants automatically. 
Furthermore, we provide decidable criteria for good data structure specifications 
whereas \textsc{Hip/Sleek} relies on heuristics 
to discharge entailments. 

\emph{Static analysis.}
%
%
\cite{Chang2008relational,Chang2007shape} introduce a generic framework for relational inductive shape analysis based on user-supplied invariants. 
Applicability to red-black trees is demonstrated in an example, but not covered by experiments. 
%
In \cite{Abdulla2016verification}, forest automata are extended
by constraints between data elements associated with nodes of the heaps. 
The authors conjecture that their method generalizes to handle 
lengths of branches in a tree, which are needed 
to express balancedness properties. The details, however, are not worked out.
%
%
%
%
%
%
%
%

\section{Conclusion} \label{sec:conclusion}
We developed a shape analysis that is capable of proving certain relational properties of data structures, such as balancedness of AVL trees. 
Our analysis is parameterized by user-supplied indexed graph grammars --- a novel extension of context-free graph grammars.
We implemented our approach and successfully applied it to common algorithms on AVL trees, lists, and combinations thereof.

\bibliographystyle{splncs03}
\bibliography{bibliography}

\clearpage
\appendix
\section{Appendix}
The appendix contains missing proofs, detailed formalizations of the concrete and abstract semantics, and further details regarding the implementation.
It is structured as follows:

\begin{itemize}
    \item Appendix~\ref{app:sec-tutorial} is a brief tutorial explaining how \textsc{Attestor} is installed and executed. 
          In particular, we show how our experimental results (see Section~\ref{sec:impl}) can be reproduced.
          Moreover, we briefly describe how results can be graphically explored.
    \item Appendix~\ref{app:sec-concrete-semantics} formally defines a simple imperative programming language $\Progs$ together with its concrete semantics $\PSem{.}$ defined on indexed heap configurations (see Section~\ref{sec:analysis}).
    \item Appendix~\ref{app:excursus} provides the missing formal definition of isomorphic indexed heap configurations.
    \item Appendix~\ref{app:sec-abstract-semantics} formally defines the abstract semantics of $\Progs$ programs that was informally introduced in Section~\ref{sec:analysis}.
    \item Appendix~\ref{app:sec-ig-props} contains the proof of Theorem~\ref{thm:ig-props}, i.e. properties of indexed graph grammars.
    \item Appendix~\ref{app:sec-sound} formalizes that our analysis is \emph{sound} (Theorem~\ref{thm:sound}).
    \item Appendix~\ref{app:sec-local} formalizes that our analysis allows for \emph{local reasoning}. 
    \item Appendix~\ref{app:sec-backwardconfluent-expressiveness} formally shows that not every language of an IG can also be expressed by a backward confluent IG (see Section~\ref{sec:backwardconfluent}).
    \item Appendix~\ref{app:sec-backwardconfluent-inclusion} formally proves that language inclusion is decidable for backward confluent IGs, see Theorem~\ref{thm:backwardconfluent-inclusion}.
    \item Appendix~\ref{app:sec-cfg-props} contains the missing proofs showing that index abstraction and concretization preserves the desirable properties of IGs (see Section~\ref{sec:global}).
    \item Appendix~\ref{app:sec-cfg-analysis} formally defines the modified analysis that additionally uses index abstraction (see Section~\ref{sec:global}).
    \item Appendix~\ref{app:grammars} contains additional examples of indexed graph grammars.
\end{itemize}

In case of publication, the appendix will be made available online as a separate technical report.

\subsection{Tutorial: Reproducing Experimental Results with \textsc{Attestor}}\label{app:sec-tutorial}

This section is a brief tutorial on reproducing our experimental results.
Furthermore, a more detailed table with all benchmark results is found at the end of this tutorial.

\subsubsection{System Requirements} In order to use \textsc{Attestor}, please first make sure that the following software is installed:
\begin{itemize}
        \item Java JDK 1.8 (see \url{http://www.oracle.com/technetwork/java/javase/downloads/jdk8-downloads-2133151.html})
        \item Apache Maven (see \url{https://maven.apache.org/})
        \item Git (see \url{https://git-scm.com/})
\end{itemize}
Furthermore, notice that an active internet connection is required during installation as maven will automatically download additional required packages.

\subsubsection{Reproducing Experiments} 
A bundle of \textsc{Attestor} and our experiments is obtained as follows:
\begin{verbatim}
  git clone --branch v0.3.5-SEFM2018 \
  https://github.com/moves-rwth/attestor-examples.git
\end{verbatim}
We provide a shell script that automatically installs the bundle, executes all experiments and generates a latex document with the results.
To use the script, please run the following inside of the cloned repository:
\begin{verbatim}
  chmod +x run.sh
  ./run.sh
  pdflatex benchmark-results.tex
\end{verbatim}
Alternatively, e.g. if shell scripts cannot be executed on your operating system, 
all experiments can be executed using maven:
\begin{verbatim}
 mvn clean install exec:exec@run
\end{verbatim}
Notice that no latex document will be generated without using the shell script. 
All relevant data are displayed on the console though and are additionally exported to \texttt{benchmark-results.csv}.

\begin{sidewaysfigure}[p]
    \includegraphics[width=\textwidth]{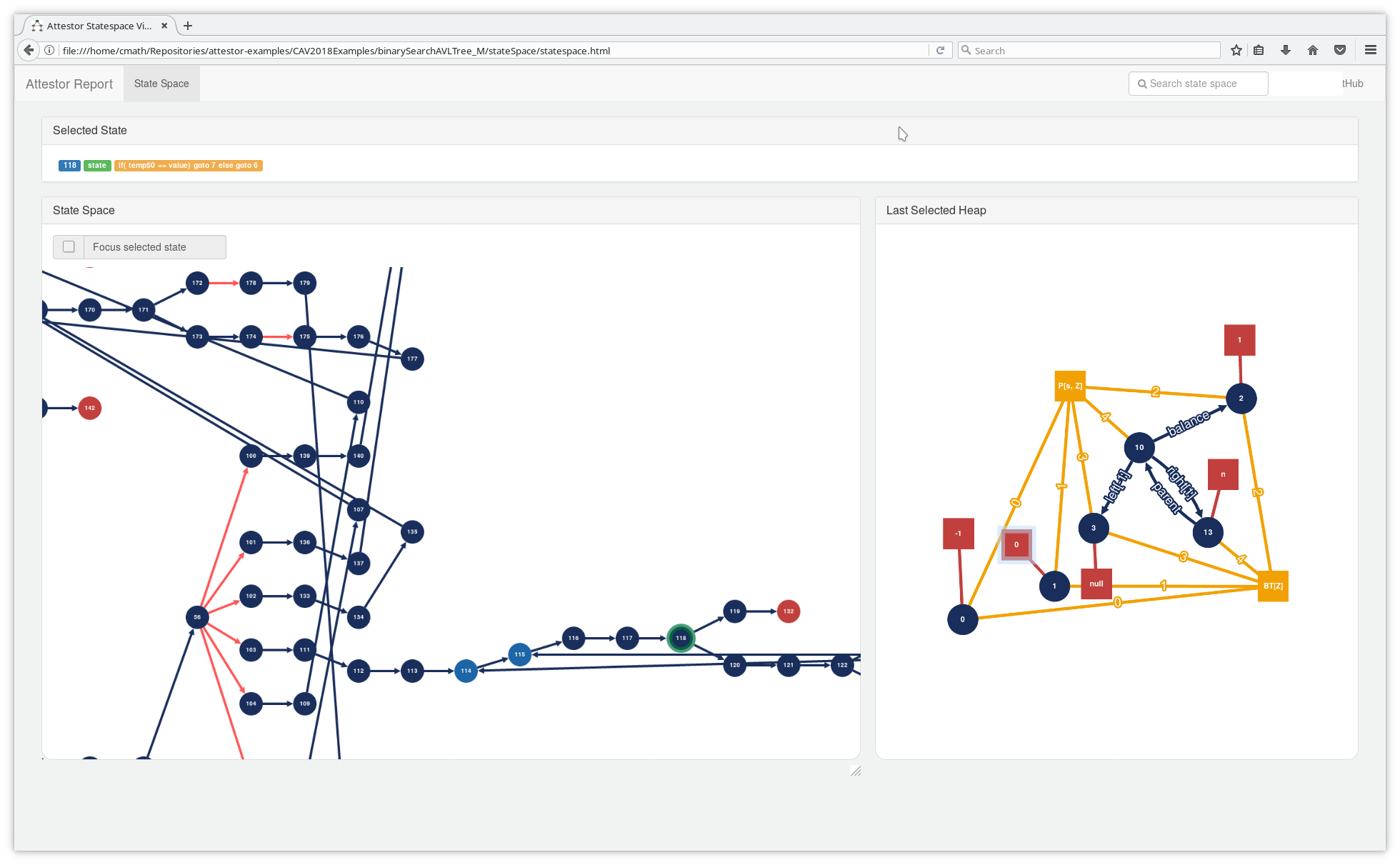}
    \caption{Screenshot of graphical state space exploration}
    \label{screenshot}
\end{sidewaysfigure}

\subsubsection{Graphical Exploration}
It is possible to generate a \enquote{report} to graphically explore generated state spaces.
To this end, install the \textsc{Attestor} bundle as described above. Then execute:
\begin{verbatim}
  mvn clean install exec:exec@runWithReport
\end{verbatim}
Notice that this will take considerably more time than just running all benchmarks.
For each benchmark, an additional directory will be created containing a website that allows to explore the state space in your web browser.
The names of created directories match the names used in the benchmark settings files.

For example, consider the AVL binary search benchmark.
Assuming you cloned the \texttt{attestor-examples} repository, the configuration file of this benchmark is located in
\begin{verbatim}
attestor-examples/AVLTree/configuration/settings/binary_search.json
\end{verbatim}
The corresponding website to graphically explore the state space is found in
\begin{verbatim}
  attestor-examples/AVLTree/binary_search/stateSpace
\end{verbatim}
To explore the state space, first start a web server by running (inside of the created directory)
\begin{verbatim}
  java -jar attestor-report
\end{verbatim}
On most operating systems, this should open your default web browser and display the report (otherwise start your web browser and go to \url{http://localhost:8080}).
Notice that you can also open \texttt{index.html} directly if you are using Firefox.
A screenshot of the graphical state space exploration (for a binary search in an AVL tree) is depicted in Figure~\ref{screenshot}.
The left pane depicts the state space. The right pane depicts the currently selected heap configuration.
Moreover, the topmost pane displays further information about the currently selected state, e.g. the corresponding program statement to be executed and its atomic propositions to be considered for model-checking.

\subsubsection{Installing Attestor}
The version of \textsc{Attestor} used in this paper is available on GitHub\footnote{\url{https://github.com/moves-rwth/attestor/releases/tag/v0.3.5-SEFM2018}}
and maven central.\footnote{\url{http://repo1.maven.org/maven2/de/rwth/i2/attestor/0.3.5-SEFM2018/}}
To install \textsc{Attestor}, please execute the following in your terminal:
\begin{verbatim}
  git clone --branch v0.3.5-SEFM2018 \
  https://github.com/moves-rwth/attestor.git
  mvn clean install
\end{verbatim}
A corresponding jar file including all dependencies will be generated in the target directory. Please confer \url{https://github.com/moves-rwth/attestor/wiki} for further details.
Example files are provided in the attestor-examples repository (see \enquote{Reproducing Experiments} above).

\subsubsection{Full Benchmark Results}

Table~\ref{table:full} below depicts the results for our full collection of benchmarks.
Each benchmark is marked with additional flags indicating the verified properties.
More precisely,
\begin{itemize}
        \item (M) means that we checked \emph{memory safety}.
        \item (S) means that we checked \emph{shape properties}, e.g. that a data structure is indeed an AVLTree.
        \item (C) means that we checked \emph{correctness properties}, e.g. that the head pointer is placed correctly upon termination.
        \item (V) means that we checked whether every node in the initial heap has been \emph{visited}, e.g. all elements of a list have been traversed.
        \item (N) means that we checked whether the \emph{neighbourhood} of every node in the initial data structure is the same upon termination, e.g.  the output data structure coincides with the input data structure.
        \item (X) means that a property is violated and we successfully constructed a non-spurious counterexample.
        \item (Y) means that a property is violated, but all counterexamples are spurious.
\end{itemize}
For each benchmark, we consider
\begin{itemize}
   \item the total number of generated states (\#States),
   \item state space generation time in seconds (SSG),
   \item model-checking time in seconds (MC),
   \item total verification time (including SSG and MC) in seconds (Verif.), and
   \item the total runtime (including parsing) in seconds (Total).
\end{itemize}


Further information about each individual benchmark is found in the examples repository (see \enquote{Reproducing Experiments}).

\begin{longtable}{|l|r|r|r|r|r|}
\caption{Full Benchmark Results}\label{table:full}
\endfirsthead
\multicolumn{6}{@{}l}{\ldots continued}\\[1ex]\hline
\bfseries Benchmark & \bfseries \#States & \bfseries SSG & \bfseries MC & \bfseries Verif. & \bfseries Total
\\ \hline \hline
\endhead 
\hline
\endfoot
        \hline
        \bfseries Benchmark & \bfseries \#States & \bfseries SSG & \bfseries MC & \bfseries Verif. & \bfseries Total
        \\ \hline \hline
        \csvreader[head to column names]{benchmark-results.csv}{}
        {\name~(\property) & \states & \generation & \mc & \verification & \total \\}
               & & & & & \\\hline
\end{longtable}

\subsection{Programming Language \& Concrete Semantics}\label{app:sec-concrete-semantics}

\subsubsection{Programming Language}

For the sake of concreteness, we present our analysis in terms of a small heap-manipulating programming language.
Note that our implementation actually supports a a richer set of programming language features, such as (potentially recursive) procedure calls, that have been omitted to improve readability. 

Let $x$ be a variable taken from $\Var$ and $f \in \Fields$ be a field. 
Then the syntax of \emph{programs} $\Progs$ ($\pP$), \emph{Boolean expressions} $\BExps$ ($\pB$), and \emph{pointer expressions} $\PExps$ ($\pE$) is defined by the 
context-free grammar in Figure~\ref{fig:syntax}.

\begin{figure}[h]
{
\begin{align*}
\pP ~::=~ &
\Ass{x}{\pE} \,|\,
\Ass{\F{x}{f}}{\pE} \,|\,
\New{x} \,|\,
\Seq{\pP}{\pP} 
\,|\, \Skip
\tag{$\Progs$} 
\\
 \,|\,
 & \Ite{\pB}{\pP}{\pP} \,|\,
\WhileDo{\pB}{\pP}
\\
\pB ~::=~ & 
\Equals{\pE}{\pE} \,|\,
\Con{\pB}{\pB} \,|\,
\Not{\pB}
\tag{$\BExps$}
\\
\qquad\qquad\quad
\pE ~::=~ &
\Null \,|\,
x  \,|\,
\F{x}{f} 
\tag{$\PExps$}
\end{align*}
}
\caption{Syntax of $\Progs$-programs}
\label{fig:syntax}
\end{figure}

The meaning of $\Progs$-programs is straightforward.
For instance, an assignment 
$\Ass{\F{x}{f}}{\Null}$ sets the $f$-field of the record referenced by variable $x$ to the location  $\Null$.
Formally, the semantics of $\Progs$-programs is given by a function 
\[ \PSem{.} : \Progs \to \HC \to \HC \]
that takes a program $P$ and a program state, i.e. an HC $\iH$, and yields an HC capturing the effect of executing $P$ on $\iH$ (if defined).
In Figure~\ref{fig:semantics} the transformer $\PSem{.}$ is defined inductively on the structure of $\Progs$-programs.
For example, the semantics of an assignment $\Ass{\F{x}{f}}{y}$ first determines the nodes $\ESem{x}$ and $\ESem{y}$ attached to variable edges $x$ and $y$, respectively.
After that, existing outgoing edges of $\ESem{x}$ labeled with $f$ are removed.
Finally, a new edge $e$ from $\ESem{x}$ to $\ESem{y}$ with label $f$ is added.
The semantics of the control-flow structures is standard. 

In the following, we formalize the auxiliary functions and graph transformations used in the definition of the concrete semantics.

\begin{figure}[t]
\begin{align*}
& \PSem{\Ass{x}{\pE}} =
\SetVar{x}{\ESem{\pE}}
\qquad
\PSem{\Ass{\F{x}{f}}{\pE}} =
\SetField{x}{f}{\ESem{\pE}}
\\
& \PSem{\New{x}} =
\AddVar{x}
\qquad
\PSem{\Skip} =
\Id{\HC}
\qquad
\PSem{\Seq{\pP_1}{\pP_2}} =
\PSem{\pP_1} \FComp \PSem{\pP_2}
\\
& \PSem{\Ite{\pB}{\pP_1}{\pP_2}} =
\SemIte{\BSem{\pB}}{\PSem{\pP_1}}{\PSem{\pP_2}}
\\
& \PSem{\WhileDo{\pB}{\pP}} =
\Lfp \, Y \,.\, \left( 
Y \mapsto 
\SemIte{\BSem{\pB}}{\PSem{\pP_1}\FComp Y}{\PSem{\Skip}}
\right)
\end{align*}
\caption{
The semantics of $\Progs$-programs. 
$\ESem{\pE}$ is a partial function that evaluates expression $\pE$ for a given HC $\iH$ to a corresponding node in $\iH$.
The semantics of Boolean conditions $\pB$ is given by the partial function $\BSem{\pB} : \HC \to \{\True,\False\}$. 
Further, the auxiliary functions $\SetVar{x}{\ESem{\pE}}$, $\SetField{x}{f}{\ESem{\pE}}$, and $\AddVar{x}$ are HC transformers of type $\HC \to \HC$ that (re)set variable $x$ to node $\ESem{\pE}$, (re)set the edge labeled with $f$ from the node $\ESem{x}$ to node $\ESem{\pE}$, and add a new node and assign $x$ to it, respectively.
As is standard, the semantics of loops is defined as a least fixed-point, denoted by $\Lfp$.
Formal definitions of all auxiliary functions are provided in the corresponding subsection. 
}
\label{fig:semantics}
\end{figure}

\subsubsection{Auxiliary Functions used within Concrete Semantics}

\paragraph{Conditional Function}

Let $B : S \to \{\True,\False\}$ and $f,g : S \to S$ be partial functions. 
Then the conditional function 
\begin{align*}
  \SemIte{B}{f}{g} : S \to S
\end{align*}
is the partial function given by:
\begin{align*}
  \SemIte{B}{f}{g}(s) ~=~
  \begin{cases}
    f(s) & ~\text{if}~ B(s) = \True \\   
    g(s) & ~\text{if}~ B(s) = \False\\   
    \pBot & ~\text{otherwise}
  \end{cases}
\end{align*}

\paragraph{Semantics of Pointer Expressions}

Let $\iH \in \IHC$.
Moreover, let $x \in \Var$ and $f \in \Fields$.
\begin{align*}
  \ESem{\Null}(\iH) ~=~ & \vNull \\
  \ESem{x}(\iH) ~=~ &
  \begin{cases}
    v & ~\text{if}~ \exists e \in \E_{\iH}^{\{x\}} : \Att_{\iH}(e)(1) = v \\
    \pBot & ~\text{otherwise}
  \end{cases}
  \\
  \ESem{\F{x}{f}}(\iH) ~=~ & 
  \begin{cases}
          v & ~\text{if}~ \ESem{x}(\iH) = u \in \V_{\iH} ~\text{and}~ \exists e \in \E_{\iH}^{\{f\}} : \Att_{\iH}(e) = uv \\
    \pBot & ~\text{otherwise}
  \end{cases}
\end{align*}

\paragraph{Semantics of Boolean Expressions}
\begin{align*}
  \BSem{E_1=E_2}(\iH) ~=~ & 
  \begin{cases}
    \True & ~\text{if}~ \ESem{E_1}(\iH) = \ESem{E_2}(\iH) \neq \pBot \\
    \False & ~\text{if}~ \pBot \neq \ESem{E_1}(\iH) \neq \ESem{E_2}(\iH) \neq \pBot \\
    \pBot & ~\text{otherwise}
  \end{cases}
  \\
  \BSem{B_1 \wedge B_2 } ~=~ & 
  \begin{cases}
    \True & ~\text{if}~ \BSem{B_1} = \BSem{B_2} = \True \\
    \False & ~\text{if}~ \BSem{B_1} = \BSem{B_2} = \False \\
    \pBot & ~\text{otherwise}
  \end{cases}
  \\
  \BSem{\neg B}(\iH) ~=~ & 
  \begin{cases}
    \True & ~\text{if}~ \BSem{B} = \False \\
    \False & ~\text{if}~ \BSem{B} = \True \\
    \pBot & ~\text{otherwise}
  \end{cases}
\end{align*}

\subsubsection{Graph Transformations used within Concrete Semantics}\label{app:concrete}

Let $\iH = (\V,\E,\Lab,\Att,\Ind,\Ext) \in \IHC$. 

\paragraph{$\SetVar{.}{.}$}
Let $e \notin \E$ be a fresh edge.
If $\ESem{\pE} = \pBot$ then $\SetVar{x}{\ESem{\pE}}$ is undefined.
Otherwise, if $\ESem{\pE}(\iH) = u \in \V$, we define:
\begin{align*}
\SetVar{x}{\ESem{\pE}} ~=~ & (
\V, \underbrace{(\E \setminus \E^{\{x\}})}_{=\E'} \cup \{e\}, 
(\Proj{\Lab}{\E'}) \cup \{ e \mapsto x \}, \\
& \quad (\Proj{\Att}{\E'}) \cup \{ e \mapsto u \}, (\Proj{\Ind}{\E'}) \cup \{e \mapsto \iBot\}, \Ext)
\end{align*}

\paragraph{$\SetField{.}{.}{.}$}
Let $e \notin \E$ be a fresh edge.
If $\ESem{\pE} = \pBot$ or $\ESem{x} = \pBot$ then $\SetField{x}{f}{\ESem{\pE}}$ is undefined.
Assume that $\ESem{x} = u \in \V$.
Moreover, let $U = \{ e' \in \E ~|~ \Lab(e') = f ~\text{and}~ \Att(e')(1) = u \}$. 
If $U = \emptyset$ then $\SetField{x}{f}{\ESem{\pE}}$ is undefined.
Otherwise, if $\ESem{E}(\iH) = v \in \V$, we define:
\begin{align*}
\SetField{x}{f}{\ESem{\pE}} ~=~ & (
\V, \underbrace{(\E \setminus U)}_{\E'} \cup \{ e \},
(\Proj{\Lab}{\E'}) \cup \{ e \mapsto f \}, \\
& \quad (\Proj{\Att}{\E'}) \cup \{ e \mapsto uv \}, (\Proj{\Ind}{\E'}) \cup \{e \mapsto \iBot\}, \Ext)
\end{align*}

\paragraph{$\AddVar{.}$}
Let $v \notin \V$ be a fresh node and $e \notin \E$ be a fresh edge. Then:
\begin{align*}
\AddVar{x} ~=~ & (
\V \cup \{ v \}, 
\underbrace{(\E \setminus \E^{\{x\}})}_{=\E'}\cup \{e\}, 
(\Proj{\Lab}{\E'}) \cup \{ e \mapsto x \}, \\
& \quad (\Proj{\Att}{\E'}) \cup \{ e \mapsto v \}, (\Proj{\Ind}{\E'}) \cup \{e \mapsto \iBot\}, \Ext)
\end{align*}

\subsection{Definition of Isomorphic Indexed Heap Configurations}
\label{app:excursus}
\begin{definition}
\label{def:isomorphism}
  Let $\iH,\iK$ be IHCs. 
  Then $\iH$ and $\iK$ are \emph{isomorphic}, written $\iH \Iso \iK$, if and only if there exist
  bijective functions $f : \V_{\iH} \to \V_{\iK}$ and $g : \E_{\iH} \to \E_{\iK}$ such that
  \begin{itemize}
    \item for each $e \in \E_{\iH}$, $\Lab_{\iH}(e) = \Lab_{\iK}(g(e))$ and 
      $\Ind_{\iH}(e) = \Ind_{\iK}(g(e))$,  
    \item for each $e \in \E_{\iH}$, $f(\Att_{\iH}(e)) = \Att_{\iK}(g(e))$, and
    \item $f(\Ext_{\iH}) = \Ext_{\iK}$.
  \end{itemize}
\end{definition}

\subsection{Abstract Semantics}\label{app:sec-abstract-semantics}

In this section, we inductively define the abstract semantics of $\Progs$-programs, which have been formally defined in Appendix~\ref{app:sec-concrete-semantics}.

\begin{align*}
\PASem{P} ~=~ & 
\Mater{\iG}{P} \FComp \PSem{P} \FComp \Canon{\iG}{P}
\\
& \text{where}~ P \in \{\Ass{x}{\pE}, \Ass{\F{x}{\pF}}{\pE}, \New{x}, \Skip \}
\\
\PASem{\Seq{\pP_1}{\pP_2}} ~=~ &
\PASem{\pP_1} \FComp \PASem{\pP_2}
\\
\PASem{\Ite{\pB}{\pP_1}{\pP_2}} ~=~ &
\Mater{\iG}{\pB}
\FComp \SemIte{\BSem{\pB}}{\PASem{\pP_1}}{\PASem{\pP_2}}
\\
\PASem{\WhileDo{\pB}{\pP}} ~=~ &
\Lfp \,.\, \left( Y \mapsto 
\SemIte{\BSem{\pB}}{\PASem{\pP}\FComp Y}{\PASem{\Skip}}
\right)
\end{align*}

Here $\Lfp$ denotes the least fixed point operator.
Note that evaluating a guard $\pB$ might require materialization.
Since $\pB$ is formally not a program, we write $\Mater{\iG}{\pB}$ as a shortcut for
\[ \Mater{\iG}{\Ite{\pB}{\Skip}{\Skip}}~,\]
which ensures that $\pB$ can be evaluated without affecting the program state.

%
%

\subsection{Properties of Indexed Graph Grammars (Proof of Theorem~\ref{thm:ig-props})}\label{app:sec-ig-props}

As before, we write $\iH \Iso \iK$ to denote that $\iH$ and $\iK$ are isomorphic (see Appendix~\ref{app:excursus}).
Moreover, let $T = \Types \setminus N$ the set of terminal labels.

Given a natural number $n \in \Nats$, we write $\iH \iDerive{\iG}^{n} \iK$ to denote that $\iG$ derives $\iK$ from $\iH$ in exactly $n$ steps. Formally,
\begin{itemize}
  \item $\iH \iDerive{\iG}^{0} \iH$, and
  \item $\iH \iDerive{\iG}^{n+1} \iK$ iff $\exists \iR : \iH \iDerive{\iG} \iR$ and $\iR \iDerive{\iG}^{n} \iK$.
\end{itemize}
Then $\iDerive{\iG}^{*} ~=~ \bigcup_{n \in \Nats} \iDerive{\iG}^{n}$.

\bigskip
\noindent\textbf{Theorem~\ref{thm:ig-props}(1).} 
Let $\iG$ be an IG and $\iH,\iK$ be indexed heap configurations. 
Then
$\iH \iDerive{\iG}^{*} \iK$ implies $\Lang{\iG}{\iK} \subseteq \Lang{\iG}{\iH}$.

\begin{proof}
We show for all $n \in \Nats$ that $\iH \iDerive{\iG}^{n} \iK$ implies $\Lang{\iG}{\iH} \subseteq \Lang{\iG}{\iH}$.
By induction on $n \in \Nats$.

\noindent\textbf{I.B.}
For $n = 0$ we have
\begin{align*}
& \iH \iDerive{\iG}^{0} \iK \\
~\Leftrightarrow~ & \iH \Iso \iK \tag{Def. $\iDerive{\iG}^{0}$} \\
~\Rightarrow~ & \Lang{\iG}{\iH} = \Lang{\iG}{\iK}.
\end{align*}

\noindent\textbf{I.H.}
Assume for an arbitrary, but fixed $n \in \Nats$ that $\iH \iDerive{\iG}^{n} \iK$ implies $\Lang{\iG}{\iH} \subseteq \Lang{\iG}{\iH}$.

\noindent\textbf{I.S.}
For $n \mapsto n + 1$ we have
\begin{align*}
& \iH \iDerive{\iG}^{n+1} \iK \\
~\Leftrightarrow~ & 
\exists \iR : \iH \iDerive{\iG} \iR ~\text{and}~ \iR \iDerive{\iG}^{n} \iK 
\tag{Def. $\iDerive{\iG}^{n+1}$} \\
~\Rightarrow~ & 
\exists \iR : \iH \iDerive{\iG} \iR ~\text{and}~ \Lang{\iG}{\iK} \subseteq \Lang{\iG}{\iR} 
\tag{I.H.}
\end{align*}
It then suffices to show that $\Lang{\iG}{\iR} \subseteq \Lang{\iG}{\iH}$ holds as
\[
  \Lang{\iG}{\iK} \subseteq \Lang{\iG}{\iR}
  ~\text{and}~ 
  \Lang{\iG}{\iR} \subseteq \Lang{\iG}{\iH} 
  ~\text{implies}~ \Lang{\iG}{\iK} \subseteq \Lang{\iG}{\iH}.
\]
The remaining proof obligation is shown as follows:
\begin{align*}
& \Lang{\iG}{\iR} \\
~=~ & 
\{ F ~|~ \iR \iDerive{\iG}^{*} F ~\text{and}~ \E_{F} = \E_{F}^{T} \} 
\tag{Def.~\ref{def:language}} \\
~=~ & 
\{ F ~|~ \iH \iDerive{\iG} \iR \iDerive{\iG}^{*} F ~\text{and}~ \E_{F} = \E_{F}^{T} \} 
\tag{$\iH \iDerive{\iG} \iR$ possible by assumption} \\
~\subseteq~ & 
\{ F ~|~ \iH \iDerive{\iG}^{*} F ~\text{and}~ \E_{F} = \E_{F}^{T} \} 
\\
~=~ & 
\Lang{\iG}{\iH}
\tag{Def.~\ref{def:language}}
\end{align*}
\qed
\end{proof}

\bigskip
\noindent\textbf{Theorem~\ref{thm:ig-props}(2).} 
Let $\iG$ be an IG and $\iH,\iK$ be indexed heap configurations. 
Then
\[
\Lang{\iG}{\iH} = 
\begin{cases}
\{ \iH \} &~\text{if}~ \neg \exists e \in \E_{\iH} : \Lab_{\iH}(e) \in N \\
\bigcup_{\iH \iDerive{\iG} \iK} \Lang{\iG}{\iK} &~\text{otherwise.}
\end{cases}
\]
\\
\begin{proof}
\noindent\textbf{Case 1:} 
Assume that $\neg \exists e \in \E_{\iH} : \Lab_{\iH}(e) \in N$.
By Def.~\ref{def:derive}, we know that $\iH \niDerive{\iG}$.
Then
\begin{align*}
& \Lang{\iG}{\iH} \\
~=~ &
\{ \iK ~|~ \iH \iDerive{\iG}^{*} \iK ~\text{and}~ \E_{\iK} = \E_{\iK}^{T} \}
\tag{Def.~\ref{def:language}} \\
~=~ & 
\{ \iK ~|~ \iH \iDerive{\iG}^{0} \iK \}
\tag{$\iH \niDerive{\iG}$, $\iDerive{\iG}^{*} = \bigcup_{n \in \Nats} \iDerive{\iG}^{n}$} \\
~=~ &
\{ \iH \}.
\tag{Def. $\iDerive{\iG}^{0}$, $N,T$ disjoint}
\end{align*}

\noindent\textbf{Case 2:} 
Assume that $\exists e \in \E_{\iH} : \Lab_{\iH}(e) \in N$.
By Def.~\ref{def:language}, $\iH \notin \Lang{\iG}{\iH}$.
If $\Lang{\iG}{\iH} = \emptyset$, there is nothing to show.
Thus assume $\Lang{\iG}{\iH} \neq \emptyset$.
By Def.~\ref{def:derive}, this means that there exists a $\iK$ such that $\iH \iDerive{\iG} \iK$.
Then
\begin{align*}
& \Lang{\iG}{\iH} \\
~=~ &
\{ \iR ~|~ \iH \iDerive{\iG}^{*} \iR ~\text{and}~ \E_{\iR} = \E_{\iR}^{T} \}
\tag{Def.~\ref{def:language}} \\
~=~ &
\{ \iR ~|~ \exists n \in \Nats : \iH \iDerive{\iG}^{n+1} \iR ~\text{and}~ \E_{\iR} = \E_{\iR}^{T} \}
\tag{$\iH \notin \Lang{\iG}{\iH}$ and $\iH \iDerive{\iG}^{0} \iR$ iff $\iH \Iso \iR$} \\
~=~ &
\{ \iR ~|~ \exists \iK \, \exists n \in \Nats : \iH \iDerive{\iG} \iK \iDerive{\iG}^{n} \iR ~\text{and}~ \E_{\iR} = \E_{\iR}^{T} \}
\tag{Def. $\iDerive{\iG}^{n+1}$} \\
~=~ &
\bigcup_{\iH \iDerive{\iG} \iK} \{ \iR ~|~ \exists n \in \Nats : \iK \iDerive{\iG}^{n} \iR ~\text{and}~ \E_{\iR} = \E_{\iR}^{T} \} 
\\
~=~ &
\bigcup_{\iH \iDerive{\iG} \iK} \{ \iR ~|~ \iK \iDerive{\iG}^{*} \iR ~\text{and}~ \E_{\iR} = \E_{\iR}^{T} \} 
\tag{Def.~$\iDerive{\iG}^{*}$} \\
~=~ &
\bigcup_{\iH \iDerive{\iG} \iK} \Lang{\iG}{\iK}.
\tag{Def.~\ref{def:language}}
\end{align*}
\qed
\end{proof}

\bigskip
\noindent\textbf{Theorem~\ref{thm:ig-props}(3).} 
Let $\iG$ be an IG and $\iH$ be an indexed heap configuration. 
Then it is decidable whether $\Lang{\iG}{\iH} = \emptyset$.

\begin{proof}
Let $\iG$ be an IG and $\iH$ be an IHC.
We construct an indexed context-free string grammar $\iC$ 
and a string $\rho$ such that
\begin{align*}
\iL{\iG}(\iH) = \emptyset ~\text{if and only if}~ \iL{\iC}(\rho) = \emptyset.
\end{align*}
Since the emptiness problem for indexed context-free string grammars is decidable
(confer Rozenberg, Salomaa: \enquote{Handbook of Formal Languages}, Vol. 2, 1997),
the emptiness problem for IGs is decidable as well.

The grammar $\iC$ is constructed over the same set of terminals $T$, nonterminals $N$ and index symbols $I$ as $\iG$.
Now, let $\iK$ be an IHC with $\E_{\iK} = \{ e_1, \ldots, e_k \}$
We then define the string 
\[ \sigma_{\iK} ~=~ \Lab_{\iK}(e_1)[\Ind_{\iK}(e_1)] \, \Lab_{\iK}(e_2)[\Ind_{\iK}(e_2)] \, \ldots \, \Lab_{\iK}(e_k)[\Ind_{\iK}(e_k)]. \]
Then the grammar $\iC$ is given by the set of rules
\[ \iC ~=~ \{ \Rule{X[\sigma]}{\sigma_{\iK}} ~|~ \Rule{X,\sigma}{\iK} \in \iG \}. \]
Moreover, we set $\rho = \sigma_{\iH}$.
\qed
\end{proof}

\bigskip
\noindent\textbf{Theorem~\ref{thm:ig-props}(4).} 
Let $\iG$ be an IG and $\iH,\iK$ be indexed heap configurations. 
Then
the inverse language $\iLang{\iG}{\iH}$ is non-empty and finite.

\begin{proof}
Recall that 
$\iH \rDerive{\iG} \iK$ holds iff $\iK \iDerive{\iG} \iH$.
Then, since every IG is increasing (see Remark below Definition~\ref{def:derive}), we have
\begin{align*}
0 \leq |\V_{\iK}| + |\E_{\iK}| ~<~ |\V_{\iH}| + |\E_{\iH}|. 
\tag{$\spadesuit$}
\end{align*}
We now show that $\iLang{\iG}{\iH}$ is finite:
\begin{align*}
& \iLang{\iG}{\iH} \\
~=~ & \{ \iK ~|~ \iH \rDerive{\iG}^{*} \iK ~\text{and}~ \iK \nrDerive{\iG} \}
\tag{Def.~\ref{def:language}} \\
~=~ & \{ \iK ~|~ \iK \iDerive{\iG}^{*} \iH ~\text{and}~ \iK \nrDerive{\iG} \}
\tag{Def.~\ref{def:derive}} \\
~=~ & \{ \iK ~|~ \exists n \in \Nats : \iK \iDerive{\iG}^{n} \iH ~\text{and}~ \iK \nrDerive{\iG} \}
\tag{Def. $\iDerive{\iG}^{*}$} \\
~=~ & \{ \iK ~|~ \exists n \leq m : \iK \iDerive{\iG}^{n} \iH ~\text{and}~ \iK \nrDerive{\iG} \}~,
\tag{set $m = |\V_{\iH}| + |\E_{\iH}|$, by $(\spadesuit)$}
\end{align*}
where the last set is finite.

It remains to prove that $\rL{\iC}(\iH)$ is non-empty:
First, note that $\iH \rDerive{\iG}^{0} \iH$ always holds.
Further, by $(\spadesuit)$, after at most $m$ inverse IG derivation steps, i.e. $\bigcup_{0 \leq n \leq m} \rDerive{\iG}^{n}$, no further inverse IG derivation is possible.
Hence, $\rL{\iC}(\iH) \neq \emptyset$.
\qed
\end{proof}

\subsection{Proof of Theorem~\ref{thm:sound} (Soundness)}
\label{app:sec-sound}

\noindent\textbf{Theorem~\ref{thm:sound}.} 
$\forall \pP \in \Progs \,:\, \con{\iG} \FComp \PSem{P} ~\FOrder~ \PASem{P} \FComp \con{\iG}$.

\begin{proof}
By induction on the structure of $\Progs$-programs. 

\noindent\textbf{Base cases.}
Let $P \in \{ \Ass{x}{\pE}, \Ass{\F{x}{\pF}}{\pE}, \New{x}, \Skip \}$.
Then:
\begin{align*}
&
\con{\iG} \FComp \PSem{P}
\\
~=~ &
\iL{\iG} \FComp \PSem{P}
\tag{$\con{\iG} = \iL{\iG}$}\\
~\FOrder~ &
\Mater{\iG}{P} \FComp \PSem{P} \FComp \iL{\iG}
\tag{Def.~\ref{prop:mater}} \\
~\FOrder~ &
\Mater{\iG}{P} \FComp \PSem{P} \FComp \Canon{\iG}{P} \FComp \iL{\iG}
\tag{Def.~\ref{prop:canon}, $\iL{\iG}$ monotone} \\
~=~ &
\PASem{P} \FComp \iL{\iG}
\tag{Appendix~\ref{app:sec-abstract-semantics}}
\\
~=~ &
\PASem{P} \FComp \con{\iG}
\end{align*}

\noindent\textbf{Sequential Composition.}
\begin{align*}
&
\con{\iG} \FComp \PSem{\Seq{P_1}{P_2}}
\\
~=~ &
\iL{\iG} \FComp \PSem{\Seq{P_1}{P_2}}
\tag{$\con{\iG} = \iL{\iG}$}\\
~=~ &
\iL{\iG} \FComp \PSem{P_1} \FComp \PSem{P_2}
\tag{Appendix~\ref{app:sec-concrete-semantics}}
\\
~=~ &
\PASem{P_1} \FComp \iL{\iG} \FComp \PSem{P_2}
\tag{I.H.} \\
~=~ &
\PASem{P_1} \FComp \PASem{P_2} \FComp \iL{\iG}
\tag{I.H.} \\
~=~ &
\PASem{\Seq{P_1}{P_2}} \FComp \iL{\iG}
\tag{Appendix~\ref{app:sec-abstract-semantics}} 
\\
~=~ &
\PASem{\Seq{P_1}{P_2}} \FComp \con{\iG}
\end{align*}

\noindent\textbf{Conditionals.}

Let $Q = \Ite{\pB}{\Skip}{\Skip}$.
By Fig.~\ref{fig:semantics}, we have $\PSem{Q} = \Id{\HC}$.
Moreover, we have 
\begin{align*}
\forall \iH \in \Mater{\iG}{Q} \, \forall \iK \in \iL{\iG}(\iH) : \BSem{\pB}(\iH) = \BSem{\pB}(\iK). \tag{$\spadesuit$}
\end{align*} 
as, by Def.~\ref{prop:mater}, $\Mater{\iG}{Q} \neq \pBot$.
Thus $\pB$ is evaluated in the same way on both graphs.
Then:
\begin{align*}
&
\con{\iG} \FComp \PSem{\Ite{\pB}{P_1}{P_2}}
\\
~=~ &
\iL{\iG} \FComp \PSem{\Ite{\pB}{P_1}{P_2}}
\tag{$\con{\iG} = \iL{\iG}$} \\
~=~ &
\iL{\iG} \FComp \Id{\HC} \FComp \PSem{\Ite{\pB}{P_1}{P_2}}
\\
~=~ &
\iL{\iG} \FComp \PSem{Q} \FComp \PSem{\Ite{\pB}{P_1}{P_2}}
\tag{$\PSem{Q} = \Id{\HC}$}\\
~\FOrder~ &
\Mater{\iG}{Q} \FComp \PSem{Q} \FComp \iL{\iG} \FComp \PSem{\Ite{\pB}{P_1}{P_2}}
\tag{Def.~\ref{prop:mater}} \\
~=~ & 
\Mater{\iG}{Q} \FComp \PSem{Q} \FComp \iL{\iG} \FComp \SemIte{\BSem{\pB}}{\PSem{P_1}}{\PSem{P_2}}
\tag{Fig.~\ref{fig:semantics}} \\
~=~ & 
\Mater{\iG}{Q} \FComp \iL{\iG} \FComp \SemIte{\BSem{\pB}}{\PSem{P_1}}{\PSem{P_2}}
\tag{$\PSem{Q} = \Id{\HC}$}\\
~=~ &
\Mater{\iG}{Q} \FComp \SemIte{\BSem{\pB}}{(\iL{\iG} \FComp \PSem{P_1})}{(\iL{\iG} \FComp \PSem{P_2})}
\tag{by $\spadesuit$} \\
~\FOrder~ &
\Mater{\iG}{Q} \FComp \SemIte{\BSem{\pB}}{(\PASem{P_1} \FComp \iL{\iG})}{(\PASem{P_2} \FComp \iL{\iG})}
\tag{I.H.} \\
~=~ &
\Mater{\iG}{Q} \FComp \SemIte{\BSem{\pB}}{\PASem{P_1}}{\PASem{P_2}} \FComp \iL{\iG}
\\
~=~ &
\PASem{\Ite{\pB}{P_1}{P_2}} \FComp \iL{\iG}
\tag{Appendix~\ref{app:sec-abstract-semantics}} 
\\
~=~ &
\PASem{\Ite{\pB}{P_1}{P_2}} \FComp \con{\iG}
\end{align*}

\noindent\textbf{Loops.}

We use a standard characterization of the semantics of loops as the supremum of its finite unrollings.
Thus, let 
\begin{align*}
  \WhileKDo{\pB}{\pP}{0} ~=~ & \pBot \\
  \WhileKDo{\pB}{\pP}{k+1} ~=~ & \Ite{\pB}{\Seq{\pP}{\WhileKDo{\pB}{\pP}{k}}}{\Skip}
\end{align*}
be the finite unrollings, where $0$ unrollings corresponds to a program that is undefined everywhere, such as $\WhileDo{\True}{\Skip}$.
Then, by standard arguments, we have
\begin{align*}
        \PSem{\WhileDo{\pB}{\pP}} ~=~ & \sup_{k \in \Nats} \PSem{\WhileKDo{\pB}{\pP}{k}},~\text{and} \\
        \PASem{\WhileDo{\pB}{\pP}} ~=~ & \sup_{k \in \Nats} \PASem{\WhileKDo{\pB}{\pP}{k}}.
\end{align*}
To complete the proof we show for all $k \in \Nats$ that
\begin{align*}
        \con{\iG} \FComp \PSem{\WhileKDo{\pB}{\pP}{k}} ~\FOrder~ \PASem{\WhileKDo{\pB}{\pP}{k}} \FComp \con{\iG}.
\end{align*}

\noindent\textbf{I.B.}
For $k = 0$, we have
\begin{align*}
&
\con{\iG} \FComp \PSem{\WhileKDo{\pB}{\pP}{0}} 
\\
~=~ &
\iL{\iG} \FComp \PSem{\WhileKDo{\pB}{\pP}{0}} 
\tag{$\con{\iG} = \iL{\iG}$} \\
~=~ & 
\iL{\iG} \FComp \pBot
\\
~=~ &
\pBot 
\\
~=~ &
\pBot \FComp \iL{\iG}
\\
~=~ &
\pBot \FComp \con{\iG}
\end{align*}

\noindent\textbf{I.S.}
For $k \mapsto k+1$, we have
\begin{align*}
&
\con{\iG} \FComp \PSem{\WhileKDo{\pB}{\pP}{k+1}} 
\\
~=~ &
\iL{\iG} \FComp \PSem{\WhileKDo{\pB}{\pP}{k+1}} 
\tag{$\con{\iG} = \iL{\iG}$} \\
~=~ &
\iL{\iG} \FComp \PSem{\Ite{\pB}{\Seq{\pP}{\WhileKDo{\pB}{\pP}{k}}}{\Skip}}
\tag{Def. $\WhileKDo{.}{.}{k+1}$} \\
~\FOrder~ & \Mater{\iG}{Q} \\
& \FComp \SemIte{\BSem{\pB}}{(\iL{\iG} \FComp \PSem{\Seq{\pP}{\WhileKDo{\pB}{\pP}{k}}})}{(\iL{\iG} \FComp \PSem{\Skip})}
\tag{analogously to the case of conditionals, where $Q = \Ite{\pB}{\Skip}{\Skip}$} \\
~=~ & \Mater{\iG}{Q} \\
          & \FComp \SemIte{\BSem{\pB}}{(\iL{\iG} \FComp \PSem{\pP} \FComp \PSem{\WhileKDo{\pB}{\pP}{k}})}{(\iL{\iG} \FComp \PSem{\Skip})}
\tag{Fig.~\ref{fig:semantics}} \\
~\FOrder~ & \Mater{\iG}{Q} \\
          & \FComp \SemIte{\BSem{\pB}}{(\PASem{\pP} \FComp \iL{\iG} \FComp \PSem{\WhileKDo{\pB}{\pP}{k}})}{(\PASem{\Skip} \FComp \iL{\iG})}
\tag{outer I.H. on $\Skip$, $\pP$} \\
~\FOrder~ & \Mater{\iG}{Q} \\
          & \FComp \SemIte{\BSem{\pB}}{(\PASem{\pP} \FComp \PASem{\WhileKDo{\pB}{\pP}{k}} \FComp \iL{\iG})}{(\PASem{\Skip} \FComp \iL{\iG})}
\tag{inner I.H. on $\WhileKDo{\pB}{\pP}{k}$} \\
~=~ & \Mater{\iG}{Q} \\
    & \FComp \SemIte{\BSem{\pB}}{\PASem{\Seq{\pP}{\WhileKDo{\pB}{\pP}{k}}}}{\PASem{\Skip}} \FComp \iL{\iG}
\\
~=~ &
\PASem{\Ite{\pB}{\Seq{\pP}{\WhileKDo{\pB}{\pP}{k}}}{\Skip}} \FComp \iL{\iG}
\tag{Semantics of $\Ite{.}{.}{.}$} \\
~=~ & 
\PASem{\WhileKDo{\pB}{\pP}{k+1}} \FComp \iL{\iG}
\tag{Def. $\WhileKDo{\pB}{\pP}{k+1}$} \\
~=~ & 
\PASem{\WhileKDo{\pB}{\pP}{k+1}} \FComp \con{\iG}
\end{align*}
\qed
\end{proof}

\subsection{Local Reasoning}
\label{app:sec-local}

Our analysis enjoys a local reasoning property that is similar to the frame rule in separation logic.
In order to formulate this property, we write $\Mod{\pP}{\iH}$ to denote the set of all nodes and edges that are added or deleted when running $\pP$ on $\iH$.

More precisely, the function $\Mod{.}{.}$ is defined inductively on the structure of $\Progs$ programs (see Appendix~\ref{app:sec-concrete-semantics}).
For the base cases, let $\iH \in \IHC$ and $\iK = \PSem{\pP}(\iH)$.
Then, $\Mod{\pP}{\iH}$ is given by:
\begin{align*}
        \Mod{\Ass{x}{\pE}}{\iH} ~=~ & (\E_{\iH} \setminus \E_{\iK}) \cup (\E_{\iK} \setminus \E_{\iH}) \\ 
        \Mod{\Ass{\F{x}{f}}{\pE}}{\iH} ~=~ & (\E_{\iH} \setminus \E_{\iK}) \cup (\E_{\iK} \setminus \E_{\iH}) \\ 
        \Mod{\New{x}}{\iH} ~=~ & (\V_{\iK} \setminus \V_{\iH}) \cup (\E_{\iH} \setminus \E_{\iK}) \cup (\E_{\iK} \setminus \E_{\iH}) \\ 
        \Mod{\Skip}{\iH} ~=~ & \emptyset
\end{align*}

The composite cases are defined as follows:
\begin{align*}
        \Mod{\Seq{\pP_1}{\pP_2}}{\iH} ~=~ & \Mod{\pP_1}{\iH} \cup \Mod{\pP_2}{\PSem{\pP}{\iH}} \\
        \Mod{\Ite{\pB}{\pP_1}{\pP_2}}{\iH} ~=~ & \Mod{\pP_1}{\iH} \cup \Mod{\pP_2}{\iH} \\
        \Mod{\WhileDo{\pB}{\pP}}{\iH} ~=~ & \bigcup_{k \in \Nats} \Mod{\WhileKDo{\pB}{\pP}{k}}{\iH}~,
\end{align*}
where $\WhileKDo{\pB}{\pP}{k}$ denotes the $k$-th loop unrolling of the loop $\WhileDo{\pB}{\pP}$.
Formally,
\begin{align*}
    \WhileKDo{\pB}{\pP'}{0} ~=~ & \pBot \\
    \WhileKDo{\pB}{\pP'}{k+1} ~=~ & \Ite{\pB}{\Seq{\pP'}{\WhileKDo{\pB}{\pP'}{k}}}{\Skip}~,
\end{align*}
where $0$ unrollings corresponds to a program that is undefined everywhere, such as $\WhileDo{\True}{\Skip}$.

Moreover, $\iH \cup \iR$ denotes the componentwise union of two (not necessarily disjoint) IHCs $\iH$ and $\iR$.
To be precise, we define $\iH \cup \iR = (V_{\iH} \cup \V_{\iR}, \E_{\iH} \cup \E_{\iR}, \Lab_{\iH} \cup \Lab_{\iR}, \Att_{\iH} \cup \Att_{\iR}, \Ind_{\iH} \cup \Ind_{\iR}, \Ext_{\iH})$. Moreover, if $\Ext_{\iH} \neq \Ext_{\iR}$ or $(\iH \cup \iR) \notin \IHC$, we set $\iH \cup \iR$ to be undefined.

\begin{theorem}[Local Reasoning]\label{thm:local}
    Let $\pP \in \Progs$ and $\iH,\iR$ be IHCs. Then
    $\PSem{\pP}(\iH) = \iK$ and $\Mod{\pP}{\iH} \cap (\V_{\iR} \cup \E_{\iR}) = \emptyset$ implies $\PSem{\pP}(\iH \cup \iR) = \iK \cup \iR$.
\end{theorem}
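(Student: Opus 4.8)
The plan is to proceed by structural induction on the program $\pP$, exactly mirroring the inductive definition of both $\PSem{.}$ (Figure~\ref{fig:semantics}, Appendix~\ref{app:sec-concrete-semantics}) and $\Mod{.}{.}$ (Appendix~\ref{app:sec-local}). The key observation driving every case is that the concrete semantics only ever inspects and rewrites the nodes and edges recorded in $\Mod{\pP}{\iH}$, so adjoining a disjoint frame $\iR$ cannot interfere with the computation. Throughout I would keep the hypothesis $\Mod{\pP}{\iH} \cap (\V_{\iR} \cup \E_{\iR}) = \emptyset$ available and track how it specializes in each inductive case.

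First I would dispatch the four base cases. For $\Skip$ we have $\Mod{\Skip}{\iH} = \emptyset$ and $\PSem{\Skip} = \Id{\HC}$, so the claim is immediate. For $\Ass{x}{\pE}$, $\Ass{\F{x}{\pF}}{\pE}$, and $\New{x}$ I would unfold the definitions of $\SetVar{x}{\ESem{\pE}}$, $\SetField{x}{f}{\ESem{\pE}}$, and $\AddVar{x}$ from Appendix~\ref{app:concrete}. The crucial sub-lemma here is that pointer- and Boolean-expression evaluation is \emph{stable under disjoint extension}: since $x$ occurs at most once in any IHC and $\iR$ shares no edges with the modified part, $\ESem{\pE}(\iH \cup \iR) = \ESem{\pE}(\iH)$ and likewise the unique $f$-edge out of $\ESem{x}$ is the same in $\iH$ and in $\iH \cup \iR$. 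Given this, each transformer deletes exactly the edges in $\E_{\iH} \setminus \E_{\iK}$ (all lying inside $\Mod{\pP}{\iH}$, hence disjoint from $\iR$) and adds the fresh edge/node, so the componentwise union splits cleanly: $\PSem{\pP}(\iH \cup \iR) = \iK \cup \iR$. One must check that $\iK \cup \iR$ is a well-defined IHC, i.e.\ the sanity conditions (unique variable occurrence, at most one outgoing $f$-edge) are preserved — this follows because $\iR$ touches neither the rewritten variable edge nor the rewritten field.

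Next the composite cases. Sequential composition uses the definition $\Mod{\Seq{\pP_1}{\pP_2}}{\iH} = \Mod{\pP_1}{\iH} \cup \Mod{\pP_2}{\PSem{\pP_1}(\iH)}$: the disjointness hypothesis restricts to each conjunct, so the inner induction hypothesis applies first to $\pP_1$ (yielding $\PSem{\pP_1}(\iH \cup \iR) = \PSem{\pP_1}(\iH) \cup \iR$) and then to $\pP_2$ on the intermediate state. The conditional is routine once one notes $\BSem{\pB}$ evaluates identically on $\iH$ and $\iH \cup \iR$ (expression stability again), so the same branch is taken, and disjointness for that branch follows from $\Mod{\Ite{\pB}{\pP_1}{\pP_2}}{\iH} = \Mod{\pP_1}{\iH} \cup \Mod{\pP_2}{\iH}$.

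The main obstacle is the loop case. I would handle it via the finite-unrolling characterization $\PSem{\WhileDo{\pB}{\pP}} = \sup_{k} \PSem{\WhileKDo{\pB}{\pP}{k}}$ already used in the soundness proof (Appendix~\ref{app:sec-sound}). The plan is an inner induction on $k$: the frame $\iR$ is disjoint from $\Mod{\WhileKDo{\pB}{\pP}{k}}{\iH}$ for every $k$ by the definition $\Mod{\WhileDo{\pB}{\pP}}{\iH} = \bigcup_k \Mod{\WhileKDo{\pB}{\pP}{k}}{\iH}$, so each unrolling — being built from $\Ite{}{}{}$ and $\Seq{}{}$ — inherits locality from the already-proved conditional and sequencing cases. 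The delicate point is establishing disjointness at the \emph{intermediate} states produced during unrolling and then passing the identity $\PSem{\WhileKDo{\pB}{\pP}{k}}(\iH \cup \iR) = \PSem{\WhileKDo{\pB}{\pP}{k}}(\iH) \cup \iR$ through the supremum; one argues that $- \cup \iR$ is continuous on the relevant chain, so it commutes with $\sup_k$, closing the induction. I expect the bookkeeping of which edges and nodes lie in $\Mod{.}{.}$ across nested unrollings to be the most error-prone part, but no genuinely new idea beyond expression stability and componentwise-union compatibility is required.
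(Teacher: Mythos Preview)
Your proposal is correct and follows essentially the same approach as the paper: structural induction on $\Progs$, with the four atomic statements handled by unfolding $\SetVar{.}{.}$, $\SetField{.}{.}{.}$, $\AddVar{.}$, the composite cases handled compositionally, and the loop reduced to its finite unrollings via an inner induction on $k$. If anything, you are slightly more explicit than the paper in isolating expression stability ($\ESem{\pE}(\iH\cup\iR)=\ESem{\pE}(\iH)$ and $\BSem{\pB}(\iH\cup\iR)=\BSem{\pB}(\iH)$) as a reusable sub-lemma and in naming the continuity step needed to pass the per-unrolling identity through the supremum; the paper simply assumes $\iH\cup\iR\in\IHC$ up front and relies on the sanity conditions implicitly.
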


\begin{proof}
        Without loss of generality, we assume $\iH \cup \iR \in \IHC$ and $\Ext_{\iH} = \Ext_{\iR}$. 
        Otherwise, $\iH \cup \iR$ is undefined.
        Then it is straightforward to show that also $\iK \cup \iR$ and $\PSem{\pP}{\iH \cup \iR}$ are undefined, i.e., the theorem holds.

        Let $\PSem{\pP}(\iH) = \iK$ and $\Mod{\pP}{\iH} \cap (\V_{\iR} \cup \E_{\iR}) = \emptyset$.
        The proof proceeds by induction on the structure of $\Progs$ programs.

        \noindent\textbf{I.B.}

        The case $\pP = (\Ass{x}{\pE})$.
        \begin{align*}
                & \PSem{\pP}(\iH \cup \iR) \\
                ~=~ & \SetVar{x}{\ESem{\pE}}(\iH \cup \iR) \tag{Fig.~\ref{fig:semantics}}\\
                ~=~ & (
                        \V_{\iH} \cup \V_{\iR},
                        \underbrace{(\E_{\iH} \cup \E_{\iR}) \setminus \E_{\iH}^{\{x\}}}_{\E'},
                        \Proj{(\Lab_{\iH} \cup \Lab_{\iR})}{\E'},
                    \tag{App.~\ref{app:concrete}} \\
                    & \quad \Proj{(\Att_{\iH} \cup \Att_{\iR})}{\E'},
                        \Proj{(\Ind_{\iH} \cup \Ind_{\iR})}{\E'},
                        \Ext_{\iH}
                      ) \\
                ~=~ & (
                        \V_{\iH} \cup \V_{\iR},
                        \underbrace{(\E_{\iH} \setminus \E_{\iH}^{\{x\}}) \cup \E_{\iR}}_{\E'},
                        \Proj{(\Lab_{\iH} \cup \Lab_{\iR})}{\E'}, \tag{$\E_{\iH}^{\{x\}} \in \Mod{\pP}{\iH}$} \\
                    & \quad \Proj{(\Att_{\iH} \cup \Att_{\iR})}{\E'},
                        \Proj{(\Ind_{\iH} \cup \Ind_{\iR})}{\E'},
                        \Ext_{\iH}
                      ) \\
                %
                ~=~ & (
                        \V_{\iH} \cup \V_{\iR},
                        \underbrace{(\E_{\iH} \setminus \E_{\iH}^{\{x\}})}_{\E''} \cup \E_{\iR},
                        (\Proj{\Lab_{\iH}}{E''}) \cup \Lab_{\iR}, \\
                    & \quad (\Proj{\Att_{\iH}}{\E''}) \cup \Att_{\iR},
                        (\Proj{\Ind_{\iH}}{\E''}) \cup \Ind_{\iR},
                        \Ext_{\iH}
                    )
                \tag{$E' = \E'' \cup \E_{\iR}$} \\
                ~=~ & \SetVar{x}{\ESem{\pE}}(\iH) \cup \iR 
                \tag{$\iH \cup \iR$ is the componentwise union} \\
                ~=~ & \iK. \tag{Fig.~\ref{fig:semantics}, assumption}
        \end{align*}

        The case $\pP = (\Ass{\F{x}{f}}{\pE})$.
        \begin{align*}
                & \PSem{\pP}(\iH \cup \iR) \\
                ~=~ & \SetField{x}{f}{\ESem{\pE}}(\iH \cup \iR) \tag{Fig.~\ref{fig:semantics}} \\
                ~=~ & (
                    \V_{\iH} \cup \V_{\iR}, 
                    \underbrace{((\E_{\iH} \cup \E_{\iR}) \setminus U )}_{\E'} \cup \{e\}, \\
                    & \quad \Proj{(\Lab_{\iH} \cup \Lab_{\iR})}{\E'} \cup \{ e \mapsto f \},
                      \Proj{(\Att_{\iH} \cup \Att_{\iR})}{\E'} \cup \{ e \mapsto uv \}, \\
                    & \quad \Proj{(\Ind_{\iH} \cup \Ind_{\iR})}{\E'} \cup \{ e \mapsto z \}, \Ext_{\iH}
                    )
            \tag{App.~\ref{app:concrete}, where $U,e,u,v$ are as in the Definition of $\SetField{x}{f}{\ESem{\pE}}$} \\
                ~=~ & (
                    \V_{\iH} \cup \V_{\iR}, 
                      \underbrace{(\E_{\iH} \setminus U )}_{\E''} \cup \{e\}) \cup \E_{\iR},
                      \tag{$U,e \in \Mod{\pP}{\iH}$} \\
                    & \quad \Proj{\Lab_{\iH}}{\E''} \cup \{ e \mapsto f \} \cup \Lab_{\iR},
                      \Proj{\Att_{\iH}}{\E''} \cup \{ e \mapsto uv \} \cup \Att_{\iR}, \\
                    & \quad \Proj{\Ind_{\iH} }{\E''} \cup \{ e \mapsto z \} \cup \Ind_{\iR}, \Ext_{\iH}
                    ) \\
                ~=~ & \SetField{x}{f}{\ESem{\pE}}(\iH) \cup \iR \tag{App.~\ref{app:concrete}} \\
                ~=~ & \iK \cup \iR. \tag{Fig~\ref{fig:semantics}, assumption}
        \end{align*}

        The case $\pP = (\New{x})$. 
        \begin{align*}
                & \PSem{\pP}(\iH \cup \iR) \\
                ~=~ & \AddVar{x}(\iH \cup \iR) \tag{Fig.~\ref{fig:semantics}}\\
                ~=~ & (
                        \V_{\iH} \cup \V_{\iR} \cup \{v\},
                        \underbrace{(\E_{\iH} \cup \E_{\iR}) \setminus \E_{\iH}^{\{x\}} \cup \{e\}}_{\E'},
                    \tag{App.~\ref{app:concrete}, v,e~\text{fresh}} \\
                    & \quad \Proj{(\Lab_{\iH} \cup \Lab_{\iR})}{\E'} \cup \{ e \mapsto x \}, 
                        \Proj{(\Att_{\iH} \cup \Att_{\iR})}{\E'} \cup \{ e \mapsto v \}, \\
                    & \quad \Proj{(\Ind_{\iH} \cup \Ind_{\iR})}{\E'} \cup \{ e \mapsto z\},
                        \Ext_{\iH}
                      ) \\ 
                ~=~ & (
                        \V_{\iH} \cup \V_{\iR} \cup \{v\},
                        \underbrace{(\E_{\iH} \setminus \E_{\iH}^{\{x\}})}_{\E''} \cup \{e\} \cup \E_{\iR}, 
                        \tag{$e, \E_{\iH}^{\{x\}} \in \Mod{\pP}{\iH}$} \\
                        & \quad (\Proj{\Lab_{\iH}}{\E''} \cup \{ e \mapsto x \}) \cup \Lab_{\iR}, 
                        (\Proj{\Att_{\iH}}{\E''} \cup \{ e \mapsto v \}) \cup \Att_{\iR}, \\
                        & \quad \Proj{\Ind_{\iH}}{\E''} \cup \{ e \mapsto z\} \cup \Ind_{\iR},
                        \Ext_{\iH}
                      ) \\
               ~=~ & \AddVar{x}(\iH) \cup \iR \tag{App.~\ref{app:concrete}} \\
               ~=~ & \iK \cup \iR. \tag{Fig.~\ref{fig:semantics}, assumption}
        \end{align*}

        The case $\pP = (\Skip)$. Then
        \begin{align*}
                & \PSem{\pP}(\iH \cup \iR) \\
            ~=~ & \iH \cup \iR \tag{Fig.~\ref{fig:semantics}} \\
            ~=~ & \iK \cup \iR. \tag{$\iK = \PSem{\pP}(\iH)$ by assumption}
        \end{align*}

        \noindent\textbf{I.H.}
        Assume for all (sub-)programs $\pP$ and all $\iH,\iR \in \IHC$ that
        \[ \PSem{\pP}(\iH) = \iK ~\text{and}~ \Mod{\pP}{\iH} \cap (\V_{\iR} \cup \E_{\iR}) = \emptyset ~\text{implies}~ \PSem{\pP}(\iH \cup \iR) = \iK \cup \iR. \]
        
        \noindent\textbf{I.S.}

        The case $\pP = (\Seq{\pP_1}{\pP_2})$.
        \begin{align*}
                & \PSem{\pP}(\iH \cup \iR) \\
                ~=~ & (\PSem{\pP_1} \FComp \PSem{\pP_2})(\iH \cup \iR) \tag{Fig.~\ref{fig:semantics}} \\
                ~=~ & \PSem{\pP_2}(\PSem{\pP_1}(\iH \cup \iR)) \tag{Definition of $\FComp$} \\
                ~=~ & \PSem{\pP_2}(\PSem{\pP_1}(\iH) \cup \iR) \tag{$\Mod{\pP_1}{\iH} \subseteq \Mod{\pP}{\iH}$, I.H.} \\
                ~=~ & \PSem{\pP_2}(\PSem{\pP_1}(\iH)) \cup \iR \tag{$\Mod{\pP_2}{\PSem{\pP_1}(\iH)} \subseteq \Mod{\pP}{\iH}$, I.H.} \\ 
                ~=~ & \PSem{\pP}(\iH) \cup \iR. \tag{Fig~\ref{fig:semantics}}
        \end{align*}

        The case $\pP = (\Ite{\pB}{\pP_1}{\pP_2})$.
        \begin{align*}
                & \PSem{\pP}(\iH \cup \iR) \\
                ~=~ & (\SemIte{\BSem{\pB}}{\PSem{\pP_1}}{\PSem{\pP_2}})(\iH \cup \iR) \tag{Fig.~\ref{fig:semantics}} \\
                ~=~ & \SemIte{\BSem{\pB}(\iH)}{\PSem{\pP_1}(\iH \cup \iR)}{\PSem{\pP_2}(\iH \cup \iR)} 
                    \tag{by assumption $\BSem{\pB}(\iH) = \BSem{\pB}(\iH \cup \iR)$} \\
                ~=~ & \SemIte{\BSem{\pB}(\iH)}{\PSem{\pP_1}(\iH) \cup \iR}{\PSem{\pP_2}(\iH) \cup \iR} \tag{I.H.} \\
                ~=~ & \SemIte{\BSem{\pB}(\iH)}{\PSem{\pP_1}(\iH)}{\PSem{\pP_2}(\iH)} \cup \iR \tag{Algebra} \\
                ~=~ & (\SemIte{\BSem{\pB}}{\PSem{\pP_1}}{\PSem{\pP_2}})(\iH) \cup \iR \\
                ~=~ & \iK \cup \iR. \tag{Fig.~\ref{fig:semantics}}
        \end{align*}

        The case $\pP = (\WhileDo{\pB}{\pP'})$. As in proof of Theorem~\ref{thm:sound}, we use a standard characterization of the semantics of loops as the supremum of its finite unrollings.
        Thus, let 
        \begin{align*}
          \WhileKDo{\pB}{\pP'}{0} ~=~ & \pBot \\
          \WhileKDo{\pB}{\pP'}{k+1} ~=~ & \Ite{\pB}{\Seq{\pP'}{\WhileKDo{\pB}{\pP'}{k}}}{\Skip}
        \end{align*}
        be the finite unrollings, where $0$ unrollings corresponds to a program that is undefined everywhere, such as $\WhileDo{\True}{\Skip}$.
        Then, by standard arguments, we have
        \begin{align*}
            \PSem{\WhileDo{\pB}{\pP'}} ~=~ & \sup_{k \in \Nats} \PSem{\WhileKDo{\pB}{\pP'}{k}}
        \end{align*}
        To complete the proof we show for all $k \in \Nats$ that
        \begin{align*}
            & \PSem{\WhileKDo{\pB}{\pP'}{k}}(\iH) 
            ~\text{and}~ \Mod{\pP}{\iH} \cap (\V_{\iR} \cup \E_{\iR}) = \emptyset \\
            & \qquad ~\text{implies}~ \PSem{\WhileKDo{\pB}{\pP'}{k}}(\iH \cup \iR) = \PSem{\WhileKDo{\pB}{\pP'}{k}}(\iH) \cup \iR.
        \end{align*}
        The proof proceeds by induction on $k$.
        The base cases $k=0$ and $k=1$ are trivial.
        For the induction step, we have
        \begin{align*}
            & \PSem{\WhileKDo{\pB}{\pP'}{k+1}}(\iH \cup \iR) \\
        ~=~ & \PSem{\Ite{\pB}{\Seq{\pP}{\WhileKDo{\pB}{\pP'}{k}}}{\Skip}}(\iH \cup \iR) \tag{Def. of $\WhileKDo{\pB}{\pP'}{k+1}$} \\
        ~=~ & (\SemIte{\BSem{\pB}}{\PSem{\Seq{\pP_1}{\WhileKDo{\pB}{\pP'}{k}}}}{\PSem{\Skip}})(\iH \cup \iR) \\
              \tag{Fig.~\ref{fig:semantics}} \\
        ~=~ & \SemIte{\BSem{\pB}(\iH)}{\PSem{\Seq{\pP_1}{\WhileKDo{\pB}{\pP'}{k}}}(\iH \cup \iR)}{\PSem{\Skip}(\iH \cup \iR)} 
              \tag{by assumption $\BSem{\pB}(\iH) = \BSem{\pB}(\iH \cup \iR)$} \\
        ~=~ & \SemIte{\BSem{\pB}(\iH)}{\PSem{\WhileKDo{\pB}{\pP'}{k}}(\PSem{\pP'}(\iH \cup \iR))}{\PSem{\Skip}(\iH \cup \iR)} 
              \tag{Fig.~\ref{fig:semantics}} \\
      ~=~ & \SemIte{\BSem{\pB}(\iH)}{\PSem{\WhileKDo{\pB}{\pP'}{k}}(\PSem{\pP'}(\iH) \cup \iR)}{\PSem{\Skip}(\iH) \cup \iR}
              \tag{outer I.H. on $\pP'$ and $\Skip$} \\
      ~=~ & \SemIte{\BSem{\pB}(\iH)}{\PSem{\WhileKDo{\pB}{\pP'}{k}}(\PSem{\pP'}(\iH)) \cup \iR}{\PSem{\Skip}(\iH) \cup \iR}
              \tag{outer I.H. on $\WhileKDo{\pB}{\pP'}{k}$} \\
      ~=~ & \SemIte{\BSem{\pB}(\iH)}{\PSem{\WhileKDo{\pB}{\pP'}{k}}(\PSem{\pP'}(\iH))}{\PSem{\Skip}(\iH)} \cup \iR \\
      ~=~ & (\SemIte{\BSem{\pB}}{\PSem{\WhileKDo{\pB}{\pP'}{k}}(\PSem{\pP'})}{\PSem{\Skip}}(\iH) \cup \iR \\
      ~=~ & \PSem{\WhileKDo{\pB}{\pP'}{k+1}}(\iH) \cup \iR. \tag{Def. $\WhileKDo{\pB}{\pP'}{k+1}$}
        \end{align*}
\qed
\end{proof}

\subsection{Expressiveness of Backward Confluent Grammars}\label{app:sec-backwardconfluent-expressiveness}

\begin{theorem}
  There exist languages of IHCs that can be generated by an IG, but not by any backward confluent IG.
\end{theorem}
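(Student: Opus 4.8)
The plan is to first isolate the structural consequence of backward confluence and only then exhibit a separating language. Since every IG is increasing, the inverse derivation relation $\rDerive{\iG}$ is terminating; combined with backward confluence — which makes each $\iLang{\iG}{\iH}$ a singleton and hence forces local joinability — this makes $\rDerive{\iG}$ \emph{convergent}, so every IHC has a \emph{unique} irreducible normal form. From this I would re-derive the dichotomy already announced before Theorem~\ref{thm:backwardconfluent-inclusion}: if $\iH,\iK \nrDerive{\iG}$ and some concrete $G \in \Lang{\iG}{\iH} \cap \Lang{\iG}{\iK}$, then $G \rDerive{\iG}^{*} \iH$ and $G \rDerive{\iG}^{*} \iK$, so by uniqueness of normal forms $\iH \Iso \iK$. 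Consequently the normal-form map partitions all concrete IHCs into blocks, and a single nonterminal start edge $e$ is itself irreducible (by increasingness every right-hand side has $|\V|+|\E| > \Rank+1$, whereas $e$ has $|\V|+|\E| = \Rank+1$, so no right-hand side occurs in it). Hence, for backward confluent $\iG$, the start language $\Lang{\iG}{e}$ is \emph{exactly one block} of the partition, namely $\{G \text{ concrete} \mid \text{the unique element of } \iLang{\iG}{G} \text{ is } e\}$.

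For the witness I would take $L$ to be the set of all binary trees with parent pointers — precisely the structure manipulated by the running example of Section~\ref{sec:overview}, and, tellingly, the one family for which Section~\ref{sec:backwardconfluent} claims backward confluence only \emph{without} back pointers. That $L$ is an IG language is immediate: a grammar analogous to the one in Figure~\ref{fig:hrg_balanced_trees}, but erasing the balancedness index and adding to every expansion the two $\Parent$ back-edges from each child to its parent, generates exactly $L$.

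The heart of the argument is to show that no backward confluent IG $\iG'$ has $\Lang{\iG'}{e} = L$. I would argue by contradiction via Plump's critical-pair characterisation of backward confluence~\cite{plump2010checking}: assuming $\iG'$ backward confluent with right-hand sides of size at most some constant $c$, I would pick a tree in $L$ of depth exceeding $c$ and locate, around a sufficiently deep internal node $u$, two overlapping occurrences of right-hand sides — one folding the subtree at $u$ together with the $\Parent$ edge to its own parent, the other folding a neighbouring subtree. Because the two reducible regions necessarily share $u$ and the child/parent cycle through it, reducing them in the two possible orders would yield two non-isomorphic irreducible graphs, giving $|\iLang{\iG'}{G}| \geq 2$ and contradicting backward confluence. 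The size bound $c$ is what forces such an overlap to exist for deep enough trees, independently of how $\iG'$ is designed.

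The main obstacle is exactly this universal quantifier over grammars. It is routine to check that the obvious grammar for $L$ fails backward confluence, but ruling out \emph{every} backward confluent IG requires showing that the back-pointer cycles make the forced overlap non-joinable \emph{within} $L$ — i.e. that no alternative rule set can repair the critical pair without either generating a graph outside $L$ or violating increasingness. Formalising the pumping step that guarantees overlapping handles in deep trees, and verifying non-joinability of the resulting critical pair uniformly in $\iG'$, is where the real work lies; the structural lemma of the first paragraph is the device that upgrades this local non-joinability into the global statement $|\iLang{\iG'}{G}| \geq 2$, which is what contradicts Definition~\ref{def:backwardconfluent}.
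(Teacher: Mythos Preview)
Your opening paragraph is fine: deriving the ``disjoint-or-isomorphic'' dichotomy for normal forms from backward confluence and increasingness is correct, and it is exactly the structural tool the paper relies on (Theorem~\ref{thm:backwardconfluent-intersection}).

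The witness, however, does not work. Binary trees with parent pointers \emph{are} generated by a backward confluent IG---indeed, the entire running example of the paper analyses AVL trees with parent pointers, and the list in Section~\ref{sec:backwardconfluent} reads ``(balanced) trees (w/o back pointers)'' with ``w/o'' meaning \emph{with or without}, not \emph{without}. Concretely, take the natural grammar with external nodes $(\vNull,\textit{root})$: a leaf rule, and a rule expanding the root into two children with $\Left$/$\Right$/$\Parent$ edges and $B$-edges below each child. The crucial point your overlap argument misses is that the root of each rule is external, so the $\Parent$ edge leaving the root is \emph{not} part of the right-hand side. Hence two inverse-derivation redexes at sibling subtrees share only $\vNull$, and a redex at a node and a redex at its child cannot coexist (the child must already carry a $B$-edge for the parent redex to match). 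There are no non-joinable critical pairs; every concrete tree folds up uniquely from the leaves.

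More generally, your plan to ``pump'' until overlaps are forced and then invoke Plump's criterion is not a proof against \emph{all} grammars: you are tacitly assuming a particular shape of right-hand side (one that swallows the parent edge), but a prospective $\iG'$ is free to place its external nodes so that this never happens. The universal quantifier over grammars is precisely what your sketch does not discharge.

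The paper's argument avoids this trap by choosing a very different, string-like witness: list segments over $\{a,b\}$ matching $a^{+} \cup a^{*}b$. The key is not a critical-pair analysis of one grammar but a closure property forced on \emph{any} generating grammar. Since $a$, $aa$ and $ab$ are all in the language, each must abstract to the single start edge $e_{S'}$; applying the abstraction of $a$ twice inside $aa$ shows that the two-edge graph $S'S'$ also abstracts to $e_{S'}$. But then $ab\cdot ab$ abstracts via $S'S'$ to $e_{S'}$ as well, putting $abab$---which is \emph{not} in the language---into $\Lang{\iG'}{e_{S'}}$. This is short, elementary, and independent of the rule set of $\iG'$; I would recommend replacing your witness with one of this flavour rather than trying to rescue the back-pointer argument.
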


\begin{proof}
Given a nonterminal symbol, say $S$, let $e_{S}$ denote a single hyperedge that is labeled with $S$ and attached to $\Rank(S)$ distinct nodes.
Now, consider the IG $G$ below 
in which all indices as well as the $\Null$ node are omitted for simplicity. 
\begin{center}
\scalebox{0.7}{
\tikzdefault{
   
    \node (text) {$\texttt{S} ~\rightarrow~$};

    \begin{scope}[shift={(1,0)}]
        \location{v1}{}{1}
        \location{v2}{right of = v1}{2}
        \lpointer{v1}{}{}{v2}{a}{above}
    
        \node[right of = v1, node distance=2cm] (s1) {};
        \node[above of = s1, node distance=0.5cm] (s0) {};
        \node[below of = s1, node distance=0.5cm] (s2) {};
        \path (s0) edge[-] (s2);

    \end{scope}

    \begin{scope}[shift={(4.5,0)}]
        \location{v1}{}{1}
        \location{v2}{right of = v1}{2}
        \lpointer{v1}{}{}{v2}{b}{above}
        
        \node[right of = v1, node distance=2cm] (s1) {};
        \node[above of = s1, node distance=0.5cm] (s0) {};
        \node[below of = s1, node distance=0.5cm] (s2) {};
        \path (s0) edge[-] (s2);
    \end{scope}
    
    \begin{scope}[shift={(8,0)}]
        \location{v1}{}{1}
        \location{v2}{right of = v1}{}
        \lpointer{v1}{}{}{v2}{a}{above}
        \nonterminal{S}{right}{v2}{S}
        \location{v3}{right of = S}{2}
        \tentacle{S}{1}{v2}{}
        \tentacle{S}{2}{v3}{}
    \end{scope}
}
}

\end{center}

$G$ generates non-empty singly-linked list segments whose next pointer is either labeled with $a$ or $b$.
More precisely, $\iL{G}(e_{S})$ consists of all IHCs that are non-empty singly-linked list segments that either 
start with arbitrarily many (including zero) fields labeled $a$ and then end with exactly one field labeled $b$, or
consist only of fields labeled with $a$.
We claim that $\iL{G}(e_S)$ cannot be generated by a backward confluent IG.
Towards a contradiction, assume there exists a backward confluent IG $G'$ and a nonterminal $S'$ such that $\iL{G'}(e_{S'}) = \iL{G}(e_{S})$.

Clearly, each of the following three IHCs (on the left-hand side of $\rDerive{\iG}^{*}$) belongs to $\iL{G}(e_S)$.
Hence, the following three inverse derivations are possible:
\begin{center}
\scalebox{0.7}{
\tikzdefault{
   
    \begin{scope}[shift={(0,0)}]
        \location{v1}{}{}
        \location{v2}{right of = v1}{}
        \location{v3}{right of = v2}{}
        \lpointer{v1}{}{}{v2}{a}{above}
        \lpointer{v2}{}{}{v3}{b}{above}
        
        \node[right of = v3, node distance = 0.5cm] (text) {$,$};
        
    \end{scope}
    
    \begin{scope}[shift={(3,0)}]
        \location{v1}{}{}
        \location{v2}{right of = v1}{}
        \location{v3}{right of = v2}{}
        \lpointer{v1}{}{}{v2}{a}{above}
        \lpointer{v2}{}{}{v3}{a}{above}
        
        \node[right of = v3, node distance = 0.5cm] (text) {$,$};
        
    \end{scope}
    
    \begin{scope}[shift={(6,0)}]
        \location{v1}{}{}
        \location{v2}{right of = v1}{}
        \lpointer{v1}{}{}{v2}{a}{above}
        
        \node[right of = v2, node distance = 1.5cm] (text) {$\rDerive{G'}^{*}$};
    \end{scope}

    \begin{scope}[shift={(10,0)}]
        \location{v1}{}{}
        \nonterminal{S1}{right}{v1}{S'}
        \location{v2}{right of = S1}{}

        \tentacle{S1}{1}{v1}{}
        \tentacle{S1}{2}{v2}{}
    \end{scope}

}
}

\end{center}

Since $G'$ is backward confluent, this means that also the following inverse derivation is possible:

\begin{center}
\scalebox{0.7}{
\tikzdefault{
   
    \begin{scope}[shift={(0,0)}]
        \location{v1}{}{}
        \location{v2}{right of = v1}{}
        \location{v3}{right of = v2}{}
        \lpointer{v1}{}{}{v2}{a}{above}
        \lpointer{v2}{}{}{v3}{a}{above}
        
        \node[right of = v3, node distance = 1cm] (text) {$\rDerive{G'}^{*}$};
    \end{scope}

    \begin{scope}[shift={(4,0)}]
        \location{v1}{}{}
        \nonterminal{S1}{right}{v1}{S'}
        \location{v2}{right of = S1}{}
        \nonterminal{S2}{right}{v2}{S'}
        \location{v3}{right of = S2}{}

        \tentacle{S1}{1}{v1}{}
        \tentacle{S1}{2}{v2}{}
        
        \tentacle{S2}{1}{v2}{}
        \tentacle{S2}{2}{v3}{}
        
        \node[right of = v3, node distance = 1cm] (text) {$\rDerive{G'}^{*}$};
    \end{scope}

    \begin{scope}[shift={(10,0)}]
        \location{v1}{}{}
        \nonterminal{S1}{right}{v1}{S'}
        \location{v2}{right of = S1}{}

        \tentacle{S1}{1}{v1}{}
        \tentacle{S1}{2}{v2}{}
    \end{scope}

}
}

\end{center}

Then we can also apply an 
inverse derivation that starts in an IHC $H$:

\begin{center}
\scalebox{0.7}{
\tikzdefault{
   
    \begin{scope}[shift={(0,0)}]
        \location{v1}{}{}
        \location{v2}{right of = v1}{}
        \location{v3}{right of = v2}{}
        \location{v4}{right of = v3}{}
        \location{v5}{right of = v4}{}
        \lpointer{v1}{}{}{v2}{a}{above}
        \lpointer{v2}{}{}{v3}{b}{above}
        \lpointer{v3}{}{}{v4}{a}{above}
        \lpointer{v4}{}{}{v5}{b}{above}
        
        \node[right of = v5, node distance = 0.8cm] (text) {$\rDerive{G'}^{*}$};
    \end{scope}

    \begin{scope}[shift={(5.5,0)}]
        \location{v1}{}{}
        \nonterminal{S1}{right}{v1}{S'}
        \location{v2}{right of = S1}{}
        \nonterminal{S2}{right}{v2}{S'}
        \location{v3}{right of = S2}{}

        \tentacle{S1}{1}{v1}{}
        \tentacle{S1}{2}{v2}{}
        
        \tentacle{S2}{1}{v2}{}
        \tentacle{S2}{2}{v3}{}
        
        \node[right of = v3, node distance = 0.8cm] (text) {$\rDerive{G'}^{*}$};
    \end{scope}

    \begin{scope}[shift={(11,0)}]
        \location{v1}{}{}
        \nonterminal{S1}{right}{v1}{S'}
        \location{v2}{right of = S1}{}

        \tentacle{S1}{1}{v1}{}
        \tentacle{S1}{2}{v2}{}
    \end{scope}

}
}

\end{center}
By Theorem~\ref{thm:ig-props}, this implies $H \in \iL{G'}(e_{S'})$.
However, this contradicts our assumption $\iL{G'}(e_{S'}) = \iL{G}(e_{S})$, because $H \notin \iL{G}(e_{S})$.
\qed
\end{proof}

\subsection{Decidability of Language Inclusion}\label{app:sec-backwardconfluent-inclusion}

We first formalize a crucial property informally stated in Section~\ref{sec:backwardconfluent}.

\begin{theorem}\label{thm:backwardconfluent-intersection}
  Let $\iG$ be a backward confluent IG. 
  Moreover, let $\iH,\iK \in \IHC$ such that 
  $\iH \nrDerive{\iG}$ and $\iK \nrDerive{\iG}$.
  Then
  \begin{align*}
          \Lang{\iG}{\iH} \cap \Lang{\iG}{\iK} \neq \emptyset ~\text{iff}~ \iH ~\text{is isomorphic to}~ \iK.
  \end{align*}
\end{theorem}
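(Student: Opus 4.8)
The plan is to establish the biconditional by its two directions, with the forward (``only if'') implication carrying all the content via the singleton property of inverse languages. For that direction I would assume $\Lang{\iG}{\iH} \cap \Lang{\iG}{\iK} \neq \emptyset$ and fix a common graph $F$ in the intersection. By Definition~\ref{def:language}, $F$ is nonterminal-free ($\E_{F}^{N} = \emptyset$) and admits derivations $\iH \iDerive{\iG}^{*} F$ and $\iK \iDerive{\iG}^{*} F$. Since $\rDerive{\iG}$ is by definition the inverse of $\iDerive{\iG}$ (Definition~\ref{def:derive}), the two reflexive-transitive closures are mutually inverse, so these derivations rewrite to $F \rDerive{\iG}^{*} \iH$ and $F \rDerive{\iG}^{*} \iK$. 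Combined with the hypotheses $\iH \nrDerive{\iG}$ and $\iK \nrDerive{\iG}$, the definition of the inverse language then places both graphs in $\iLang{\iG}{F}$, i.e.\ $\iH, \iK \in \iLang{\iG}{F}$.

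Now I would invoke backward confluence (Definition~\ref{def:backwardconfluent}): $\iLang{\iG}{F}$ is a singleton, so $\iH$ and $\iK$ both coincide with its unique element and are therefore isomorphic (recall all graphs are taken up to isomorphism), completing the forward direction. The converse is immediate modulo one observation: if $\iH \cong \iK$ then $\Lang{\iG}{\iH} = \Lang{\iG}{\iK}$ because the whole development identifies isomorphic graphs, and hence the intersection equals $\Lang{\iG}{\iH}$ itself; this set is non-empty provided every nonterminal of $\iG$ is productive, which is the natural requirement for data-structure grammars and is in any case decidable by Theorem~\ref{thm:ig-props}(3).

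The decisive and only non-routine step is to evaluate the inverse language at the common fully-derived graph $F$: once both $\iH$ and $\iK$ are funneled into $\iLang{\iG}{F}$, backward confluence collapses them into a single isomorphism class, and the remainder is bookkeeping about the relationship between forward and inverse derivations. The one point genuinely demanding care is the non-emptiness underlying the converse, since an isomorphic pair whose common language is empty would falsify the left-hand side while satisfying the right-hand side; I would therefore either add productivity of $\iG$ as a standing assumption or restrict attention to IHCs with non-empty language, both of which are harmless in the analysis.
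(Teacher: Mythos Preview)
Your forward direction is essentially identical to the paper's own proof: pick a witness $F$ in the intersection, invert the derivations to obtain $F \rDerive{\iG}^{*} \iH$ and $F \rDerive{\iG}^{*} \iK$, use the hypotheses $\iH \nrDerive{\iG}$ and $\iK \nrDerive{\iG}$ to place both in $\iLang{\iG}{F}$, and then apply backward confluence to collapse that singleton. The paper writes this out as a chain of equivalences but the logical content is the same.

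For the converse, the paper simply declares it ``trivial'' without addressing the non-emptiness issue you raise. Your caution here is well-founded: as stated, the theorem does fail for an IHC $\iH$ with $\iH \nrDerive{\iG}$ and $\Lang{\iG}{\iH} = \emptyset$ (take $\iK = \iH$). So you have in fact been more careful than the paper. That said, the only use the paper makes of this theorem (in the proof of Theorem~\ref{thm:backwardconfluent-inclusion}) is the contrapositive of the forward direction, so the gap in the converse is harmless for the downstream results. Your proposed fix---assuming productivity or restricting to IHCs with non-empty language---is exactly the right patch.
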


\begin{proof}
We first show that 
$\iL{\iG}(\iH) \cap \iL{\iG}(\iK) \neq \emptyset$ implies that $\iH$ and $\iK$ are isomorphic: 
\begin{align*}
& 
\iL{\iG}(\iH) \cap \iL{\iG}(\iK) \neq \emptyset
\\
~\Leftrightarrow~ &
\exists \iR : \E_{\iR}^{T} = \E_{\iR} ~\text{and}~ \iH \iDerive{\iG}^{*} \iR ~\text{and}~ \iK \iDerive{\iG}^{*} \iR 
\tag{Def.~\ref{def:language}} \\
~\Leftrightarrow~ &
\exists \iR : \E_{\iR}^{T} = \E_{\iR} ~\text{and}~ \iR \rDerive{\iG}^{*} \iH ~\text{and}~ \iR \rDerive{\iG}^{*} \iK 
\tag{Def. $\rDerive{\iG}$} \\
~\Leftrightarrow~ &
\exists \iR : \E_{\iR}^{T} = \E_{\iR} ~\text{and}~ \iH \in \rL{\iG}(\iR) ~\text{and}~ \iK \in \rL{\iG}(\iR)
\tag{$\iH,\iK \nrDerive{\iG}$} \\
~\Leftrightarrow~ &
\exists \iR : \E_{\iR}^{T} = \E_{\iR} ~\text{and}~ \iH \in \rL{\iG}(\iR) ~\text{and}~ \iK \in \rL{\iG}(\iR) ~\text{and}~ |\rL{\iG}(\iR)| = 1
\tag{Def.~\ref{def:backwardconfluent}} \\
~\Rightarrow~ &
\iH ~\text{isomorphic to}~ \iK.
\end{align*}
The converse direction, i.e. if $\iH$ is isomorphic to $\iK$ then $\iL{\iG}(\iH) \cap \iL{\iG}(\iK) \neq \emptyset$, is trivial.
\qed
\end{proof}

\noindent\textbf{Theorem~\ref{thm:backwardconfluent-inclusion}.} 
Let $\iG$ be a backward confluent IG.
Moreover, let $\iH, \iK \in \IHC$ such that $\iK \nrDerive{\iG}$. 
Then it is decidable whether 
$\Lang{\iG}{\iH} \subseteq \Lang{\iG}{\iK}$ holds.

\begin{proof}

Note that it is decidable whether $\iK \iDerive{\iG}^{*} \iH$ holds, because
$\iK \iDerive{\iG}^{*} \iH$ holds if and only if there exists a natural number
$0 \leq m \leq |\V_{\iH}| + |\E_{\iH}|$ such that $\iK \iDerive{\iG}^{m} \iH$ holds, where $\iDerive{\iG}^{m}$ means that we apply up to $m$ IG derivations.
We distinguish two cases:

\begin{enumerate}
\item Assume $\iK \iDerive{\iG}^{*} \iH$ holds.
By Theorem~\ref{thm:ig-props}(1) this implies $\iL{\iG}(\iH) \subseteq \iL{\iG}(\iK)$.

\item
Assume $\iK \iDerive{\iG}^{*} \iH$ does not hold.
Since $\iG$ is backward confluent there exists a unique $\iR \in \rL{\iG}(\iH)$.
Moreover, by our assumption $\iK \nrDerive{\iG}$, we have $\rL{\iG}(\iK) = \{ \iK \}$.
Then $\iR$ and $\iK$ are \emph{not} isomorphic. 
Hence,
\begin{align*}
& \iR ~\text{not isomorphic to}~ \iK \\ 
~\Leftrightarrow~ & 
\iL{\iG}(\iR) \cap \iL{\iG}(\iK) = \emptyset 
\tag{Lemma~\ref{thm:backwardconfluent-intersection}} \\
~\Rightarrow~ & 
\iL{\iG}(\iH) \cap \iL{\iG}(\iK) = \emptyset
\tag{$\iL{\iG}(\iH) \subseteq \iL{\iG}(\iR)$ by Theorem~\ref{thm:ig-props}(1)}
\end{align*}
\end{enumerate}
Thus $\iL{\iG}(\iH) \subseteq \iL{\iG}(\iK)$ holds if and only if $\iL{\iG}(\iH) = \emptyset$.
By Theorem~\ref{thm:ig-props}(3), it is decidable whether $\iL{\iG}(\iH) = \emptyset$.
\qed
\end{proof}

\subsection{Properties of Indexed Graph Grammars with Index Abstraction}\label{app:sec-cfg-props}

\subsubsection{Well-Formedness of Indices}
\label{app:wellformed}

\begin{lemma}\label{thm:wellformed}
If $\iH$ is well-formed and $\iH \iCon{\iC}^{*} \iK$ or $\iH \iDerive{\iG}^{*} \iK$ then $\iK$ is well-formed.
\end{lemma}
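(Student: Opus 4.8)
The plan is to prove the single-step versions of the claim for each derivation relation and then lift them to the reflexive-transitive closures by a routine induction on the number of steps. The zero-step base case holds because well-formedness is invariant under isomorphism (Definition~\ref{def:isomorphism} only renames nodes and edges and leaves $\Ind$ intact), and the inductive step is exactly one of the single-step claims below. The preliminary observation driving everything is that in a well-formed index $\sigma = w\,c$, with $w \in (\I_{T} \setminus \{\iBot\})^{*}$ and $c \in \I_{N} \cup \{\iBot\}$, every symbol outside $\I_{T} \setminus \{\iBot\}$ --- that is, every nonterminal index symbol and every occurrence of $\iBot$ --- sits in the final position. It is convenient to extend well-formedness to indices over $\I \cup \{\iVar\}$ by also permitting $\iVar$ as a terminating symbol, so that a rule template index lies in $(\I_{T} \setminus \{\iBot\})^{*}(\I_{N} \cup \{\iBot,\iVar\})$; by the standing assumption all rule indices are well-formed in this sense, which for IGs refines the restriction $\Ind_{\iR}(\E_{\iR}^{N}) \subseteq \I_{T}^{*}\{\iBot,\iVar\}$.

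For a global step $\iH \iCon{\iC} \iK$ via a rule $(\Rule{X}{\tau}) \in \iC$ with $X \in \I_{N}$ (Definition~\ref{def:indexabstraction}), I would take an arbitrary index $\sigma = w\,c$ of $\iH$ and note that, since $w$ consists only of terminals, the nonterminal symbol $X$ can occur in $\sigma$ only as its last symbol $c$. If $c \neq X$ the index is unchanged; if $c = X$ then $\Replace{\sigma}{X}{\tau} = w\,\tau$, and writing the well-formed $\tau$ as $w'c'$ gives $w\,\tau = (w\,w')\,c'$ with $w\,w' \in (\I_{T} \setminus \{\iBot\})^{*}$ and $c' \in \I_{N} \cup \{\iBot\}$, hence well-formed. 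Since every index of $\iK$ arises this way, $\iK$ is well-formed.

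For an IG step $\iH \iDerive{\iG} \iK$ (Definition~\ref{def:derive}), the indices of $\iK$ are precisely the indices of $\iH$ on the edges other than the replaced edge $e$ (well-formed since $\iH$ is) together with the indices contributed by the rule's right-hand side under the replacement of Definition~\ref{def:ihr}. In the first, variable-free clause the contributed indices are those of $\iR$; since the matched index is $\iVar$-free, $\iR$ is $\iVar$-free (Definition~\ref{def:ihrg}) and its indices are well-formed by assumption, so nothing more is needed. In the second clause the matched edge carries index $\sigma\rho$ and the rule is $\Rule{X,\sigma\iVar}{\iR}$, so the contributed indices are those of $\Replace{\iR}{\iVar}{\rho}$. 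I would first show $\rho$ is well-formed: as $\rho \in \I^{+}$ is a non-empty suffix of the well-formed index $\sigma\rho = w\,c$, we have $|\sigma| \leq |w|$, so $\sigma$ is a prefix of the terminal block $w$ and $\rho = w''\,c$ for some $w'' \in (\I_{T} \setminus \{\iBot\})^{*}$, which is well-formed. Finally, each index of $\iR$ is either $\iVar$-free and well-formed --- untouched by the substitution --- or of the form $v\,\iVar$ with $v \in (\I_{T} \setminus \{\iBot\})^{*}$, in which case the substitution yields $v\,\rho$, a non-$\iBot$ terminal prefix concatenated with the well-formed $\rho$, again well-formed.

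The step I expect to need the most care is the second IG clause: one must track that $\rho$ inherits well-formedness from the matched index $\sigma\rho$ and, dually, that the rule template places $\iVar$ only in a terminating position, so that substituting a well-formed $\rho$ cannot push a $\iBot$ or a nonterminal index symbol into the interior of an index. Everything else is bookkeeping over the componentwise definition of replacement.
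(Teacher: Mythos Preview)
Your proposal is correct and follows essentially the same approach as the paper: prove preservation for a single global step and a single IG step (the latter by the two clauses of Definition~\ref{def:derive}, with the key observation in the second clause that $\rho$ is a well-formed suffix of $\sigma\rho$), then lift to the closures by induction on derivation length. Your treatment is slightly more explicit than the paper's in spelling out why $\rho$ is well-formed and in making the extension of well-formedness to $\iVar$-terminated indices explicit, but the structure and ideas coincide.
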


\begin{proof}
Let $\iC$ be a CFG and $\iG$ be an IG satisfying the assumptions at the beginning of Section~\ref{sec:global}.
Moreover, let $\iH$ be a well-formed IHC.

We first prove that $\iH \iCon{\iC} \iK$ implies that $\iK$ is well-formed:
\begin{align*}
& \iH \iCon{\iC} \iK \\
~\Leftrightarrow~ & 
\exists (\Rule{X}{\sigma}) \in \iC : \Ind_{\iH}(\E_{\iH}^{N}) \subseteq \I_{T}^{*} \I_{N} ~\text{and}~ \iK \Iso \Replace{\iH}{X}{\sigma} 
\tag{Def.~\ref{def:indexabstraction}} \\
~\Rightarrow~ & 
\exists (\Rule{X}{\sigma}) \in \iC : \Ind_{\iK}(\E_{\iK}^{N}) \subseteq \I_{T}^{*} \sigma \,+\, \I_{T}^{*} \I_{N} \\
~\Rightarrow~ & 
\Ind_{\iK}(\E_{\iK}^{N}) \subseteq (\I_{T} \setminus \{z\})^{*} (\I_{N} \cup \{z\})
\tag{$\sigma$ is well-formed} \\
~\Rightarrow~ & \iK ~\text{is well-formed}.
\end{align*}
Next we prove that $\iH \iDerive{\iG} \iK$ implies that $\iK$ is well-formed. 
By Definition~\ref{def:derive}, two cases arise:
\begin{enumerate}
\item 
There exists a rule $(\Rule{X,\sigma}{\iR}) \in \iG$ and an edge $e \in \E_{\iH}^{N}$ such that $\Ind_{\iH}(e) = \sigma$ and $\iK$ is isomorphic to $\Replace{\iH}{e}{\iR}$.
By assumption, $\Ind_{\iR}(e') \subseteq \I_T^{*} z$. 
Thus, for each $e' \in \E_{\iR}$, $\Ind_{\iR}(e)$ is well-formed.
Then, by Definition~\ref{def:ihr} and the fact that $\iH$ is well-formed, $\Ind_{\iK}(e'')$ is well-formed for each $e'' \in \E_{\iK} = (\E_{\iH} \setminus \{e\}) \cup \E_{\iK}$. 
Hence, $\iK$ is well-formed.

\item
There exists a rule $(\Rule{X,\sigma\iVar}{\iR}) \in \iG$, an edge $e \in \E_{\iH}^{N}$, and a sequence $\rho \in \I^{+}$ 
such that $\Ind_{\iH}(e) = \sigma\rho$ and $\iK$ is isomorphic to $\Replace{\iH}{e}{\Replace{\iR}{\iVar}{\rho}}$.
Since $\iH$ is well-formed, so is $\rho$.
By assumption $\Ind_{\iR}(\E_{\iR}^{N}) \subseteq \I_{T}^{*}\iVar$.
This means that $\Replace{\iR}{\iVar}{\rho}$ is well-formed.
Hence, $\Replace{\iH}{e}{\Replace{\iR}{\iVar}{\rho}}$ and thus $\iK$ is well-formed.
\end{enumerate}
It then follows by a straightforward induction on the length of derivations $\iDerive{\iG}^{*}$ and global derivations $\iCon{\iC}^{*}$ that 
$\iH \iCon{\iG}^{*} \iK$ implies that $\iK$ is well-formed and
$\iH \iDerive{\iG}^{*} \iK$ implies that $\iK$ is well-formed.
\qed
\end{proof}

\subsubsection{Analogon to Theorem~\ref{thm:ig-props} for Global Derivations}
\label{app:cfg-props}

\begin{theorem}\label{thm:cfg-props}
  Let $\iC$ be a CFG and $\iH,\iK \in \IHC$.
  Then:
  \begin{enumerate}
    \item $\iH \iCon{\iC}^{*} \iK$ implies $\iGL{\iC}(\iK) \subseteq \iGL{\iC}(\iH)$.
    \item $\iGL{\iC}(\iH) = 
            \begin{cases}
                    \{ \iH \} &~\text{if}~ \Ind_{\iH}(\E_{\iH}^{N}) \subseteq \I_{T}^{*} \iBot \\ 
                \bigcup_{\iH \iCon{\iC} \iK} \iGL{\iC}(\iK) &~\text{otherwise.}
            \end{cases}$
    \item If $\iC$ contains a single nonterminal, i.e. $|\I_{N}|=1$, it is decidable whether $\iGL{\iC}(\iH) = \emptyset$ holds.
    \item 
          If $\iC$ contains no rule of the form $\Rule{X}{Y}$, where $X,Y \in \I_{N}$, then the global inverse language $\rGL{\iC}(\iH)$ is non-empty and finite.
  \end{enumerate}
\end{theorem}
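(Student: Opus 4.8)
The plan is to establish all four parts by transferring the strategy of Theorem~\ref{thm:ig-props} to global derivations, with genuinely new work needed only for the decidability and finiteness claims (3) and (4). For part (1) I would argue exactly as in Theorem~\ref{thm:ig-props}(1): induction on the number $n$ of steps witnessing $\iH \iCon{\iC}^{*} \iK$. The base case $n=0$ gives $\iH \Iso \iK$ and hence equal global languages, and for the step, writing $\iH \iCon{\iC} \iR \iCon{\iC}^{n} \iK$, the induction hypothesis gives $\iGL{\iC}(\iK) \subseteq \iGL{\iC}(\iR)$, so it remains to show $\iGL{\iC}(\iR) \subseteq \iGL{\iC}(\iH)$; this follows by unfolding Definition~\ref{def:globallanguage} and prepending the step $\iH \iCon{\iC} \iR$ to every terminal derivation out of $\iR$. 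Part (2) is the analogous case split to Theorem~\ref{thm:ig-props}(2). If $\Ind_{\iH}(\E_{\iH}^{N}) \subseteq \I_{T}^{*} \iBot$ then, since $\I_{N}$ and $\I_{T}$ are disjoint, the applicability condition $\Ind_{\iH}(\E_{\iH}^{N}) \subseteq \I_{T}^{*} \I_{N}$ of Definition~\ref{def:indexabstraction} fails, so $\iH \niCon{\iC}$, while all indices of $\iH$ already lie in $\I_{T}^{+}$ (as $\I_{T}^{*}\iBot \subseteq \I_{T}^{+}$); hence $\iGL{\iC}(\iH) = \{\iH\}$. Otherwise $\iH$ is not fully reduced, so $\iH \notin \iGL{\iC}(\iH)$, and unfolding the first step yields the stated union; I would note this also covers the mixed sub-case in which some indices end in $\iBot$ and some in a nonterminal, where both sides collapse to $\emptyset$.

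For part (3) the key observation is that with a single index nonterminal $\I_{N} = \{X\}$ every nonterminal-edge index ends in either $\iBot$ or $X$, and a global step rewrites \emph{all} trailing $X$'s by the same $\tau$ at once, so the edges evolve in lockstep. I would split into three easily recognizable cases: if every index already ends in $\iBot$ then $\iGL{\iC}(\iH)=\{\iH\}\neq\emptyset$ by part (2); if the indices are mixed then no step applies and $\iH$ is not reduced, so $\iGL{\iC}(\iH)=\emptyset$; and if every index ends in $X$, then because all edges apply the same rule to their common trailing symbol they become fully reduced simultaneously, whence $\iGL{\iC}(\iH)\neq\emptyset$ iff the nonterminal $X$ derives some terminal string in $\iC$ read as a right-linear string grammar. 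The latter is the standard context-free emptiness problem, hence decidable. It is precisely this lockstep behaviour that the single-nonterminal hypothesis secures: with two nonterminals the edges would have to reach $\iBot$ simultaneously while being rewritten independently, a synchronisation constraint this reduction does not cover.

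Part (4) is the main obstacle, since global derivations leave $\V$ and $\E$ unchanged and the cardinality argument of Theorem~\ref{thm:ig-props}(4) is therefore unavailable; I need a well-founded measure on indices instead. I would attach to each IHC the lexicographic pair $(m(\iH), b(\iH)) \in \Nats \times \Nats$, where $m(\iH) = \sum_{e \in \E_{\iH}^{N}} |\Ind_{\iH}(e)|$ is the total length of nonterminal-edge indices and $b(\iH)$ counts the nonterminal edges whose index ends in $\iBot$, and show that every inverse step that genuinely rewrites an index strictly decreases this pair. Reversing a rule $X \to \tau$ replaces trailing occurrences of $\tau$ by $X$: if $|\tau|\geq 2$ this strictly decreases $m$; if $\tau=\iBot$ then $m$ is unchanged but each affected edge loses its trailing $\iBot$, so $b$ strictly decreases. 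The only remaining length-one possibility is $\tau = Y \in \I_{N}$, which would leave both $m$ and $b$ fixed and thus permit nonterminating cycles (e.g.\ $X \to Y$ together with $Y \to X$) --- exactly the rule shape excluded by hypothesis. Well-foundedness of the lexicographic order on $\Nats \times \Nats$ then gives termination of inverse derivations, so a $\niAbs{\iC}$-normal form is reachable from $\iH$ (using $\iH \iAbs{\iC}^{0} \iH$ for the trivial starting point), yielding non-emptiness of $\rGL{\iC}(\iH)$. For finiteness I would combine termination with finite branching: at each IHC only finitely many rules apply and, for each rule, only finitely many predecessors (obtained by choosing which $\tau$-suffixes to lift) need to be considered, so the tree of inverse derivations is finitely branching with no infinite path and hence finite by K\"onig's lemma, its $\niAbs{\iC}$-leaves being exactly $\rGL{\iC}(\iH)$.
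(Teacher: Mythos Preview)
Your treatment of parts (1)--(3) coincides with the paper's: the same induction on derivation length for (1), the same case split on whether all nonterminal-edge indices already end in $\iBot$ for (2), and the same three-way case analysis reducing to context-free emptiness for (3).

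For part (4) you take a genuinely different route. The paper uses the single measure $\textit{length}(\iH) = \max\{|\sigma| : \sigma \in \Ind_{\iH}(\E_{\iH})\}$ and argues that inverse chains have length at most $1 + \textit{length}(\iH)$. You instead use the lexicographic pair $(m,b)$ with $m$ the \emph{total} index length. Your choice is actually the more robust one: with more than one index nonterminal the paper's max-length need not strictly decrease along an inverse chain, since the longest index may end in a nonterminal different from the one being rewritten (e.g.\ $K_{i+1}$ with indices $ssssY$ and $X$, rule $X \to sZ$, gives $K_i$ with indices $ssssY$ and $sZ$; the maximum stays $5$). Total length, by contrast, does strictly decrease on every genuine inverse step after the first: once $i \geq 1$ the applicability condition forces all indices of $K_i$ to end in nonterminals, so the rule used satisfies $|\tau| \geq 2$ and $m(K_{i+1}) < m(K_i)$. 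This also shows your secondary component $b$ is more than you need---it suffices to note that only the very first inverse step (from $\iH$ itself) can use a rule $X \to \iBot$, after which $m$ alone is strictly decreasing; the bound $1 + m(\iH)$ on chain length then gives both non-emptiness and finiteness directly, without K\"onig's lemma. Your finite-branching observation is correct but likewise unnecessary once a uniform bound on chain length is available.
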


We write $\iH \iCon{\iC}^{n} \iK$ to denote that $\iC$ globally derives $\iK$ from $\iH$ in exactly $n \in \Nats$ steps.
Formally, 
\begin{itemize}
\item $\iH \iCon{\iC}^{0} \iH$, and
\item $\iH \iCon{\iC}^{n+1} \iK$ iff $\exists R : \iH \iCon{\iC} \iR$ and $\iR \iCon{\iC} \iK$.
\end{itemize}
Then $\iCon{\iC}^{*} = \bigcup_{n \in \Nats} \iCon{\iC}^{n}$.

\bigskip
\noindent\textbf{Theorem~\ref{thm:cfg-props}(1).}
Let $\iC$ be a CFG and $\iH,\iK \in \IHC$. Then
$\iH \iCon{\iC}^{*} \iK$ implies $\iGL{\iC}(\iK) \subseteq \iGL{\iC}(\iH)$.

\begin{proof}
Similarly to the proof of Theorem~\ref{thm:ig-props}(1), we show for all natural numbers $n \in \Nats$ that $\iH \iCon{\iC}^{n} \iK$ implies $\iGL{\iC}(\iK) \subseteq \iGL{\iC}(\iH)$.
By induction on $n$.
\noindent\textbf{I.B.} For $n = 0$, we have
\begin{align*}
& \iH \iCon{\iC}^{0} \iK \\
~\Leftrightarrow~ &
\iH \Iso \iK
\tag{Def.~$\iCon{\iC}^{0}$} \\
~\Rightarrow~ &
\iGL{\iC}(\iH) = \iGL{\iC}(\iK).
\end{align*}

\noindent\textbf{I.H.} Assume for an arbitrary, but fixed $n \in \Nats$ that $\iH \iCon{\iC}^{n} \iK$ implies $\iGL{\iC}(\iH) \subseteq \iGL{\iC}(\iK)$.

\noindent\textbf{I.S.} For $n \mapsto n+1$ we have
\begin{align*}
& 
\iH \iCon{\iC}^{n+1} \iK 
\\
~\Leftrightarrow~ &
\exists \iR : \iH \iCon{\iC} \iR ~\text{and}~ \iR ~\iCon{\iC}^{n} \iK 
\tag{Def.~$\iCon{\iC}^{n+1}$} \\
~\Rightarrow~ &
\exists \iR : \iH \iCon{\iC} \iR ~\text{and}~ \iGL{\iC}(\iK) \subseteq \iGL{\iC}(\iR).
\tag{I.H.}
\end{align*}
It then suffices to show that $\iGL{\iC}(\iR) \subseteq \iGL{\iC}(\iH)$ holds as
\[ \iGL{\iC}(\iK) \subseteq \iGL{\iC}(\iR) ~\text{and}~ \iGL{\iC}(\iR) \subseteq \iGL{\iC}(\iH) ~\text{implies}~ \iGL{\iC}(\iK) \subseteq \iGL{\iC}(\iH). \]
The remaining proof obligation is shown as follows:
\begin{align*}
& \iGL{\iC}(\iR) \\
~=~ &
\{ F ~|~ \iR \iCon{\iC}^{*} F ~\text{and}~ \Ind_{F}(\E_{F}) \subseteq \I_{T}^{+} \} 
\tag{Def.~\ref{def:globallanguage}} \\
~=~ &
\{ F ~|~ \iH \iCon{\iC} \iR \iCon{\iC}^{*} F ~\text{and}~ \Ind_{F}(\E_{F}) \subseteq \I_{T}^{+} \} 
\tag{$\iH \iCon{\iC} \iR$ possible by assumption} \\
~\subseteq~ &
\{ F ~|~ \iH \iCon{\iC}^{*} F ~\text{and}~ \Ind_{F}(\E_{F}) \subseteq \I_{T}^{+} \}
\\
~=~ &
\iGL{\iC}(\iH).
\tag{Def.~\ref{def:globallanguage}}
\end{align*}
\qed
\end{proof}

\bigskip
\noindent\textbf{Theorem~\ref{thm:cfg-props}(2).}
Let $\iC$ be a CFG and $\iH,\iK \in \IHC$. Then
\[\iGL{\iC}(\iH) = 
\begin{cases}
        \{ \iH \} &~\text{if}~ \Ind_{\iH}(E_{\iH}^{N}) \subseteq \I_T^{*} \iBot \\
    \bigcup_{\iH \iCon{\iC} \iK} \iGL{\iC}(\iK) &~\text{otherwise.}
\end{cases}\]

\begin{proof}
\noindent\textbf{Case 1:}
Assume that $\Ind_{\iH}(E_{\iH}^{N}) \subseteq \I_{T}^{*} \iBot$.
By Definition~\ref{def:indexabstraction} this means that $\iH \niCon{\iC}$. Then
\begin{align*}
& \iGL{\iC}(\iH) \\
~=~ &
\{ \iK ~|~ \iH \iCon{\iC}^{*} \iK ~\text{and}~ \Ind_{\iK}(\E_{\iK}) \subseteq \I_{T}^{+} \}
\tag{Def.~\ref{def:globallanguage}} \\
~=~ &
\{ \iK ~|~ \iK \iCon{\iC}^{0} \iK \} 
\tag{$\iH \niCon{\iC}$} \\
~=~ &
\{ \iH \}.
\tag{Def.~$\iCon{\iC}^{0}$, $\I_{N},\I_{T}$ disjoint}
\end{align*}

\noindent\textbf{Case 2:}
Assume that $\Ind_{\iH}(E_{\iH}^{N}) \not\subseteq \I_{T}^{*} \iBot$.
Then there exists $X \in \I_{N}$ and $e \in E_{\iH}$ such that $\Ind_{\iH}(e) \subseteq \I_{T}^{*} X$.
By Definition~\ref{def:globallanguage}, $\iH \notin \iGL{\iC}(\iH)$.
If $\iGL{\iC}(\iH) = \emptyset$, there is nothing to show. 
Thus assume $\iGL{\iC}(\iH) \neq \emptyset$.
By Definition~\ref{def:indexabstraction} this means that there exists a $\iK$ such that $\iH \iCon{\iC} \iK$. 
Then
\begin{align*}
& \iGL{\iC}(\iH) \\
~=~ & 
\{ \iR ~|~ \iH \iCon{\iC}^{*} \iR ~\text{and}~ \Ind_{\iR}(e) \subseteq \I_{T}^{+} \} 
\tag{Def.~\ref{def:globallanguage}} \\
~=~ & 
\{ \iR ~|~ \exists n \in \Nats : \iH \iCon{\iC}^{n+1} \iR ~\text{and}~ \Ind_{\iR}(e) \subseteq \I_{T}^{+} \} 
\tag{$\iH \in \iGL{\iC}(\iH)$ and $\iH \iCon{\iC}^{0} \iR$ iff $\iH \Iso \iR$} \\
~=~ & 
\{ \iR ~|~ \exists \iK \exists n \in \Nats : \iH \iCon{\iC} \iK \iCon{\iC}^{n} \iR ~\text{and}~ \Ind_{\iR}(e) \subseteq \I_{T}^{+} \} 
\tag{Def. $\iCon{\iC}^{n+1}$} \\
~=~ &
\bigcup_{\iH \iCon{\iC} \iK} \{ \iR ~|~ \exists n \in \Nats : \iK \iCon{\iC}^{n} \iR ~\text{and}~ \Ind_{\iR}(e) \subseteq \I_{T}^{+} \} 
\\ 
~=~ &
\bigcup_{\iH \iCon{\iC} \iK} \{ \iR ~|~ \iK \iCon{\iC}^{*} \iR ~\text{and}~ \Ind_{\iR}(e) \subseteq \I_{T}^{+} \} 
\tag{Def. $\iCon{\iC}^{*}$} \\
~=~ &
\bigcup_{\iH \iCon{\iC} \iK} \iGL{\iC}(\iK).
\tag{Def.~\ref{def:globallanguage}} 
\end{align*}
\qed
\end{proof}

\bigskip
\noindent\textbf{Theorem~\ref{thm:cfg-props}(3).}
Let $\iC$ be a CFG with a single nonterminal, i.e. $|\Ind_{\iH}|=1$, and $\iH \in \IHC$. Then
it is decidable whether $\iGL{\iC}(\iH) = \emptyset$ holds.

\begin{proof}
Let $\iC$ be a CFG and $\iH \in \IHC$.
By assumption (cf. Section~\ref{sec:global}), $\iH$ is well-formed.
That is, $\Ind_{\iH}(\E_{\iH}) \subseteq (\I_{T} \setminus \{z\})^{*} (\I_{N} \cup \{z\})$.
Furthermore, again by assumption, we know that for each rule $(\Rule{X}{\sigma}) \in \iC$, $\sigma$ is well-formed.
We distinguish three cases:
\begin{enumerate}
        \item $\Ind_{\iH}(\E_{\iH}) \subseteq (\I_{T} \setminus \{z\})^{*} \{ z \} \subseteq \I_{T}^{+}$. 
                By Definition~\ref{def:globallanguage}, this means that $\iH \in \iGL{\iC}(\iH)$ and thus $\iGL{\iC}(\iH) \neq \emptyset$.
        \item For the single nonterminal $X$ of $\iC$, we have $\Ind_{\iH}(\E_{\iH}) \subseteq (\I_{T} \setminus \{z\})^{*} \{ X \}$.
              Then
\begin{align*}
& \iGL{\iC}(\iH) \neq \emptyset \\
~\Leftrightarrow~ & \exists \iK : \iH \iCon{\iC}^{*} \iK 
~\text{and}~ \Ind_{\iK}(\E_{\iK}) \subseteq \I_{\T}^{+} 
\tag{Def.~\ref{def:globallanguage}} \\
~\Leftrightarrow~ & \exists \rho \in \iL{\iC}(X) \exists \iK : \iH \iCon{\iC}^{*} \iK ~\text{and}~ \iK ~\text{isomorphic to}~ \Replace{\iH}{X}{\rho} 
\tag{Def.~\ref{def:indexabstraction}, assumption for case 2, context-freeness of $\iC$} \\
~\Leftrightarrow~ & \iL{\iC}(X) \neq \emptyset~,
\end{align*}
where the last line is an instance of the emptiness problem for context-free string grammars.
This problem is well-known to be decidable in linear time.\footnote{For details confer Rozenberg, Salomaa: \enquote{Handbook of Formal Languages}, Vol. 1, 1997} 
\item Neither the first, nor the second case holds. By Definition~\ref{def:globallanguage} this means that $\iH \notin \iGL{\iC}(\iH)$. Moreover, by Definition~\ref{def:indexabstraction}, there exists no $\iK$ such that
      $\iH \iCon{\iC} \iK$. Hence, $\iGL{\iC}(\iH) = \emptyset$.
\end{enumerate}
Hence, we can decide whether $\iGL{\iC}(\iH) = \emptyset$ holds by first checking which of the above cases applies to the finite set $\Ind_{\iH}(\E_{\iH})$ and then proceed as shown for each case.
\qed
\end{proof}

\bigskip
\noindent\textbf{Theorem~\ref{thm:cfg-props}(4).}
Let $\iC$ be a CFG containing no rule of the form $\Rule{X}{Y}$ for $X,Y \in \I_{N}$.
Moreover, let $\iH,\iK \in \IHC$. Then the global inverse language $\rGL{\iC}(\iH)$ is non-empty and finite.
\begin{proof}
Recall that $\iH \iAbs{\iC} \iK$ holds iff $\iK \iCon{\iC} \iH$ holds.
Let 
\[ \textit{length}(\iH) = \max \{ |\sigma| ~|~ \sigma \in \Ind_{\iH}(\E_{\iH}) \} \]
be the maximal length of all indices in $\iH$.
By Definition~\ref{def:indexabstraction}, we know that $\iK \iCon{\iC} \iH$ implies
\begin{align*}
  0 \leq \textit{length}(\iH) \leq \textit{length}(\iK).
\end{align*}
By the premise of the theorem and the fact that all indices are well-formed, we have:
\begin{align*}
& \textit{length}(\iH) = \textit{length}(\iK)
\\
~\Rightarrow~ &
\Ind_{\iK}(\E_{\iK}) \subseteq \I_{T} z 
\tag{$\iH,\iK$ well-formed, premise} \\
~\Rightarrow~ &
\iK \niCon{\iC}.
\tag{Def.~\ref{def:indexabstraction}}
\end{align*}
In particular, this means that 
\begin{align*}
  \exists \iR : \iK \iCon{\iC} \iR ~\text{implies}~ 0 \leq \textit{length}(\iH) < \textit{length}(\iK). \tag{$\spadesuit$}
\end{align*}
We now prove that $\rGL{\iC}(\iH)$ is finite:
\begin{align*}
& \rGL{\iC}(\iH) \\
~=~ & 
\{ \iK ~|~ \iH \iAbs{\iC}^{*} \iK ~\text{and}~ \iK \niAbs{\iC} \} 
\tag{Def.~\ref{def:globallanguage}} \\
~=~ & 
\{ \iK ~|~ \iK \iCon{\iC}^{*} \iH ~\text{and}~ \iK \niAbs{\iC}\} 
\tag{Def.~\ref{def:indexabstraction}} \\
~=~ &
\{ \iK ~|~ \exists n \in \Nats : \iK \iCon{\iC}^{n} \iH ~\text{and}~ \iK \niAbs{\iC}\} 
\tag{Def.~$\iCon{\iC}^{*}$} \\
~=~ &
\{ \iK ~|~ \exists n \leq m : \iK \iCon{\iC}^{n} \iH ~\text{and}~ \iK \niAbs{\iC}\} 
\tag{set $m = 1 + \textit{length}(\iH)$ and apply $(\spadesuit)$}~,
\end{align*}
where the last set is finite.
It remains to prove that $\rGL{\iC}(\iH) \neq \emptyset$:
First, note that $\iH \iCon{\iC}^{0} \iH$ always holds.
Further, by $(\spadesuit)$, after at most $m$ inverse global derivation steps, i.e. $\bigcup_{0 \leq n \leq m} \iAbs{\iC}^{n}$, no further inverse global derivation is possible.
Hence, $\rGL{\iC}(\iH) \neq \emptyset$.
\qed
\end{proof}

\subsubsection{Commutability of IG Derivations and Gobal Derivations}
\label{app:indexcommutable}

\begin{theorem}\label{thm:indexcommutable}
        \[ \iH (\iDerive{\iG} \FComp \iCon{\iC}) \iK ~\text{implies}~ \iH (\iCon{\iC} \FComp \iDerive{\iG}) \iK.\]
\end{theorem}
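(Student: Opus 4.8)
The plan is to prove the single-step commutation by exhibiting one explicit intermediate IHC and reducing everything to the fact that the two rewriting operations act on orthogonal syntactic material: an IG derivation replaces a hyperedge carrying a \emph{label} $A \in N$, whereas a global derivation rewrites an \emph{index} symbol $X \in \I_{N}$, and by the standing assumption of Section~\ref{sec:global} the right-hand side $\iR$ of every IG rule contains no symbol from $\I_{N}$ whatsoever. First I would unfold the hypothesis: by definition of $\FComp$ there is an IHC $\iM$ with $\iH \iDerive{\iG} \iM$ and $\iM \iCon{\iC} \iK$. Fix the IG rule $(\Rule{A,\sigma'}{\iR})$ and the edge $e \in \E_{\iH}^{\{A\}}$ realizing the first step, and the CFG rule $(\Rule{X}{\tau}) \in \iC$ with $X \in \I_{N}$ realizing the second. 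By Definition~\ref{def:derive} there are two cases for the IG step; the informative one is the $\iVar$-case, where $\sigma' = \sigma\iVar$ with $\sigma \in \I_{T}^{*}$, $\Ind_{\iH}(e) = \sigma\rho$ for some $\rho \in \I^{+}$, and $\iM \Iso \Replace{\iH}{e}{\Replace{\iR}{\iVar}{\rho}}$.

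Next I would propose the candidate intermediate graph $\iM' := \Replace{\iH}{X}{\tau}$ and establish the two legs $\iH \iCon{\iC} \iM'$ and $\iM' \iDerive{\iG} \iK$. The second leg is the algebraic core and I expect it to be routine. Since $\sigma \in \I_{T}^{*}$ contains no symbol of $\I_{N}$, the edge $e$ still carries a matching index $\sigma\rho'$ in $\iM'$, where $\rho' = \Replace{\rho}{X}{\tau} \in \I^{+}$, so the \emph{same} rule $(\Rule{A,\sigma\iVar}{\iR})$ applies to $e$ in $\iM'$. Because $\iR$ has no occurrence of $X$ outside the positions marked by $\iVar$, the two index operators commute on $\iR$, i.e. $\Replace{(\Replace{\iR}{\iVar}{\rho})}{X}{\tau} = \Replace{\iR}{\iVar}{\rho'}$. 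Combining this with the fact that edge replacement (Definition~\ref{def:ihr}) only affects $e$ and the fresh copy of $\iR$, while leaving all inherited edges untouched up to the global substitution, gives $\Replace{\iM'}{e}{\Replace{\iR}{\iVar}{\rho'}} \Iso \Replace{\iM}{X}{\tau} \Iso \iK$, which is exactly $\iM' \iDerive{\iG} \iK$.

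The first leg is where I expect the main obstacle to sit: to legitimately perform $\iH \iCon{\iC} \iM'$ I must discharge the precondition of Definition~\ref{def:indexabstraction}, namely $\Ind_{\iH}(\E_{\iH}^{N}) \subseteq \I_{T}^{*}\I_{N}$. For every nonterminal edge of $\iH$ other than $e$ this is immediate, as these edges survive unchanged into $\iM$ and $\iM \iCon{\iC} \iK$ already presupposes $\Ind_{\iM}(\E_{\iM}^{N}) \subseteq \I_{T}^{*}\I_{N}$. The delicate point is the replaced edge $e$ itself, for which I must show $\sigma\rho \in \I_{T}^{*}\I_{N}$. Here I would invoke well-formedness (Lemma~\ref{thm:wellformed}) together with the Section~\ref{sec:global} restriction $\Ind_{\iR}(\E_{\iR}^{N}) \subseteq \I_{T}^{*}\{\iBot,\iVar\}$: any nonterminal edge of $\iR$ ending in $\iBot$ would survive into $\iM$ with an index ending in $\iBot \notin \I_{N}$, contradicting the presupposition for $\iM \iCon{\iC} \iK$; hence every nonterminal edge of $\iR$ ends in $\iVar$, and after $\Replace{\cdot}{\iVar}{\rho}$ this forces the well-formed suffix $\rho$ to end in a symbol of $\I_{N}$, whence $\sigma\rho \in \I_{T}^{*}\I_{N}$ as required. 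The degenerate cases in which $\iR$ carries no nonterminal edge at all (in particular the exact-match IG case, whose left-hand side necessarily ends in $\iBot$) must be treated separately, relying on the observation that then $X$ cannot occur in $\Ind_{\iH}(e)$ and the two steps touch disjoint edge sets; reconciling the global-derivation precondition across the swap in these base cases is the subtle step, and it is exactly where the well-formedness discipline of Section~\ref{sec:global} is essential. Once both legs are in place, $\iH \, (\iCon{\iC} \FComp \iDerive{\iG}) \, \iK$ follows directly from the definition of $\FComp$.
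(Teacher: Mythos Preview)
Your proposal is correct and follows essentially the same route as the paper: unfold the hypothesis to an intermediate $\iM$, propose $\iM' = \Replace{\iH}{X}{\tau}$, and verify the two swapped legs by commuting the index substitution $[X \mapsto \tau]$ past the hyperedge replacement, using the Section~\ref{sec:global} restriction that IG right-hand sides contain no symbol of $\I_{N}$. The paper's own proof is a terse equational chain that only spells out the $\iVar$-case of Definition~\ref{def:derive} and derives $\Ind_{\iH}(\E_{\iH}^{N}) \subseteq \I_{T}^{*}\I_{N}$ in one line tagged ``Ass.\ Sec.~\ref{sec:global}''; you are more explicit about \emph{why} that inclusion holds for the replaced edge $e$ (tracing it back through the surviving nonterminal edges of $\iR$) and you explicitly flag the degenerate situations (exact-match rules, $\iR$ without nonterminal edges) that the paper silently elides. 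Your identification of the global-derivation precondition as the crux is exactly right and matches the paper's implicit reasoning.
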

\begin{proof}

\begin{align*}
& \iH (\iDerive{\iG} \FComp \iCon{\iC}) \iK \\
~\Rightarrow~ & 
\exists \iR : \iH \iDerive{\iG} \iR ~\text{and}~ \iR \iCon{\iC} \iK 
\tag{Def.~$\FComp$} \\
~\Rightarrow~ & 
\exists (\Rule{X}{\sigma}) \in \iC \exists \iR \,:\,
\iH \iDerive{\iG} \iR 
\tag{Def.~\ref{def:indexabstraction}} \\
& \qquad \text{and}~ \Ind_{\iR}(\E_{\iR}^{N}) \subseteq \I_{T}^{*}\I_{N} ~\text{and}~ \iK \Iso \Replace{\iR}{X}{\sigma}
\\
~\Rightarrow~ & 
\exists (\Rule{X}{\sigma}) \in \iC \exists \iR \,:\,
\iH \iDerive{\iG} \iR ~\text{and}~ \Ind_{\iH}(\E_{\iH}^{N}) \subseteq \I_{T}^{*} X 
\tag{Ass. Sec.~\ref{sec:global}} \\
& \qquad \text{and}~ \Ind_{\iR}(\E_{\iR}^{N}) \subseteq \I_{T}^{*}\I_{N} ~\text{and}~ \iK \Iso \Replace{\iR}{X}{\sigma}
\\
~\Rightarrow~ & 
\exists (\Rule{X}{\sigma}) \in \iC \, \exists \iR \, \exists (\Rule{Y,i\iVar}{\iR'}) \in \iG \, \exists e' \in \E_{\iH}^{Y} \, \exists j \in \I_{\T}^{*}X \,:\, 
\tag{Def.~\ref{def:derive}, case 2} \\
& \qquad \Ind_{\iH}(e') = ij ~\text{and}~ \iR \Iso \Replace{\iH}{e'}{\Replace{\iR'}{\iVar}{j}} ~\text{and}~ \Ind_{\iH}(\E_{\iH}^{N}) \subseteq \I_{T}^{*} \I_{N} 
\\
& \qquad \text{and}~ \Ind_{\iR}(\E_{\iR}^{N}) \subseteq \I_{T}^{*}\I_{N} ~\text{and}~ \iK \Iso \Replace{\iR}{X}{\sigma}
\\
~\Rightarrow~ & 
\exists (\Rule{X}{\sigma}) \in \iC \, \exists (\Rule{Y,i\iVar}{\iR'}) \in \iG \, \exists e' \in \E_{\iH}^{Y} \, \exists j \in \I_{\T}^{*}\I_{N} \,:\, 
\tag{remove $\iR$} \\
& \qquad \iK \Iso \Replace{(\Replace{\iH}{e'}{\Replace{\iR'}{\iVar}{j}})}{X}{\sigma} \\ 
& ~\text{and}~ \Ind_{\iH}(\E_{\iH}^{N}) \subseteq \I_{T}^{*} \I_{N} ~\text{and}~ \Ind_{\iH}(e') = ij 
\\
~\Rightarrow~ & 
\exists (\Rule{X}{\sigma}) \in \iC \, \exists (\Rule{Y,i\iVar}{\iR'}) \in \iG \, \exists e' \in \E_{\iH}^{Y} \, \exists j \in \I_{\T}^{*}X \,:\, 
\\
& \qquad \iK \Iso \Replace{(\Replace{\iH}{X}{\sigma})}{e'}{\Replace{\iR'}{\iVar}{\ell}} \tag{set $\ell = \Replace{j}{X}{\sigma}$} \\ 
& ~\text{and}~ \Ind_{\iH}(\E_{\iH}^{N}) \subseteq \I_{T}^{*} \I_N ~\text{and}~ \Ind_{\iH}(e') = ij
\\
~\Rightarrow~ & 
\exists \iR \, \exists (\Rule{X}{\sigma}) \in \iC \, \exists (\Rule{Y,i\iVar}{\iR'}) \in \iG \, \exists e' \in \E_{\iR}^{Y} \, \exists \ell \in \I_{\T}^{*}X \,:\, 
\\
& \qquad \iR \Iso \Replace{\iH}{X}{\sigma} ~\text{and}~ \Ind_{\iH}(\E_{\iH}^{N}) \subseteq \I_{T}^{*} \I_{N} \\
& \qquad ~\text{and}~ \iK \Iso \Replace{\iR}{e'}{\Replace{\iR'}{\iVar}{\ell}} ~\text{and}~ \Ind_{\iR}(e') = i\ell
\\
~\Rightarrow~ & 
\exists \iR \, \exists (\Rule{Y,i\iVar}{\iR'}) \in \iG \, \exists e' \in \E_{\iR}^{Y} \, \exists \ell \in \I_{\T}^{*}X \,:\, 
\tag{Def.~\ref{def:indexabstraction}}  \\
& \qquad \iH \iCon{\iC} \iR ~\text{and}~ \iK \Iso \Replace{\iR}{e'}{\Replace{\iR'}{\iVar}{\ell}} ~\text{and}~ \Ind_{\iR}(e') = i\ell
\\
~\Rightarrow~ & 
\exists \iR \,:\, \iH \iCon{\iC} \iR ~\text{and}~ \iR \iDerive{\iG} \iK
\tag{Def.~\ref{def:derive}} \\
~\Rightarrow~ &
\iH (\iCon{\iC} \FComp \iDerive{\iG}) \iK
\tag{Def.~$\FComp$}
\end{align*}
\qed
\end{proof}

\subsection{Analysis with Index Abstraction}\label{app:sec-cfg-analysis}

We now refine our analysis presented in Section~\ref{sec:analysis} such that only relevant relationships between indices are kept.
Our new abstraction places the inverse global language of a CFG on top of our original abstraction guided by an IG.
Hence, given an IG $\iG$ and a CFG $\iC$, we change the original setting (cf. Figure~\ref{fig:domains}, p.~\SUMPAGE) as follows: 
\begin{itemize}
        \item Concrete domain:  $(\DomCon = \PS{\underbrace{\Lang{\iG}{\iGL{\iC}(\IHC)}}_{\text{concrete IHCs}}}, \subseteq)$
        \item Abstract domain:  $(\DomAbs = \PS{\underbrace{\rGL{\iC}({\iLang{\iG}{\IHC}})}_{\text{fully abstract IHCs}}}, \LOrder{})$
  \item Concretization and abstraction: $\con{\iG} ~=~ \iGL{\iC} \FComp \iL{\iG}$, $\abs{\iG} ~=~ \rL{\iG} \FComp \rGL{\iC}$
  \item Ordering of abstract domain: $\iH \LOrder{} \iK$ iff $\con{\iG}(\iH) \subseteq \con{\iG}(\iK)$
\end{itemize}

Materialization and canonicalization now formally also depend on the user-provided CFG $\iC$.
We experienced that aggressively abstracting indices 
and partially concretizing them whenever needed works well in practice.

We remark that all results from Sections~\ref{sec:analysis} and~\ref{sec:backwardconfluent} can be lifted to our refined analysis.
In particular, the analysis is sound and --- in the case of backward confluent grammars --- the inclusion problem is decidable:

\begin{theorem}[Soundness]\label{thm:indexed-sound}
The refined analysis from above is sound. That is,
$\forall \pP \in \Progs \,:\, \con{\iG} \FComp \PSem{P} ~\FOrder~ \PASem{P} \FComp \con{\iG}$.
\end{theorem}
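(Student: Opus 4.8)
The plan is to observe that the inductive argument proving Theorem~\ref{thm:sound} in Appendix~\ref{app:sec-sound} is \emph{parametric} in the triple $(\con{\iG}, \Mater{\iG}{P}, \Canon{\iG}{P})$: the induction on the syntax of $\Progs$-programs invokes only three abstract facts, namely monotonicity of the concretization, the materialization inequality of Definition~\ref{prop:mater}, and the canonicalization inequality of Definition~\ref{prop:canon} (together with the auxiliary property $(\spadesuit)$ that materialization does not alter the evaluation of a guard). Consequently, to establish Theorem~\ref{thm:indexed-sound} it suffices to re-verify these prerequisites for the refined concretization $\con{\iG} = \iGL{\iC} \FComp \iL{\iG}$ and the refined materialization and canonicalization functions of Appendix~\ref{app:sec-cfg-analysis}; the base cases, the case of sequential composition, the case of conditionals, and the loop case (via finite unrollings) then transcribe verbatim.

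First I would dispatch monotonicity. Lifted pointwise to powersets (as unions over point-images), both $\iL{\iG}$ and $\iGL{\iC}$ are monotone with respect to set inclusion, and so is their composition; hence $\con{\iG} = \iGL{\iC} \FComp \iL{\iG}$ is monotone, exactly the property used at the \emph{``$\iL{\iG}$ monotone''} steps of the original proof. Next I would re-establish canonicalization. The refined canonicalization is built from inverse IG steps $\rDerive{\iG}$ and inverse global steps $\iAbs{\iC}$, and the two corresponding language-preservation results, Theorem~\ref{thm:ig-props}(1) and its global analogue Theorem~\ref{thm:cfg-props}(1), each guarantee that such an inverse step never drops concrete graphs. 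Chaining these two facts through the composite $\con{\iG}$ yields $\con{\iG} \FOrder \Canon{\iG}{P} \FComp \con{\iG}$.

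The delicate part is the materialization inequality, where partial concretization must now interleave index materialization (global derivations $\iCon{\iC}$) with ordinary IG derivations $\iDerive{\iG}$. Here I would invoke the orthogonality theorem of Section~\ref{sec:global}, whose proof rests on the commutation lemma Theorem~\ref{thm:indexcommutable}, to factor every combined derivation sequence as $\iCon{\iC}^{*} \FComp \iDerive{\iG}^{*}$. This normal form shows that performing index materialization first and IG materialization second exposes exactly the concrete structure that the IG-only materialization of Definition~\ref{prop:mater} would; since the newly introduced global derivations only refine indices and leave the concrete pointer structure relevant to guard evaluation untouched, property $(\spadesuit)$ is preserved as well. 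Combining the factorization with Definition~\ref{prop:mater} for the IG-only materialization then gives $\con{\iG} \FComp \PSem{P} \FOrder \Mater{\iG}{P} \FComp \PSem{P} \FComp \con{\iG}$ in the refined setting.

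I expect the main obstacle to be precisely this materialization step: one must argue that committing to the global-first, IG-second order neither blocks the subsequent application of the concrete semantics nor loses any concretization. The orthogonality theorem is the crucial lever, because it collapses the apparently non-deterministic interleaving of $\iDerive{\iG}$ and $\iCon{\iC}$ into a canonical factored form; once that reduction is in place, the refined soundness inequalities decompose into the already-proved statements about the two derivation relations taken separately, and the parametric induction of Theorem~\ref{thm:sound} applies unchanged.
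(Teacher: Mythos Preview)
Your proposal is correct and follows essentially the same approach as the paper, which simply states that the theorem ``is proven analogously to Theorem~\ref{thm:sound}.'' You have supplied considerably more detail than the paper does---in particular your discussion of how the orthogonality theorem justifies the refined materialization and how Theorems~\ref{thm:ig-props}(1) and~\ref{thm:cfg-props}(1) yield the canonicalization inequality---but note that Definitions~\ref{prop:mater} and~\ref{prop:canon} are stated as \emph{requirements} on the (refined) materialization and canonicalization functions rather than properties to be proved, so for the bare soundness argument it would suffice to assume they hold with respect to the new $\con{\iG}$ and re-run the induction verbatim.
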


\begin{theorem}
Let $\iC$ and $\iG$ be a backward confluent CFG and IG, respectively.\footnote{$\iC$ is backward confluent if for all well-formed IHCs $\iH$, we have $|\rGL{\iC}(\iH)| = 1$.}
Moreover, let $\iH,\iK$ be well-formed IHCs such that $\iK \niAbs{\iC}$ and $\iK \nrDerive{\iG}$.
Then it is decidable whether $\iH \LOrder{} \iK$ holds.
\end{theorem}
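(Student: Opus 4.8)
The plan is to replay the decision procedure of Theorem~\ref{thm:backwardconfluent-inclusion}, but for the \emph{combined} derivation relation $(\iDerive{\iG} \cup \iCon{\iC})$, using the lifted versions of Theorems~\ref{thm:ig-props} and~\ref{thm:cfg-props} together with the orthogonality property (Theorem~\ref{thm:indexcommutable}) and its inverse-order dual. Recall that in the refined setting $\con{\iG} = \iGL{\iC} \FComp \iL{\iG}$, so by orthogonality $\con{\iG}(\iH)$ is exactly the set of fully concrete IHCs (those with no graph nonterminals and no index nonterminals) reachable from $\iH$ via $(\iDerive{\iG} \cup \iCon{\iC})^{*}$. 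Deciding $\iH \LOrder{} \iK$ then amounts to deciding $\con{\iG}(\iH) \subseteq \con{\iG}(\iK)$ for this combined relation, where $\iK$ is a \emph{combined normal form}, i.e.\ $\iK \nrDerive{\iG}$ and $\iK \niAbs{\iC}$.

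First I would assemble three combined ingredients. (i) A well-founded measure: inverse IG steps $\rDerive{\iG}$ strictly decrease $|\V| + |\E|$ (increasingness; see the remark after Definition~\ref{def:derive}), whereas inverse global steps $\iAbs{\iC}$ strictly decrease the total index length $\sum_{e \in \E} |\Ind(e)|$, since a well-defined inverse global language forbids chain rules $\Rule{X}{Y}$ (Theorem~\ref{thm:cfg-props}(4)). Ordering $(|\V| + |\E|,\, \sum_{e \in \E} |\Ind(e)|)$ lexicographically makes every chain of combined inverse derivations from a fixed well-formed IHC terminate, with length bounded by its measure. (ii) Combined backward confluence: $\abs{\iG} = \rL{\iG} \FComp \rGL{\iC}$ composes two singleton-valued maps ($|\rL{\iG}(\cdot)| = 1$ and $|\rGL{\iC}(\cdot)| = 1$), and by the inverse-order dual of Theorem~\ref{thm:indexcommutable} applying all inverse IG steps first and then all inverse global steps never re-enables an inverse IG step; hence $|\abs{\iG}(\iH)| = 1$ for every well-formed $\iH$. (iii) A combined intersection lemma, the analogue of Theorem~\ref{thm:backwardconfluent-intersection}: for combined normal forms $\iH_{0}, \iK_{0}$ one has $\con{\iG}(\iH_{0}) \cap \con{\iG}(\iK_{0}) \neq \emptyset$ iff $\iH_{0} \Iso \iK_{0}$, proved verbatim as Theorem~\ref{thm:backwardconfluent-intersection} by abstracting a shared concrete graph to the unique combined normal form from (ii).

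With these in hand the procedure splits into two cases. I would first decide whether $\iK\,(\iDerive{\iG} \cup \iCon{\iC})^{*}\,\iH$ holds; equivalently, whether $\iH$ reaches $\iK$ by combined inverse derivations, which is decidable by bounded search using the measure from (i). If it holds, the lifted Theorem~\ref{thm:ig-props}(1) yields $\con{\iG}(\iH) \subseteq \con{\iG}(\iK)$ and the answer is \emph{yes}. Otherwise, let $\{\iR\} = \abs{\iG}(\iH)$ be the unique combined normal form of $\iH$ from (ii); since $\iK$ is already a combined normal form, $\abs{\iG}(\iK) = \{\iK\}$. If $\iR \Iso \iK$ held, then $\iR\,(\iDerive{\iG} \cup \iCon{\iC})^{*}\,\iH$ would give $\iK\,(\iDerive{\iG} \cup \iCon{\iC})^{*}\,\iH$, contradicting the case; hence $\iR$ is not isomorphic to $\iK$, and by (iii) $\con{\iG}(\iR) \cap \con{\iG}(\iK) = \emptyset$. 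Since $\con{\iG}(\iH) \subseteq \con{\iG}(\iR)$ by Theorem~\ref{thm:ig-props}(1), the two languages are disjoint, so $\con{\iG}(\iH) \subseteq \con{\iG}(\iK)$ holds iff $\con{\iG}(\iH) = \emptyset$, which is decidable by reducing combined emptiness to IG-emptiness (Theorem~\ref{thm:ig-props}(3)) composed with CFG index-emptiness (Theorem~\ref{thm:cfg-props}(3)).

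I expect the main obstacle to be the combined well-foundedness and emptiness arguments, i.e.\ taming the index dimension. Unlike IG rules, global CFG derivations leave the graph unchanged and a right-linear CFG can generate an infinite index language, so neither termination of the search nor decidability of combined emptiness is immediate; both hinge on the decidable (indeed finite-inverse) index behaviour of $\iC$ via Theorems~\ref{thm:cfg-props}(3) and~\ref{thm:cfg-props}(4), and on separating the two dimensions through orthogonality. A secondary subtlety is confirming that inverse-order orthogonality really yields a \emph{unique} combined normal form, i.e.\ that inverse global steps never reopen inverse IG steps; this is precisely the dual of Theorem~\ref{thm:indexcommutable} and is part of lifting the Section~\ref{sec:backwardconfluent} results to the refined analysis.
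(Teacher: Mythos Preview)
Your proposal is correct and follows essentially the same approach as the paper, whose proof of this theorem consists of the single sentence ``proven analogously to Theorem~\ref{thm:sound} and Corollary~\ref{thm:backwardconfluent-inclusion}.'' You have faithfully spelled out what that analogy entails: lifting the intersection lemma (Theorem~\ref{thm:backwardconfluent-intersection}) and the two-case decision procedure of Theorem~\ref{thm:backwardconfluent-inclusion} to the combined derivation relation $(\iDerive{\iG} \cup \iCon{\iC})$, using orthogonality (Theorem~\ref{thm:indexcommutable}) and the CFG analogues in Theorem~\ref{thm:cfg-props} in place of Theorem~\ref{thm:ig-props}. Your identification of combined emptiness and well-foundedness of the index dimension as the delicate points is apt; note that Theorem~\ref{thm:cfg-props}(3) is stated only for a single index nonterminal, a restriction the paper silently inherits, and that your claim that every inverse global step \emph{strictly} decreases $\sum_{e}|\Ind(e)|$ is slightly too strong (the rule $X \to \iBot$ keeps it constant), but the termination argument of Theorem~\ref{thm:cfg-props}(4) you cite handles this correctly.
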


Both theorems are proven analogously to Theorem~\ref{thm:sound} and Corollary~\ref{thm:backwardconfluent-inclusion}. 

\begin{theorem}
  If $\iC$ and $\iG$ are backward confluent then 
\begin{align*}
(\PS{\iGL{\iC}(\IHC)}, \subseteq ) 
\galois{\rGL{\iC}}{\iGL{\iC}}
(\PS{\rGL{\iC}(\IHC)}, \LOrder{\iC} )
\end{align*}
is a Galois connection.
\end{theorem}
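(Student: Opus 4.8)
The plan is to reduce the claim to the three standard ingredients of a Galois connection and to discharge each using backward confluence of $\iC$ together with Theorem~\ref{thm:cfg-props}. I write $\alpha = \rGL{\iC}$ for the lower adjoint and $\gamma = \iGL{\iC}$ for the upper adjoint, both lifted additively to the powersets, i.e. $\alpha(C) = \bigcup_{\iH \in C} \rGL{\iC}(\iH)$ and $\gamma(A) = \bigcup_{\iR \in A} \iGL{\iC}(\iR)$. In direct analogy to Figure~\ref{fig:domains}, the abstract order is $A \LOrder{\iC} B$ iff $\gamma(A) \subseteq \gamma(B)$. It then suffices to show that $\alpha$ and $\gamma$ are monotone, that the composite concretizing an abstraction is inflationary (i.e. $C \subseteq \gamma(\alpha(C))$ for every $C \in \PS{\iGL{\iC}(\IHC)}$), and that the composite abstracting a concretization is deflationary (i.e. $\alpha(\gamma(A)) \LOrder{\iC} A$ for every $A \in \PS{\rGL{\iC}(\IHC)}$); by the standard characterization these three properties are equivalent to the adjunction $\alpha(C) \LOrder{\iC} A \iff C \subseteq \gamma(A)$.

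Monotonicity of $\gamma$ is immediate, since $\LOrder{\iC}$ is \emph{defined} as inclusion of concretizations. Monotonicity of $\alpha$ follows because the additive lift of $\rGL{\iC}$ preserves $\subseteq$, and $\subseteq$ implies $\LOrder{\iC}$ (if $\alpha(C) \subseteq \alpha(D)$ as sets then $\gamma(\alpha(C)) \subseteq \gamma(\alpha(D))$). For the inflationary property I argue pointwise on a concrete $\iH \in C$, whose nonterminal indices already lie in $\I_{T}^{+}$. Backward confluence of $\iC$ yields a unique inverse normal form $\iR$ with $\rGL{\iC}(\iH) = \{\iR\}$, and by Definition~\ref{def:globallanguage} this means $\iR \iCon{\iC}^{*} \iH$. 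Since $\iH$ is concrete, the same definition gives $\iH \in \iGL{\iC}(\iR)$, hence $\iH \in \gamma(\alpha(C))$.

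For the deflationary property I would prove the stronger set inclusion $\alpha(\gamma(A)) \subseteq A$, which trivially entails $\alpha(\gamma(A)) \LOrder{\iC} A$. An arbitrary member of $\alpha(\gamma(A))$ lies in $\rGL{\iC}(\iH)$ for some $\iH \in \iGL{\iC}(\iR)$ with $\iR \in A$; as $\iR$ is fully abstract and $\iR \iCon{\iC}^{*} \iH$, the inverse normal form of $\iH$ must be $\iR$ itself, so the member equals $\iR \in A$.

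The main obstacle is precisely this deflationary step, where everything hinges on backward confluence: it is the uniqueness $|\rGL{\iC}(\iH)| = 1$ that forbids the inverse global language of a concretized abstract element from producing normal forms outside $A$, and it is the only genuinely nontrivial ingredient (for a non-confluent $\iC$ the adjunction would fail here). The one remaining care point is the corner case $\iGL{\iC}(\iR) = \emptyset$, which I dispatch by observing that it contributes nothing to the relevant unions, so both inequalities hold vacuously.
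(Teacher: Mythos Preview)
Your argument is correct and follows the standard verification of a Galois connection via monotonicity plus the two unit/counit inequalities. The paper itself does not spell out a proof of this theorem; it simply states it at the end of Appendix~\ref{app:sec-cfg-analysis}, implicitly relying on the analogous construction for $\iG$ sketched around Figure~\ref{fig:domains} and Section~\ref{sec:backwardconfluent}. Your decomposition is exactly what one would expect the authors to have in mind.

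Two minor remarks. First, backward confluence of $\iC$ is only genuinely needed in the deflationary step, as you correctly identify; the inflationary direction $C \subseteq \gamma(\alpha(C))$ already follows from non-emptiness of $\rGL{\iC}(\iH)$ (Theorem~\ref{thm:cfg-props}(4)) together with the observation that any $\iR \in \rGL{\iC}(\iH)$ satisfies $\iR \iCon{\iC}^{*} \iH$, so uniqueness is not required there. Second, the hypothesis that $\iG$ is backward confluent is not used anywhere in the statement or in your proof, and indeed the displayed Galois connection involves only $\iC$; you are right not to invoke it.
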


\subsection{Grammar for Doubly-Linked List Segments}
\label{app:grammars}

Please find below two additional indexed graph grammars.
The first specifies doubly-linked lists.
The second provides additional rules for balanced trees that have been omitted in Figure~\ref{fig:hrg_balanced_trees} (although they do not affect the overall language).

\bigskip
\scalebox{0.8}{
\tikzdefault{
   
    \node (text) {$\texttt{DLL} ~\rightarrow~$};
    \node[below of = text, node distance = 15mm] (text2) {$\texttt{DLL} ~\rightarrow~$};
    \node[below of = text2, node distance = 15mm] (text3) {$\texttt{DLL} ~\rightarrow~$};
    \node[below of = text3, node distance = 18mm] (text4) {$\texttt{DLL} ~\rightarrow~$};

    \begin{scope}
    \location{v1}{right of = text}{2}
    \location{v2}{right of = v1}{3}
    \location{v3}{right of = v2}{4}
    \location{v4}{right of = v3}{5}
    \nilext{nil}{right of = v4}
    
    \lpointer{v2}{bend left}{e2}{v1}{}{below}
    \lpointer{v2}{bend left}{e3}{v3}{}{above}
    \lpointer{v3}{bend left}{e4}{v2}{}{below}
    \lpointer{v3}{bend left}{e5}{v4}{}{above}
    \end{scope}
    
    \begin{scope}
    \location{v1}{right of = text2}{2}
    \location{v2}{right of = v1}{3}
    \location{v3}{right of = v2}{}
    \nonterminal{L}{right}{v3}{\texttt{DLL}}
    \location{v4}{right of = L}{4}
    \location{v5}{right of = v4}{5}
    \nilext{nil}{right of = v5}

    \lpointer{v2}{bend left}{e2}{v1}{}{below}
    \lpointer{v2}{bend left}{e3}{v3}{}{above}
    

    \tentacle{L}{1}{nil}{bend left=30}
    \tentacleB{L}{2}{v2}{bend left}
    \tentacle{L}{3}{v3}{}
    \tentacle{L}{4}{v4}{}
    \tentacleB{L}{5}{v5}{bend right}
    \end{scope}

    \begin{scope}
    \location{v1}{right of = text3}{2}
    \location{v2}{right of = v1}{3}
    \nonterminal{L}{right}{v2}{\texttt{DLL}}

    \location{v3}{right of = L}{}
    \location{v4}{right of = v3}{4}
    \location{v5}{right of = v4}{5}
    \nilext{nil}{right of = v5}

    \lpointer{v4}{bend left}{e2}{v3}{}{below}
    \lpointer{v4}{bend left}{e3}{v5}{}{above}
    

    \tentacleB{L}{1}{nil}{bend right=25}
    \tentacleB{L}{2}{v1}{bend left}
    \tentacle{L}{3}{v2}{}
    \tentacle{L}{4}{v3}{}
    \tentacle{L}{5}{v4}{bend left}
    \end{scope}

    \begin{scope}
    \location{v1}{right of = text4}{2}
    \location{v2}{right of = v1}{3}
    \nonterminal{A}{right}{v2}{\texttt{DLL}}
    \location{v3}{right of = A}{}
    \location{v4}{right of = v3}{}
    \nonterminal{B}{right}{v4}{\texttt{DLL}}
    \location{v5}{right of = B}{4}
    \location{v6}{below of = v5, node distance=10mm}{5}
    \nilext{nil}{below of = v3, node distance=10mm}

    \tentacleB{A}{1}{nil}{}
    \tentacleB{A}{2}{v1}{bend left}
    \tentacle{A}{3}{v2}{}
    \tentacleB{A}{4}{v3}{}
    \tentacle{A}{5}{v4}{bend left}

    \tentacleB{B}{1}{nil}{bend left}
    \tentacleB{B}{2}{v3}{bend left}
    \tentacle{B}{3}{v4}{}
    \tentacle{B}{4}{v5}{}
    \tentacleB{B}{5}{v6}{}
    \end{scope}

}
}

\bigskip
\tikzdefault{

  \begin{scope}[shift={(0,0)}]
    \node (rule) {$\Pt, \iVar \rightarrow$};

    \location{0}{below of = rule, node distance = 0.7cm}{};
    \location{1}{below left of = 0}{2};
    \location{2}{below right of = 0}{};
    \nonterminal{4}{below}{2}{\B,\iVar};
    \nilext{nil}{below of = 0, node distance = 2.5cm};

    \lpointer{0}{\pEdge,\pColor}{}{nil}{}{left};

    \lpointer{0}{\lEdge,\lColor, bend right, above}{}{1}{}{};
    \lpointer{1}{\pEdge,\pColor, bend right, below}{}{0}{}{};

    \lpointer{0}{\rEdge,\rColor, bend left, above}{}{2}{}{};
    \lpointer{2}{\pEdge,\pColor, bend left, below}{}{0}{}{};

    \tentacleR{4}{2}{2}{};
    \tentacleR{4}{1}{nil}{bend left};
  \end{scope}

  \begin{scope}[shift={(5.0,0)}]
    \node (rule) {$\Pt, \iVar \rightarrow$};

    \location{0}{below of = rule, node distance = 0.7cm}{};
    \location{1}{below left of = 0}{2};
    \location{2}{below right of = 0}{};
    \nonterminal{4}{below}{2}{\B,\iVar};
    \nilext{nil}{below of = 0, node distance = 2.5cm};
    
    \lpointer{0}{\lEdge,\lColor, bend right, above}{}{1}{}{};
    \lpointer{1}{\pEdge,\pColor, bend right, below}{}{0}{}{};

    \lpointer{0}{\rEdge,\rColor, bend left, above}{}{2}{}{};
    \lpointer{2}{\pEdge,\pColor, bend left, below}{}{0}{}{};

    \tentacleR{4}{2}{2}{};
    \tentacleR{4}{1}{nil}{bend left};

    \nonterminal{5}{node distance = 1.5cm, right}{rule}{\Pt, \Z\iVar};
    \tentacle{0}{2}{5}{}; 
    \tentacleR{nil}{1}{5}{bend right=60}; 
  \end{scope}

}

\end{document}